\newcommand{\N}{\ensuremath{\mathbb{N}}\xspace}
\newcommand{\A}{\ensuremath{\mathcal{A}}\xspace}
\newcommand{\aStar}{\ensuremath{\Sigma^{*}}\xspace}
\newcommand{\aN}{\ensuremath{\Sigma^{\omega}}\xspace}
\newcommand{\powerset}[1]{\ensuremath{\mathcal{P}\left( #1 \right)}\xspace}
\renewcommand{\L}{\ensuremath{\mathcal{L}}\xspace}
\newcommand{\calF}{\ensuremath{\mathcal{F}}\xspace}
\newcommand{\calL}{\ensuremath{\mathcal{L}}\xspace}
\newcommand{\Nlanguage}{\ensuremath{\omega}-language\xspace}
\newcommand{\Nrational}{\ensuremath{\omega}-rational\xspace}
\newcommand{\aut}{\ensuremath{\mathcal{A}}\xspace}
\newcommand{\set}[1]{\ensuremath{\left\{ #1  \right\}}\xspace}
\newcommand{\card}[1]{\left| #1 \right|\xspace}
\newcommand{\Lcond}[2]{\ensuremath{\mathcal{L}^{#1}_{ #2}}\xspace} 
\newcommand{\bool}[1]{\ensuremath{\mathcal{B} \left( #1 \right)}\xspace}
\newcommand{\ie}{\emph{i.e.}\@\xspace}
\newcommand{\etc}{\emph{etc.}\@\xspace}
\newcommand{\wrt}{\emph{w.r.t.}\@\xspace}
\newcommand{\cL}{\mathcal{L}}
\newcommand{\cA}{\ensuremath{\mathcal{A}}\xspace}
\newcommand{\FV}[1]{\ensuremath{\mathrm{FV}\left( #1 \right)}\xspace}
\newcommand{\FA}{\ensuremath{\mathrm{FA}}\xspace}
\newcommand{\CFA}{\ensuremath{\mathrm{CFA}}\xspace}
\newcommand{\DFA}{\ensuremath{\mathrm{DFA}}\xspace}
\newcommand{\CDFA}{\ensuremath{\mathrm{CDFA}}\xspace}
\newcommand{\run}{\ensuremath{\mathrm{run}}}
\newcommand{\fin}{\ensuremath{\mathrm{fin}}}
\newcommand{\ninf}{\ensuremath{\mathrm{ninf}}}
\newcommand{\F}{\ensuremath{\mathsf{F}}\xspace}
\newcommand{\FR}{\ensuremath{\mathsf{F^{R}}}\xspace}
\newcommand{\G}{\ensuremath{\mathsf{G}}\xspace}
\newcommand{\GR}{\ensuremath{\mathsf{G^{R}}}\xspace}
\newcommand{\Fs}{\ensuremath{\mathsf{F_{\sigma}}}\xspace}
\newcommand{\FsR}{\ensuremath{\mathsf{F_{\sigma}^{R}}}\xspace}
\newcommand{\Gd}{\ensuremath{\mathsf{G}_{\delta}}\xspace}
\newcommand{\GdR}{\ensuremath{\mathsf{G_{\delta}^{R}}}\xspace}
\newcommand{\RAT}{\ensuremath{\mathsf{RAT}}\xspace}
\newdefinition{definition}{Definition}[section]
\newdefinition{remark}{\normalfont \it Remark}
\newdefinition{example}{\normalfont \it Example}
\newtheorem{theorem}{Theorem}[section]
\newtheorem{proposition}[theorem]{Proposition}
\newtheorem{lemma}[theorem]{Lemma}
\newtheorem{corollary}[theorem]{Corollary}
\newproof{proof}{\textit{Proof}}
\def\ps@pprintTitle{%
  \let\@oddhead\@empty
  \let\@evenhead\@empty
  \def\@oddfoot{\reset@font\hfil\thepage\hfil}
  \let\@evenfoot\@oddfoot
}
\begin{document}

\begin{frontmatter}

\title{Acceptance conditions for \ensuremath{\omega}-languages\\ and the Borel hierarchy\tnoteref{PREV} \tnoteref{ANR}}

\tnotetext[PREV]{A preliminary version of this paper was accepted for presentation at DLT'2012 conference~\cite{dennunzio2012}.}

\tnotetext[ANR]{This work has been partially supported by the French National Research Agency project EMC (ANR-09-BLAN-0164) and by PRIN/MIUR project ``Mathematical aspects and forthcoming applications of automata and formal languages''.}

\author[Paris]{Julien Cervelle}
\ead{julien.cervelle@polytechnique.edu}

\author[Milano]{Alberto Dennunzio\corref{cor}}
\ead{dennunzio@disco.unimib.it}

\author[Nice]{Enrico Formenti\corref{cor}}
\ead{enrico.formenti@unice.fr}

\author[Giessen]{Julien Provillard}
\ead{julien.provillard@i3s.unice.fr}

\cortext[cor]{Corresponding author.}

\address[Paris]{LACL UFR de Sciences et Technologie
Universit\'e Paris-Est Cr\'eteil Val-de-Marne,
61 avenue du G\'en\'eral de Gaulle,
94010 Cr\'eteil cedex,
France}

\address[Milano]{Universit\`a degli studi di Milano-Bicocca,
Dipartimento di Informatica Sistemistica e Comunicazione,  viale
Sarca 336, 20126 Milano (Italy)}
\address[Nice]{Universit\'e Nice-Sophia Antipolis,
Laboratoire I3S, 2000 Route des Colles, 06903 Sophia Antipolis
(France)}

\address[Giessen]{
Justus-Liebig Universit\"at Gie\ss en,
Institut f\"ur Informatik,
Arndtstra\ss e 2,
35392 Gie\ss en
}

\begin{abstract}
This paper investigates acceptance conditions for finite automata recognizing $\omega$-regular languages.
As a first result, we show that, under any acceptance condition that can be defined in the MSO logic, a finite automaton can recognize at
most $\omega$-regular languages. Starting from this, the paper aims at
classifying acceptance conditions according to their expressive power and at finding the exact position of the classes of $\omega$-languages they induced according to the Borel hierarchy.
A new interesting acceptance condition is introduced and fully characterized. A step forward is also made in
the understanding of the expressive power of  $(\fin,=)$.
\end{abstract}

\begin{keyword}
finite automata \sep acceptance conditions \sep $\omega$-regular languages \sep Borel hierarchy
\end{keyword}

\end{frontmatter}

\section{Introduction}

Infinite words arose as a natural extension of finite words. Their first usage (at least to our knowledge) was in symbolic dynamics.
Nowadays, they are perused in several scientific domains for example in formal specification and verification of non-terminating processes (e.g. web-servers, OS daemons, \etc) \cite{kurshan1994,kupferman2004,vardi2007}, game theory~\cite{apt2011,thomas2011}, and so on.
\smallskip

In formal software verification, for instance,  the overall state of the system
is represented by an element of some finite alphabet. Hence runs of the systems can be conveniently represented as
$\omega$-words. Finite automata are often used to model the transitions of the system and their accepted
language represents the set of admissible runs of the system under observation. Acceptance conditions
on finite automata are therefore selectors of admissible runs. Main results and overall exposition about  
$\omega$-languages can be found in \cite{thomas1990,staiger1997,perrin2004}.

Seminal studies about acceptance of infinite words by finite automata (\FA) have been
carried out by Richard B\"uchi while investigating monadic second order theories \cite{Buchi1960}.
A B\"uchi automaton \A  accepts an infinite word $w$ if and only if there exists a run of \A which
passes infinitely often through a set of accepting states while reading $w$. Later on, David Muller 
characterized runs that pass through all elements of a given set of accepting states and visit them infinitely 
often \cite{Muller1963}. Afterwards, more acceptance conditions appeared in a series of papers 
\cite{Hartmanis1967,landweber1969,Staiger1974,Moriya1988,Litovsky1997}. Each of these works was trying to
capture a particular semantic on the runs or to fill some conceptual gap.

Acceptance conditions are selectors for runs of the automaton under consideration. Of course, the set of selected
runs is also deeply influenced by the structural properties of the \FA : deterministic vs. non-deterministic, complete 
vs. non complete (see for instance \cite{Litovsky1997}).

The main purpose of this paper is to classify the expressive power of acceptance conditions in relation also with
the structural properties of the automaton. The first result bounds the research to the realm of $\omega$-rational
languages: the language recognized by any \FA under any acceptance condition and \wrt to any structural property
are $\omega$-rational.

Afterwards, the paper aims at positioning the classes of languages induced by the acceptance conditions found
in literature using the Borel hierarchy as a backbone. Figure~\ref{fig:hierarchy-before} illustrates the 
current state of art whilst Figure \ref{fig:hierarchy-after} summarizes the results provided by the 
present paper. Figure~\ref{fig:hierarchy-after} also illustrates the position of a new natural acceptance condition,
called \emph{\ninf}, 
introduced in the present paper to complete the panorama. This new acceptance condition declares a run of a \FA
successful if it goes through a set of accepting states only a finitely number of times or never. The underlying semantic
is that of a non-terminating process which has to definitively enter a safe state after a finite number (possibly zero) of 
exceptions (unsafe states). If some of the classes induced by \emph{\ninf} coincide with already known classes of the
Borel hierarchy, others (those induced by $(\ninf,\sqcap)$) constitute a diamond strictly below $\FsR$.  

\section{Notations, background and basic definitions}
For any set $A$, $\card{A}$ denotes the cardinality of $A$. 
Given a finite alphabet $\Sigma$, $\aStar$ and $\aN$ respectively denote the set of all finite words and the set of all infinite 
words on $\Sigma$, respectively. As usual,  $\epsilon \in \aStar$ is the empty word. For any pair $u,v \in \aStar$, 
$uv$ is the concatenation of $u$ with $v$.

A \emph{language} is any set $\cL\subseteq\aStar$. For languages $\cL_1,\cL_2$,  denote $\cL_1\cL_2 = \set{uv \in \aStar : u \in \cL_1, v \in \cL_2}$ the concatenation of $\cL_1$ and $\cL_2$. For a language $\cL\subseteq\aStar$, denote $\cL^{0}=\set{\epsilon}$, $\cL^{n+1} = \cL^{n}\cL$ and $\cL^{*} = \bigcup_{n \in \N} \cL^n$ the Kleene star of $\cL$. 
The class of \emph{rational languages} 
is the smallest class of languages containing $\emptyset$, all sets $\set{a}$ (for $a \in \Sigma$) and which is closed by union, concatenation and Kleene star.

An \emph{\Nlanguage} is any subset of $\aN$. For a language $\cL$, the infinite iteration of $\cL$ is the \Nlanguage
$$\cL^{\omega} = \set{x \in \aN : \exists (u_i)_{i \in \N} \in (\cL \smallsetminus \set{\epsilon})^{\N}, x = u_0u_1u_2\dots} \enspace.$$ A \Nlanguage $\cL$ is \emph{\Nrational} if there exist two families $\{\cL_i\}$ and $\{\cL^\prime_i\}$ of rational languages such that $\cL = \bigcup_{i=0}^n \cL^\prime_i {\cL_i}^{\omega}$. Denote by \RAT the set of all \Nrational languages.

A \emph{finite automaton} ($\FA$) is a tuple $(\Sigma,Q,T,q_0,\calF)$ where $\Sigma$ is a finite alphabet, $Q$ a finite set of states, $T \subseteq Q \times \Sigma \times Q$ is the set of \emph{transitions}, $q_0 \in Q$ is the \emph{initial state} and $\calF \subseteq \powerset{Q}$ is the \emph{acceptance table}. A $\FA$ is a \emph{deterministic} finite state automaton ($\DFA$) if $\card{\set{q \in Q : (p,a,q) \in T}} \leq 1$ for all $p \in Q$, $a \in \Sigma$. It is a \emph{complete} finite state automaton ($\CFA$) if $\card{\set{q \in Q : (p,a,q) \in T}} \geq 1$ for all $p \in Q$, $a \in \Sigma$. We write $\CDFA$ for a $\FA$ which is both deterministic and complete. 
An (infinite) \emph{path} in a FA $\aut=(\Sigma,Q,T,q_0,\calF)$ is a sequence $(p_i,x_i,p_{i+1})_{i \in\N}$ such that $(p_i,x_i,p_{i+1})\in T$ for all $i\in \N$. The (infinite) word $(x_i)_{i \in\N}$ is the \emph{label} of the path $p$. A path is said to be \emph{initial} if $p_0=q_0$.
\begin{definition}
Let $\aut = (\Sigma,Q,T,q_0,\calF)$ be a $\FA$ and $p = (p_i, x_i, q_i)_{i \in \N}$ an infinite path in $\aut$. Define the sets
\begin{itemize}
\item
$\run_{\aut}(p) = \{q \in Q : \exists i > 0, p_{i} = q \}$,
\item
$\inf_{\aut}(p) = \{q \in Q : \forall i > 0, \exists j \geq i, p_{j} = q \}$,
\item
$\fin_{\aut}(p) = \run(p) \smallsetminus \inf(p)$,
\item
$\ninf_{\aut}(p) = Q \smallsetminus \inf(p)$
\end{itemize}
as the sets of states \emph{appearing at least one time, infinitely many times,  finitely many times but at least once, and either finitely many times or never} in $p$, respectively. 
\end{definition}

An \emph{acceptance condition} is a subset of all the initial infinite paths. The paths inside such a subset are called \emph{accepting paths}. Let $\aut$ be a $\FA$ and $cond$ be an acceptance condition for $\aut$, a word $w$ is \emph{accepted} by $\aut$ (under condition $cond$) if and only if it is the label of some accepting path.

Let $\sqcap$ be the binary relation over sets such that for all sets $A$ and $B$, $A \sqcap B$ if and only if $A \cap B \neq \emptyset$. 

In the sequel, we will consider acceptance conditions induced by pairs $(c,\textbf{R}) \in \set{\run, \inf, \fin, \ninf} \times \set{\sqcap, \subseteq, =}$. A pair $cond=(c,\textbf{R})$ defines an acceptance condition $cond_{\aut}$ on an automaton $\aut = (\Sigma,Q,T,i,\calF)$ as follows: an initial path $p = (p_i, a_i, p_{i+1})_{i \in \N}$ is  accepting if and only if there exists a set $F \in \calF$ such that  $c_{\aut}(p)~\textbf{R}~F$. We denote by $\Lcond{cond}{\aut}$ the \emph{language accepted by $\aut$ under the acceptance condition $cond_{\aut}$}, \ie, the set of all words accepted by $\aut$ under $cond_{\aut}$.

\begin{definition}
\label{classes_of_language}
For any pair $cond=(c,\textbf{R}) \in \set{\run, \inf, \fin, \ninf} \times \set{\sqcap, \subseteq, =}$ and for any finite alphabet $\Sigma$, define the following sets 
\begin{itemize}
\item
$\FA^{(\Sigma)}(cond) = \set{\Lcond{cond}{\aut}, \text{ \aut is a $\FA$ on $\Sigma$}}$,
\item
$\DFA^{(\Sigma)}(cond) = \set{\Lcond{cond}{\aut},\text{ \aut is a $\DFA$ on $\Sigma$}}$,
\item
$\CFA^{(\Sigma)}(cond) = \set{\Lcond{cond}{\aut},\text{ \aut is a $\CFA$ on $\Sigma$}}$,
\item
$\CDFA^{(\Sigma)}(cond) = \set{\Lcond{cond}{\aut}, \text{ \aut is a $\CDFA$ on $\Sigma$}}$
\end{itemize}
as the classes of languages accepted by $\FA$, $\DFA$, $\CFA$, and $\CDFA$, respectively, under the acceptance condition derived by $cond$.
\end{definition}

Some of the acceptance conditions derived by pairs $(c,\textbf{R})$ have been studied in the literature as summarized in the Table~\ref{known_results}.
\begin{table}[htb]
\begin{center}\small
\begin{tabular}{|c|c|c|c|}
\hline
& $\sqcap$ & $ \subseteq $ & $=$ \\
\hline
$\run$ & Landweber \cite{landweber1969} & Hartmanis\;\&\;Stearns \cite{Hartmanis1967} & Staiger\;\&\;Wagner \cite{Staiger1974}\\
\hline
$\inf$ & B\"uchi \cite{Buchi1960} & Landweber \cite{landweber1969} & Muller \cite{Muller1963}\\
\hline
$\fin$ & Litovski\;\&\;Staiger \cite{Litovsky1997} &  \textsc{this paper} (partially) & \textsc{this paper}\footnotemark[2]\\
\hline
$\ninf$ & \textsc{this paper}\footnotemark[1] & \textsc{this paper}\footnotemark[1] & \textsc{this paper}\\
\hline
\end{tabular}
\end{center}
\caption{Known results on acceptance conditions.}
\label{known_results}
\end{table}
\footnotetext[1]{These conditions have been already investigated in \cite{Moriya1988} but only in the case of complete automata with a unique set of accepting states.}
\footnotetext[2]{Only $\FA$ and $\CFA$ are considered here. For $\DFA$ and $\CDFA$ the question is still open.}

For $\Sigma$ endowed with discrete topology and
$\aN$ with the induced product topology, let $\F$, $\G$, $\Fs$ and $\Gd$ be the collections of all closed sets, open sets, countable unions of closed set and countable intersections of open sets, respectively. For any pair $A,B$ of collections of sets, denote by $\bool{A}$, $A~\Delta~B$, and $A^\mathsf{R}$ the  boolean closure of $A$, the set $\set{U \cap V : U \in A, V \in B}$ and the set $A \cap \RAT$, respectively.
These, indeed, are the lower classes of the Borel hierarchy. For more on this subject we refer the reader
to \cite{Wagner1979} or \cite{perrin2004}, for instance.

\begin{remark}
Rational and $\FsR$ sets are stable by projection.
\end{remark}

From now on, we fix a finite alphabet $\Sigma$ and we omit to mention it in classes of languages.
Figure~\ref{fig:hierarchy-before} illustrates the known hierarchy of languages classes (arrows represents strict inclusions).

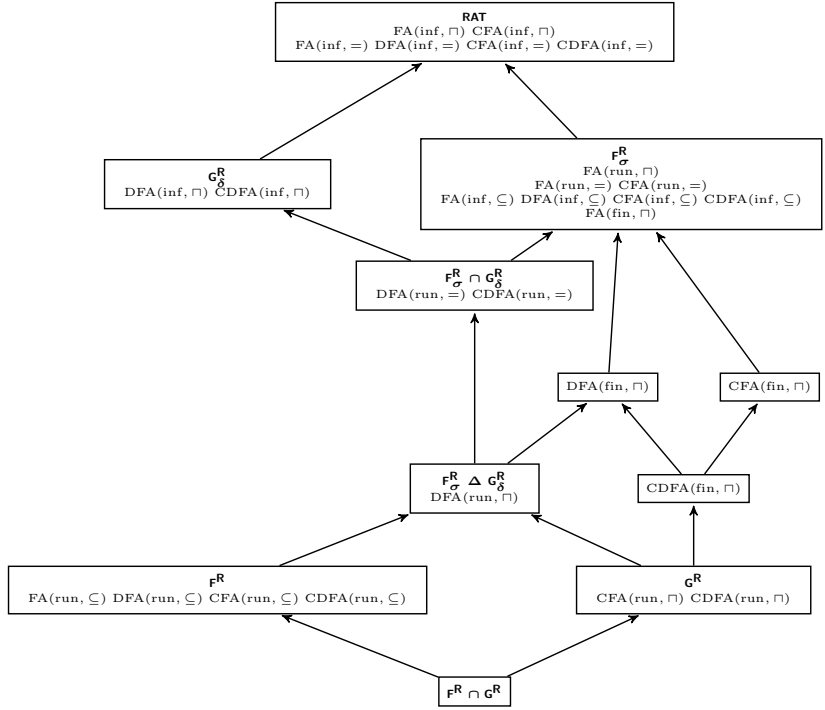
\begin{figure}[htb]
\begin{center}
\scalebox{0.9}{
\begin{tikzpicture}[semithick, shorten >=1pt, >=stealth']
\newcommand{\esph}{4.5}
\newcommand{\espv}{1.5}
\newcommand{\myxshift}{-20}

\tikzstyle{vertex}=[draw, shape=rectangle, font=\tiny]

\node[vertex,xshift=\myxshift] (A) at (0*\esph,0.5*\espv)
{
    $\begin{array}{c}
    \boldsymbol{\RAT} \\
    \FA(\inf,\sqcap)~\CFA(\inf,\sqcap)\\
    \FA(\inf,=)~\DFA(\inf,=)~\CFA(\inf,=)~\CDFA(\inf,=)
    \end{array}$
};
\node[vertex,xshift=\myxshift-28] (B) at (.7*\esph,-1*\espv)
{
    $\begin{array}{c}
    \boldsymbol{\FsR} \\
    \FA(\run,\sqcap)\\
    \FA(\run,=)~\CFA(\run,=)\\
    \FA(\inf,\subseteq)~\DFA(\inf,\subseteq)~\CFA(\inf,\subseteq)~\CDFA(\inf,\subseteq) \\
    \FA(\fin,\sqcap)
    \end{array}$
};
\node[vertex] (C) at (-1*\esph,-\espv) 
{
    $\begin{array}{c}
    \boldsymbol{\GdR} \\
    \DFA(\inf,\sqcap)~\CDFA(\inf,\sqcap)
    \end{array}$
};
\node[vertex,xshift=\myxshift] (D) at (0*\esph,-2*\espv)
{
    $\begin{array}{c}
    \boldsymbol{\FsR \cap \GdR} \\
    \DFA(\run,=)~\CDFA(\run,=)
    \end{array}$
};
\node[vertex] (E) at (-1*\esph,-5*\espv)
{
    $\begin{array}{c}
    \boldsymbol{\FR}\\
    \FA(\run,\subseteq)~\DFA(\run, \subseteq)~\CFA(\run,\subseteq)~\CDFA(\run, \subseteq)
    \end{array}$
};
\node[vertex,xshift=\myxshift-36] (F) at (1*\esph,-5*\espv)
{
    $\begin{array}{c}
    \boldsymbol{\GR}\\
    \CFA(\run,\sqcap)~\CDFA(\run,\sqcap)
    \end{array}$
};
\node[vertex,xshift=\myxshift] (G) at (0*\esph,-6*\espv)
{
    $\boldsymbol{\FR \cap \GR}$
};
\node[vertex,xshift=\myxshift] (H) at (0*\esph,-4*\espv)
{
    $\begin{array}{c}
    \boldsymbol{\FsR~\Delta~\GdR} \\
    \DFA(\run,\sqcap)
    \end{array}$
};
\node[vertex,xshift=\myxshift-36] (I) at (1*\esph,-4*\espv)
{
    $\CDFA(\fin,\sqcap)$
};
\node[vertex,xshift=\myxshift-8] (J) at (0.5*\esph,-3*\espv)
{
    $\DFA(\fin,\sqcap)$
};
\node[vertex,xshift=\myxshift-68] (K) at (1.5*\esph,-3*\espv)
{
    $\CFA(\fin,\sqcap)$
};

\draw[->] (B) -- (A);
\draw[->] (C) -- (A);
\draw[->] (D) -- (B);
\draw[->] (D) -- (C);
\draw[->] (E) -- (H);
\draw[->] (F) -- (H);
\draw[->] (H) -- (D);
\draw[->] (G) -- (E);
\draw[->] (G) -- (F);
\draw[->] (F) -- (I);
\draw[->] (H) -- (J);
\draw[->] (I) -- (J);
\draw[->] (I) -- (K);
\draw[->] (J) -- (B);
\draw[->] (K) -- (B);

\end{tikzpicture}
}
\end{center}
\caption{Currently known relations between classes of $\omega$-languages recognized by \FA
according to the considered acceptance conditions and structural properties like determinism or completeness. Classes of the Borel hierarchy are typeset in bold. Arrows mean strict inclusion. Classes in the same box coincide.}
\label{fig:hierarchy-before}
\end{figure}

\section{A turn into logic}
In~\cite{Buchi1960}, B\"uchi showed that a \Nlanguage is rational if and only if it is definable in the MSO logic. We show that all the  languages recognized by one of the previously introduced acceptance condition are MSO-definable and hence rational. More generally, if an acceptance condition can be defined in the MSO logic, the languages it allows to recognize are rational.

The \emph{monadic second-order logic} (MSO logic) on the alphabet $\Sigma$ is the logical system defined by
\begin{itemize}
\item
first-order variables $x$, $y$, $z$ \dots
\item
second-order variables (of arity 1) $X$, $Y$, $Z$ \dots
\item
unary relations $Q_a$ for $a \in \Sigma$,
\item
and the binary relations $=$, $S$ et $<$.
\end{itemize}

The \emph{atomic formulas} are formulas of the form
\[
x = y,\hspace{0.2cm} X(x),\hspace{0.2cm} S(x,y),\hspace{0.2cm} x < y,\hspace{0.2cm} Q_a(x)
\]
where $x$ and $y$ are first-order variables, $X$ is a second-order variable and $a \in \Sigma$.

The set of \emph{second-order formulas} is the smallest set which contains atomic formulas and such that for all second-order formulas $\phi$ and $\psi$, for all first-order variables $x$, for all second-order variables $X$,
\[
\neg \psi, \hspace{0.2cm}\phi \vee \psi,\hspace{0.2cm} \phi \wedge \psi,\hspace{0.2cm} \phi \rightarrow \psi, \hspace{0.2cm}
\exists x\phi,\hspace{0.2cm} \forall x\phi,\hspace{0.2cm} \exists X \phi,\hspace{0.2cm} \forall X \phi
\]
are second-order formulas.

A variable is \emph{free} in a formula if it is not introduced by a quantifier. If $\phi$ is a formula, we denote by $\FV{\phi}$ the set of free variables which occur in $\phi$. This set is recursively defined by
\begin{itemize}
\item
$\FV{x = y} = \FV{S(x,y)} = \FV{x < y} = \set{x,y}$,
\item
$\FV{X(x)} = \set{X,x}$,
\item
$\FV{Q_a(x)} = \set{x}$,
\item
$\FV{\neg \phi} = \FV{\phi}$,
\item
$\FV{\phi \vee \psi} = \FV{\phi \wedge \psi} = \FV{\phi \rightarrow \psi} = \FV{\phi} \cup \FV{\psi}$,
\item
$\FV{\exists x \phi} = \FV{\forall x \phi} = \FV{\phi} \smallsetminus \set{x}$ and
\item
$\FV{\exists X \phi} = \FV{\forall X \phi} = \FV{\phi} \smallsetminus \set{X}$
\end{itemize}
for all first-order variables $x$ and $y$, for all second-order variable $X$ and for all formulas $\phi$ and $\psi$.

A \emph{closed formula} is a formula without free variables. We usually denote by $\phi(X_1,\dots,X_m,x_1,\dots,x_n)$ a formula $\phi$ where at most the variables $X_1,\dots,X_m$ and $x_1,\dots,x_n$ occur free.

\begin{definition}
Let $w$ be an infinite word on $\Sigma$, $E_1, \dots, E_m \subseteq \N$, $i_1, \dots, i_n \in \N$ and  $\phi(X_1,\dots,X_m,x_1,\dots,x_n)$ a formula. The word $w$ \emph{satisfies} the formula $\phi$, which is denoted by
\[
(w, E_1, \dots, E_m, i_1, \dots, i_n) \models \phi(X_1, \dots ,X_m, x_1, \dots, x_n) \enspace,
\]
if $\phi$ is true when
\begin{itemize}
\item
first-orders variables are interpreted as naturals,
\item
second-orders variables are interpreted as subsets of $\N$,
\item
$\forall a \in \Sigma$, $Q_a$ is interpreted as the set $\set{i \in \N : w_i = a}$,
\item
the unary relations are interpreted as the membership relations to the corresponding sets,
\item
the relations $=$, $S$ et $<$ are interpreted to be the equality, successor and order relations on $\N$, respectively,
\item
$E_j$ is the interpretation of $X_j$ for $j \in [1,m]$,
\item
$i_j$ is the interpretation of $x_j$ for $j \in [1,n]$.
\end{itemize}
\end{definition}

\begin{definition}
Let $\phi$ be a statement, the \emph{language} of $\phi$ is the set
\[
\calL_\phi = \set{w \in \Sigma^\omega : w \models \phi}
\]
of all $\omega$-words satisfying $\phi$.

A \Nlanguage $\calL \subseteq \Sigma^\omega$ is \emph{MSO-definable} if there exists a closed formula $\phi$ such that $\calL = \calL_\phi$.
\end{definition}

\begin{theorem}[B\"uchi \cite{Buchi1960}]
\label{th:buchi}
A \Nlanguage is \Nrational if and only if it MSO-definable.
\end{theorem}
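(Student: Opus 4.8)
The plan is to prove the two implications separately. For the direction from \Nrational to MSO-definable, I would rely on the standard $\omega$-analogue of Kleene's theorem: every \Nrational language is recognized by a B\"uchi automaton, i.e.\ by a \FA under the condition $(\inf,\sqcap)$. Equivalently, this is the identity $\FA(\inf,\sqcap)=\RAT$ recorded in Figure~\ref{fig:hierarchy-before}, which I may assume. Granting it, it suffices to produce, from a B\"uchi automaton $\aut=(\Sigma,Q,T,q_0,\calF)$, a closed formula $\phi$ with $\calL_\phi=\Lcond{(\inf,\sqcap)}{\aut}$, and the formula simply asserts the existence of an accepting run.

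Concretely, writing $Q=\set{q_0,\dots,q_k}$, I would introduce second-order variables $X_0,\dots,X_k$, one per state, with the intended meaning that $X_r$ is the set of positions $i$ at which the run visits $q_r$. The body of $\phi$ is the conjunction of: (i) a \emph{partition} clause saying every position lies in exactly one $X_r$; (ii) an \emph{initialisation} clause placing position $0$ in $X_0$ (the index of $q_0$); (iii) a \emph{transition} clause saying that for all $i$, if $i\in X_r$, $S(i,i{+}1)$ holds with $i{+}1\in X_s$, and $Q_a(i)$ holds, then $(q_r,a,q_s)\in T$; and (iv) an \emph{acceptance} clause which, since $\calF$ is finite, is the disjunction over $F\in\calF$ of the sentence ``for all $i$ there is $j>i$ with $j\in\bigcup_{q_r\in F}X_r$'', expressing that some state of $F$ is visited infinitely often. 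Each clause is an MSO sentence over the $X_r$'s and the $Q_a$'s, so prefixing $\exists X_0\cdots\exists X_k$ yields the required closed formula; verifying $\calL_\phi=\Lcond{(\inf,\sqcap)}{\aut}$ is then routine.

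For the converse, from MSO-definable to \Nrational, I would argue by structural induction on formulas. The technical device is to encode a word $w\in\aN$ together with an interpretation of the free variables $X_1,\dots,X_m,x_1,\dots,x_n$ as a single word over the extended alphabet $\Sigma\times\set{0,1}^{m+n}$, the extra tracks carrying the characteristic sequences of the sets interpreting the $X_j$ and of the singletons $\set{i_j}$; first-order variables are thus treated as second-order variables constrained to be singletons, a constraint itself enforced by a fixed automaton. I would then show, by induction, that for every formula $\phi$ the set of valid encodings of its models is \Nrational. The atomic formulas $x=y$, $S(x,y)$, $x<y$, $X(x)$ and $Q_a(x)$ each define a language accepted by a small explicit automaton, giving the base case. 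For the inductive step, $\phi\vee\psi$ and $\phi\wedge\psi$ correspond to union and intersection, under which \Nrational languages are closed, while $\exists x\,\phi$ and $\exists X\,\phi$ correspond to erasing the associated track, i.e.\ to a projection, which preserves \Nrationality by the Remark above.

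The crux, and the step I expect to be the main obstacle, is negation: $\neg\psi$ forces closure of \Nrational languages under complementation, after which the universal quantifiers follow from the dualities $\forall x\,\phi\equiv\neg\exists x\,\neg\phi$ and $\forall X\,\phi\equiv\neg\exists X\,\neg\phi$. Complementation of B\"uchi-recognizable languages is genuinely hard: unlike the finite-word case one cannot simply determinize, since deterministic B\"uchi automata are strictly weaker. I would invoke B\"uchi's complementation theorem, whose self-contained proof proceeds via a Ramsey argument: the automaton induces a congruence of finite index on $\aStar$, partitioning $\aN$ into finitely many pieces of the form $[u][v]^\omega$; Ramsey's theorem shows each such piece is \emph{homogeneous}, lying entirely inside or entirely outside the recognized language, and each piece is manifestly \Nrational, whence so is the complement. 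With complementation in hand the induction closes, and together with the first direction this establishes the equivalence.
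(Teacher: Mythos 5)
Be aware that the paper contains no proof of this statement: it is quoted as B\"uchi's classical theorem with \cite{Buchi1960} as the authority, and only its statement is used (to derive Proposition~\ref{prop:all_rat}). So there is no internal proof to compare against; judged against the classical argument, your sketch is the standard one and is essentially sound. Your easy direction --- one existentially quantified set variable per state, with partition, initialisation, transition and acceptance clauses --- is precisely the technique the paper itself deploys in the proof of Proposition~\ref{prop:all_rat}, there generalised from $(\inf,\sqcap)$ to all the conditions $(c,\textbf{R})$; and you are right that $\FA(\inf,\sqcap)=\RAT$ (the $\omega$-Kleene theorem) is independent prior knowledge, recorded in Figure~\ref{fig:hierarchy-before}, so assuming it is not circular. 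Your hard direction correctly identifies complementation as the crux and the congruence/Ramsey argument as the right self-contained route. Two points deserve tightening in a full write-up. First, the sets $[u][v]^{\omega}$ do not \emph{partition} $\aN$: they may overlap, and what Ramsey's theorem actually gives is that they \emph{cover} $\aN$ (every $\omega$-word admits a factorisation $u v_1 v_2 \dots$ with all $v_i$ in one congruence class); the homogeneity you invoke --- that each $[u][v]^{\omega}$ lies entirely inside or entirely outside $\cL$ --- is not a consequence of Ramsey but of the definition of the congruence ($u \sim u'$ iff $u$ and $u'$ induce the same state transformations, recorded separately for paths that do and do not visit an accepting state), and it is this saturation lemma that needs its own proof. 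Second, in the induction you should carry the singleton constraint on first-order tracks as an explicit invariant (intersecting with the fixed ``exactly one $1$ on this track'' language at each step), since subformulas are evaluated over encodings that must remain valid; this is routine but is where sloppy set-ups break. Neither point is a genuine gap --- with them repaired, your outline is the textbook proof of the theorem the paper cites.
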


\begin{proposition}
\label{prop:all_rat}
Let $\aut=(\Sigma,Q,T,q_0,\calF)$ be a \FA and $cond$ an acceptance condition derived by a pair $(c,\textbf{R}) \in \set{\run, \inf, \fin, \ninf} \times \set{\sqcap, \subseteq, =}$, then $\Lcond{\aut}{cond}$ is \Nrational.
\end{proposition}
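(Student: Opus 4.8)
The plan is to reduce the statement to B\"uchi's characterisation: by Theorem~\ref{th:buchi} a \Nlanguage is \Nrational exactly when it is MSO-definable, so it suffices to exhibit, for every \FA $\aut$ and every pair $(c,\textbf{R})$, a closed MSO formula $\phi$ with $\calL_\phi = \Lcond{\aut}{cond}$. The guiding idea is that both an initial path and its acceptance are expressible over the positions of the input word, the only structural facts really used being that $Q$ and $\calF$ are finite.

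First I would encode a run of $\aut$ over $w$ by a family of second-order variables $(X_q)_{q \in Q}$, with the intended meaning $X_q = \set{i \in \N : p_i = q}$. I would write a formula $\mathrm{Valid}$ stating that these sets describe a genuine initial path: (i) they partition the positions, \ie $\forall x \big(\bigvee_{q} X_q(x)\big)$ together with $\forall x \bigwedge_{q\neq q'} \neg\big(X_q(x)\wedge X_{q'}(x)\big)$; (ii) the least position lies in $X_{q_0}$, expressed with the definable predicate $\mathrm{min}(x) \equiv \neg\exists z\,(z<x)$ as $\forall x\,(\mathrm{min}(x) \rightarrow X_{q_0}(x))$; and (iii) transitions are respected, $\forall x\forall y\big(S(x,y) \rightarrow \bigvee_{(q,a,q')\in T}(X_q(x)\wedge Q_a(x)\wedge X_{q'}(y))\big)$. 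All three conjuncts are finite formulas because $T$, $Q$ and $\Sigma$ are finite.

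Next, for each state $q$ I would express membership of $q$ in the relevant set of the path by a formula $\psi^q_c$ in the $(X_q)$: ``$q$ occurs infinitely often'' is $\psi^q_{\inf} \equiv \forall x\,\exists y\,(x<y \wedge X_q(y))$; ``$q$ occurs at some positive position'' is $\psi^q_{\run}\equiv \exists x\,(X_q(x) \wedge \exists z\,(z<x))$, the guard $\exists z\,(z<x)$ matching the index convention $i>0$ in the definition of $\run$; then $\psi^q_{\fin}\equiv \psi^q_{\run}\wedge\neg\psi^q_{\inf}$ and $\psi^q_{\ninf}\equiv\neg\psi^q_{\inf}$. The acceptance relation against a fixed $F \subseteq Q$ becomes a Boolean combination of these: $c(p)\sqcap F$ is $\bigvee_{q\in F}\psi^q_c$, $c(p)\subseteq F$ is $\bigwedge_{q\notin F}\neg\psi^q_c$, and $c(p)=F$ is $\bigwedge_{q\in F}\psi^q_c \wedge \bigwedge_{q\notin F}\neg\psi^q_c$; call the result $\mathrm{Cond}^{(c,\textbf{R})}_F$. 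Since $\calF \subseteq \powerset{Q}$ is finite, the existential search ``$\exists F\in\calF$'' is a finite disjunction.

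Putting these together I would set
$$\phi \;=\; \exists X_{q_1}\cdots\exists X_{q_{\card{Q}}}\Big[\,\mathrm{Valid}\,\wedge\,\bigvee_{F\in\calF}\mathrm{Cond}^{(c,\textbf{R})}_F\,\Big] \enspace,$$
which is closed, and then argue that $w\models\phi$ iff some initial path labelled by $w$ satisfies $c_\aut(p)\,\textbf{R}\,F$ for some $F\in\calF$, \ie iff $w\in\Lcond{\aut}{cond}$; Theorem~\ref{th:buchi} then delivers \Nrationality. I do not expect a genuine obstacle: the construction is routine and uniform in $(c,\textbf{R})$. The only points demanding care are the index conventions (the $i>0$ in $\run$, hence the $\exists z\,(z<x)$ guard, and the role of the initial state $q_0$) and the verification that each clause remains a \emph{finite} formula, which is precisely where the finiteness of $Q$, $\Sigma$ and $\calF$ is indispensable.
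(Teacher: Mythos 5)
Your proposal is correct and follows essentially the same route as the paper: encode the states of an initial path by existentially quantified second-order variables, express membership of each state in $\run$, $\inf$, $\fin$, $\ninf$ by the same formulas (including the $\exists z\,(z<x)$ guard for the $i>0$ convention), combine them by finite Boolean operations over $\calF$ according to $\textbf{R}$, and conclude by Theorem~\ref{th:buchi}. The only differences are cosmetic (you add an explicit coverage conjunct, which the paper leaves implicit in the transition formula, and you phrase initiality via $<$ rather than the successor relation $S$).
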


\begin{proof}
We prove that the language $\Lcond{\aut}{cond}$ is MSO-definable and we conclude by using Theorem~\ref{th:buchi}. We construct a formula $\phi$ which encodes the automaton on one hand and the acceptance condition on the other hand. Let $n = \card{Q}$ and let $q_0, \dots, q_{n-1}$ denote the elements in $Q$.  The formula describing the language is\footnote{By convention $\displaystyle{\bigvee_{i \in \emptyset} \phi_i = \mathrm{\textbf{false}}}$ and $\displaystyle{\bigwedge_{i \in \emptyset} \phi_i = \mathrm{\textbf{true}}}$.}
\begin{align*}
\phi =  \exists X_{q_0} & \dots \exists X_{q_{n-1}}\\
& \bigg( \bigwedge_{p,q \in Q, p \neq q} \neg \exists x\; \Big( X_p(x) \wedge X_q(x)  \Big) \bigg) \wedge \\
& \bigg( \forall x \forall y\; S(x,y) \rightarrow \bigvee_{(p,a,q) \in T} \Big( X_p(x) \wedge Q_a(x) \wedge X_q(y) \Big) \bigg) \wedge \\
& \bigg( \exists x\; \Big( \neg \exists y\; S(y,x) \Big) \wedge X_{q_0}(x) \bigg) \wedge \mathrm{COND}(X_{q_0}, \dots, X_{q_{n-1}}) \enspace.
\end{align*}

The first three lines encode a path in $\cA$.
For such a path $(p_i,a_i,p_{i+1})_{i \in \N}$,  the variable~$X_{q}$ will represent the set $\set{i \in \N : p_i = q}$.
The formula
\[
\bigwedge_{p,q \in Q, p \neq q} \neg \exists x\; \Big( X_p(x) \wedge X_q(x)  \Big)
\]
enforces the sets $X_q$ to be pairwise disjoint, whereas the formula
\[
\forall x \forall y\; S(x,y) \rightarrow \bigvee_{(p,a,q) \in T} \Big( X_p(x) \wedge Q_a(x) \wedge X_q(y) \Big)
\]
indicates that a transition $(p,a,q) \in T$ has to be used to go from a state $p$ to a state $q$ by reading a letter $a$. The formula $\exists x\; \Big( \neg \exists y\; S(y,x) \Big) \wedge X_{q_0}(x)$ enforces the path to be initial because 0 is the only integer which does not have a predecessor and it has to start in the state $q_0$ in this case. Finally, the formula $\mathrm{COND}(X_{q_0}, \dots, X_{q_{n-1}})$ encodes the fact that the path is accepting according to the considered acceptance condition and its expression depends on the pair $(c,\textbf{R})$ as we will see in the following. Let $C(X)$ be the formula defined by
\[
C(X) :=
\begin{cases}
\exists x\; \Big( \exists y\; S(y,x) \Big) \wedge X(x) & \text{if $c =\run$} \\
\forall x \exists y\; (x < y) \wedge X(y) & \text{if $c = \inf$} \\
\begin{aligned}
\Big( \exists x\; \Big( \exists y\; S(y,x) \Big) \wedge X(x) \Big) \wedge \phantom{vvvv} \\ \Big( \neg \forall x \exists y\; (x < y) \wedge X(y) \Big)
\end{aligned}
 & \text{if $c = \fin$} \\
\neg \forall x \exists y\; (x < y) \wedge X(y) & \text{if $c = \ninf$} \\
\end{cases}
\enspace.
\]
For all $q \in Q$, the formula $C(X_q)$ would be true if and only if the previously encoded path $p$ verifies $q \in c_\aut(p)$.

We can now write the formula $\mathrm{COND}(X_{q_0}, \dots, X_{q_{n-1}})$ depending on $\textbf{R}$ by
\begin{itemize}
\item
for the relation $\sqcap$,
\[
\bigvee_{F \in \calF} \bigvee_{q \in F} C(X_q) \enspace,
\]
\item
for the relation $\subseteq$,
\[
\bigvee_{F \in \calF} \bigwedge_{q \in Q\smallsetminus F} \neg C(X_q) \enspace,
\]
\item
for the relation $=$,
\[
\bigvee_{F \in \calF} \left( \bigwedge_{q \in F} C(X_q) \wedge \bigwedge_{q \in Q\smallsetminus F} \neg C(X_q) \right) \enspace.
\] 
\end{itemize}
\qed
\end{proof}

Using the same proof, we can show that any acceptance condition which is MSO-definable only induces rational languages. We have just to change the formula $\mathrm{COND}(X_{q_0}, \dots, X_{q_{n-1}})$ to fit to the acceptance condition.

\section{The acceptance conditions $\mathbb{A}$ and \textbf{$\mathbb{A}'$} and the Borel hierarchy}

In \cite{Moriya1988},  Moriya and Yamasaki introduced two more acceptance conditions, namely $\mathbb{A}$ and
$\mathbb{A}'$, and they compared them to the Borel hierarchy for the case of \CFA and \CDFA having a
unique set of accepting states. In this section, those results are generalized to \FA and \DFA and to any
set of sets of accepting states.

\begin{definition}
Given a $\FA$ $\aut=(\Sigma,Q,T,q_0,\calF)$, the
acceptance condition $\mathbb{A}$ (resp. $\mathbb{A'}$) on $\aut$  is defined as follows: an initial path $p$ is accepting under $\mathbb{A}$ (resp. $\mathbb{A}'$) if and only if there exists a set 
$F \in \calF$ such that $F \subseteq \run_{\aut}(p)$ (resp. $F \not\subseteq \run_{\aut}(p)$).
\end{definition}

We denote by $\Lcond{\mathbb{A}}{\aut}$ (resp. $\Lcond{\mathbb{A}'}{\aut}$) the language accepted by an automaton $\aut$ under the acceptance condition $\mathbb{A}$ (resp. $\mathbb{A}$'). Similar notation as Definition~\ref{classes_of_language} are used for classes of languages.

\begin{lemma}\label{aap}\mbox{\phantom{Q}}\\[-4ex]
\begin{enumerate}
\item $\FA(\mathbb{A})\subseteq \FA(\run, \sqcap)$\enspace, 
\item $\DFA(\mathbb{A})\subseteq \DFA(\run, \sqcap)$\enspace, 
\item $\CFA(\mathbb{A})\subseteq \CFA(\run, \sqcap)$\enspace,
\item $\CDFA(\mathbb{A})\subseteq \CDFA(\run, \sqcap)$\enspace.
\end{enumerate}
\end{lemma}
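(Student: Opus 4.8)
The plan is to give a single product construction that settles all four inclusions at once, by turning the \emph{universal} flavour of $\mathbb{A}$ (every state of some $F\in\calF$ must be visited) into the \emph{existential} flavour of $(\run,\sqcap)$ (some designated state must be hit). The idea is to augment the state space of $\aut$ with a monotone ``memory'' recording exactly which states have been seen so far along the current path; once the whole of some $F$ has been recorded, acceptance reduces to visiting a single prescribed set of states, which is precisely what $(\run,\sqcap)$ tests.

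Concretely, given $\aut=(\Sigma,Q,T,q_0,\calF)$ I would define $\aut'=(\Sigma,Q\times\powerset{Q},T',(q_0,\emptyset),\calF')$ with
\[
T'=\set{\big((p,S),a,(q,S\cup\set{q})\big) : (p,a,q)\in T},
\]
and acceptance table $\calF'=\set{G_F : F\in\calF}$, where $G_F=\set{(q,S)\in Q\times\powerset{Q} : F\subseteq S}$. Along any initial path $p=(p_i,a_i,p_{i+1})_{i\in\N}$ of $\aut$ there is exactly one initial path $p'$ of $\aut'$ carrying the same label, namely the one whose memory component satisfies $S_0=\emptyset$ and $S_{i+1}=S_i\cup\set{p_{i+1}}$; projecting on the first coordinate is its inverse, so $p\mapsto p'$ is a label-preserving bijection between initial paths.

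The key observation is that $(S_i)$ is nondecreasing in the finite lattice $\powerset{Q}$ and hence stabilises to $\bigcup_i S_i=\set{p_j : j>0}=\run_{\aut}(p)$. From this I obtain the equivalence driving the proof: for each $F\in\calF$ there is a position $i>0$ with $(p_i,S_i)\in G_F$ (\ie $F\subseteq S_i$) if and only if $F\subseteq\run_{\aut}(p)$. The direction ``$\Rightarrow$'' is immediate from $S_i\subseteq\run_{\aut}(p)$, and ``$\Leftarrow$'' follows by evaluating at the stabilisation index. Consequently $\run_{\aut'}(p')\cap G_F\neq\emptyset$ for some $F\in\calF$ holds exactly when $p$ is $\mathbb{A}$-accepting, so $p'$ is $(\run,\sqcap)$-accepting iff $p$ is $\mathbb{A}$-accepting, which yields $\Lcond{\mathbb{A}}{\aut}=\Lcond{(\run,\sqcap)}{\aut'}$. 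The degenerate case $F=\emptyset$, where $G_\emptyset$ is all of $Q\times\powerset{Q}$, needs a line of checking but causes no trouble, since every infinite path visits its position-$1$ state.

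Finally, because the map $q\mapsto(q,S\cup\set{q})$ is injective, the number of $a$-successors of $(p,S)$ in $\aut'$ equals the number of $a$-successors of $p$ in $\aut$; hence $\aut'$ is deterministic (resp. complete) exactly when $\aut$ is, and the single construction delivers $\FA(\mathbb{A})\subseteq\FA(\run,\sqcap)$, $\DFA(\mathbb{A})\subseteq\DFA(\run,\sqcap)$, $\CFA(\mathbb{A})\subseteq\CFA(\run,\sqcap)$ and $\CDFA(\mathbb{A})\subseteq\CDFA(\run,\sqcap)$ simultaneously. I expect the main conceptual obstacle to be exactly the replacement of the containment test $F\subseteq\run(p)$ by a one-state hitting test: the subset-tracking memory together with monotonicity is what makes ``all of $F$ eventually seen'' coincide with ``a single $G_F$-state is ever seen'', and the only delicate bookkeeping is the ``$\exists i>0$'' clause in the definition of $\run$ and the empty-set edge case noted above.
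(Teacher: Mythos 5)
Your proposal is correct and is essentially the paper's own proof: the same product construction $Q\times\powerset{Q}$ with monotone subset-memory, initial state $(q_0,\emptyset)$, transitions $((p,S),a,(q,S\cup\set{q}))$, and the observation that determinism and completeness are preserved. The only difference is cosmetic: the paper takes $\calF'$ to be the singletons $\set{(q,S)}$ with $F\subseteq S$ for some $F\in\calF$, whereas you group them into one set $G_F$ per $F$, which defines the same $(\run,\sqcap)$-accepted language.
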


\begin{proof}
We are going to prove that for any $\FA$ $\aut=(\Sigma,Q,T,q_0,\calF)$, there exists an automaton $\aut'$ such that $\Lcond{\mathbb{A}}{\aut} = \Lcond{(\run,\sqcap)}{\aut'}$ and $\aut'$ is deterministic (resp. complete) if $\aut$ is deterministic (resp. complete).

Let $\aut' = (\Sigma,Q \times \powerset{Q},T',(q_0, \emptyset),\calF')$ where
\[
T' = \set{((p,S),a,(q, S \cup \set{q})) : (p, a , q) \in T, S \in \powerset{Q}}
\]
and
\[
\calF' = \set{\set{(q, S)} : q \in Q, S \in \powerset{Q}, \exists F \in \calF, F \subseteq S} \enspace.
\]
Clearly, $\aut'$ is deterministic (resp. complete) if \A is deterministic (resp. complete).

We now show that $\Lcond{\mathbb{A}}{\aut}\subseteq \Lcond{(\run,\sqcap)}{\aut'}$. Let $x \in \Lcond{\mathbb{A}}{\aut}$. Then, there exist an initial path $p = (p_i, x_i, p_{i+1})_{i \in \N}$ in $\aut$ and a set  $F \in \calF$ such that $F \subseteq \run_{\aut}(p)$. So, the sequence 
\[p' = ((p_i, \bigcup_{0 < j \leq i } \{p_j\}), x_i, (p_{i+1},\bigcup_{0 < j \leq i+1} \{p_j\}))_{i \in \N}
\]
is an initial path in $\aut'$ with label $x$. Since $Q$ is finite,  $\run_{\aut}(p)=\bigcup_{0 < j \leq n} \{p_j\}$ for some $n\in\N$, it holds that $\set{(p_n, \run_{\aut}(p))}\in \run_{\aut'}(p')$. 
Let $F'=\set{(p_n, \run_{\aut}(p))} \in \calF'$, then $\run_{\aut'}(p')\cap F'\neq\emptyset$ and $x \in \Lcond{(\run,\sqcap)}{\aut'}$.

To prove $\Lcond{(\run,\sqcap)}{\aut'}\subseteq\Lcond{\mathbb{A}}{\aut}$,  let $x \in\Lcond{(\run,\sqcap)}{\aut'}$. Then, there exists an initial path $p' = ((p_i,S_i), x_i, (p_{i+1},S_{i+1}))_{i \in \N}$ in $\aut'$ and a set $F'=\{(q,S)\}\in\calF'$ such that $\run_{\aut'}(p')\cap F'\neq\emptyset$, and, so, there exists a set $F\in\calF$ with $F \subseteq S$ and $S=\bigcup_{0<j\leq k} \{p_j\}$ for some $k\in\N$. Therefore, $p = (p_i,x_i,q_i)_{i \in \N}$ is an initial path with label $x$ in $\aut$ such that $F \subseteq S\subseteq \run(p)$. Hence, $x \in \Lcond{\mathbb{A}}{\aut}$.\qed
\end{proof}

\begin{lemma}\label{apa}\mbox{\phantom{Q}}\\[-4ex]
\begin{enumerate}
\item
$\FA(\run, \sqcap)\subseteq \FA(\mathbb{A})$\enspace, 
\item
$\DFA(\run, \sqcap)\subseteq \DFA(\mathbb{A})$\enspace, 
\item
$\CFA(\run, \sqcap)\subseteq \CFA(\mathbb{A})$\enspace,
\item
$\CDFA(\run, \sqcap)\subseteq \CDFA(\mathbb{A})$\enspace.
\end{enumerate}
\end{lemma}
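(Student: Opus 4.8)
The plan is to mirror Lemma~\ref{aap} but in the opposite direction, this time keeping the underlying automaton untouched and only adjusting the acceptance table. The essential remark is that the $(\run,\sqcap)$ condition, which asks $\run_{\aut}(p)$ to meet \emph{some} $F \in \calF$, is the same as asking $\run_{\aut}(p)$ to contain \emph{some single state} lying in $\bigcup_{F \in \calF} F$; and ``containing a singleton'' is exactly what the $\mathbb{A}$ condition tests.

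First, given a $\FA$ $\aut=(\Sigma,Q,T,q_0,\calF)$, I would build $\aut'=(\Sigma,Q,T,q_0,\calF')$ with the same alphabet, states, transitions and initial state, setting
\[
\calF' = \set{\set{q} : q \in Q, \exists F \in \calF, q \in F} \enspace.
\]
Because the structure is literally unchanged, the initial infinite paths of $\aut$ and $\aut'$ coincide, and $\run_{\aut}(p)=\run_{\aut'}(p)$ for every such path $p$. In particular $\aut'$ is deterministic (resp. complete) precisely when $\aut$ is, which is what lets a single construction settle all four inclusions at once.

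Next I would verify that the two automata accept the same language by comparing the acceptance predicates on a common path $p$. The word $x$ labelling $p$ lies in $\Lcond{(\run,\sqcap)}{\aut}$ iff there is $F \in \calF$ with $\run_{\aut}(p)\cap F\neq\emptyset$, that is, iff some $q\in\bigcup_{F\in\calF}F$ belongs to $\run_{\aut}(p)$. On the other hand $x\in\Lcond{\mathbb{A}}{\aut'}$ iff there is $F'\in\calF'$ with $F'\subseteq\run_{\aut'}(p)$; since every member of $\calF'$ is a singleton $\set{q}$ with $q\in\bigcup_{F\in\calF}F$, this says exactly that some such $q$ belongs to $\run_{\aut'}(p)=\run_{\aut}(p)$. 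The two conditions coincide, whence $\Lcond{(\run,\sqcap)}{\aut}=\Lcond{\mathbb{A}}{\aut'}$.

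There is no genuine obstacle here: unlike Lemma~\ref{aap}, no product with $\powerset{Q}$ is needed, because in that direction one must simulate ``visit \emph{all} of $F$'' and so has to record the visited states, whereas the present direction only weakens ``meet $F$'' into ``contain a singleton'', which the table $\calF'$ achieves directly. The sole point requiring a word of care is to record that the four structural variants follow simultaneously, since $\aut'$ inherits determinism and completeness verbatim from $\aut$.
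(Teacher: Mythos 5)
Your proof is correct and follows essentially the same route as the paper: the identical construction $\aut'=(\Sigma,Q,T,q_0,\calF')$ with $\calF' = \set{\set{q} : q \in Q, \exists F \in \calF, q \in F}$, together with the observation that meeting some $F\in\calF$ is equivalent to containing some singleton from $\calF'$. Your explicit remark that the paths of $\aut$ and $\aut'$ coincide (since only the acceptance table changes) is a slightly more careful statement of the equivalence the paper asserts in one breath, but it is the same argument.
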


\begin{proof}
We are going to show that for any $\FA$ $\aut=(\Sigma,Q,T,q_0,\calF)$ there exists an $\FA$ $\aut'$ such that $ \Lcond{(\run,\sqcap)}{\aut}=\Lcond{\mathbb{A}}{\aut'} $ and $\aut'$ is deterministic (resp. complete) if $\aut$ is deterministic (resp. complete). 

Let $\aut' = (\Sigma,Q,T,q_0,\calF')$ where $\calF' = \set{\set{q} : q \in Q, \exists F \in \calF, q \in F}$. Clearly, $\aut'$ is deterministic (resp. complete) if $\aut$ is deterministic (resp. complete). Moreover, $x \in \Lcond{(\run,\sqcap)}{\aut}$ if and only if there exist an initial path $p$ in $\aut$ with label $x$ and a set $F \in \calF$ such that $\run_{\aut}(p)\cap F\neq\emptyset$, or, equivalently,  there exist an initial path $p$ in $\aut'$ with label $x$ and a set $F' \in \calF'$ such that $F' \subseteq \run_{\aut'}(p)$, \ie, if and only if $x \in \Lcond{\mathbb{A}}{\aut'}$.\qed
\end{proof}

\begin{lemma}\label{aprime}\mbox{\phantom{Q}}\\[-4ex]
\begin{enumerate}
\item
$\FA(\mathbb{A'})\subseteq \FA(\run,\subseteq)$\enspace, 
\item
$\DFA(\mathbb{A'})\subseteq \DFA(\run, \subseteq)$\enspace,
\item
$\CFA(\mathbb{A'})\subseteq \CFA(\run, \subseteq)$\enspace, 
\item
$\CDFA(\mathbb{A'})\subseteq \CDFA(\run, \subseteq)$\enspace.
\end{enumerate}
\end{lemma}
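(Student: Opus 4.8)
The plan is to establish all four inclusions at once by a single construction that leaves the underlying transition graph untouched and only rewrites the acceptance table; in this way $\aut'$ is automatically deterministic (resp.\ complete) whenever $\aut$ is, exactly as in the construction of Lemma~\ref{apa}, and no separate argument per item is needed. Given a \FA $\aut=(\Sigma,Q,T,q_0,\calF)$, I would set $U=\bigcup_{F\in\calF}F$ and define $\aut'=(\Sigma,Q,T,q_0,\calF')$ with
\[
\calF'=\set{Q\smallsetminus\set{q} : q\in U}\enspace.
\]
Since $\aut$ and $\aut'$ share $T$ and $q_0$, they have the same initial paths and $\run_{\aut'}(p)=\run_{\aut}(p)$ for every such path $p$, so it suffices to show $\Lcond{\mathbb{A}'}{\aut}=\Lcond{(\run,\subseteq)}{\aut'}$.

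The key observation I would exploit is that the $\mathbb{A}'$ condition collapses to a single non-containment. For any initial path $p$,
\[
\bigl(\exists F\in\calF,\ F\not\subseteq\run_{\aut}(p)\bigr)
\iff
\bigl(\exists q\in U,\ q\notin\run_{\aut}(p)\bigr)
\iff
U\not\subseteq\run_{\aut}(p)\enspace,
\]
where the first equivalence is merely the expansion of $U=\bigcup_{F\in\calF}F$ together with the fact that $F\not\subseteq\run_{\aut}(p)$ means precisely that some state of $F$ is missing from $\run_{\aut}(p)$. On the $(\run,\subseteq)$ side, a co-singleton satisfies $\run_{\aut}(p)\subseteq Q\smallsetminus\set{q}$ exactly when $q\notin\run_{\aut}(p)$, so the condition $\exists F'\in\calF',\ \run_{\aut'}(p)\subseteq F'$ reads $\exists q\in U,\ q\notin\run_{\aut}(p)$, i.e.\ again $U\not\subseteq\run_{\aut}(p)$. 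Chaining these, an initial path is $\mathbb{A}'$-accepting in $\aut$ if and only if it is $(\run,\subseteq)$-accepting in $\aut'$; since a word is accepted exactly when it labels some accepting initial path, the two languages coincide, which gives all four inclusions.

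Since the argument is essentially a fusion of two existential quantifiers (over $F\in\calF$ and over $q\in F$) into a single existential over $q\in U$, I do not expect a genuine obstacle; the only points that will need care are the degenerate cases. If $\calF=\emptyset$, or if every $F\in\calF$ is empty, then $U=\emptyset$, hence $\calF'=\emptyset$ and both languages are empty, which is consistent because an empty $F$ can never witness $F\not\subseteq\run_{\aut}(p)$. It is worth noting that, in contrast with Lemma~\ref{aap}, this direction requires no enrichment of the state space by $\powerset{Q}$: tracking the run is unnecessary because $(\run,\subseteq)$ already refers to the full run, so a plain table of co-singletons suffices.
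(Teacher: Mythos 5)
Your proof is correct, and it takes a genuinely different and more economical route than the paper's. You exploit the collapse $\bigl(\exists F\in\calF,\ F\not\subseteq\run_{\aut}(p)\bigr)\iff U\not\subseteq\run_{\aut}(p)$ --- valid because non-containment is witnessed pointwise: a missing state $q\in U$ lies in some $F\in\calF$, and conversely any non-contained $F$ yields a missing $q\in U$ --- and then express ``$q$ is never visited'' as $\run_{\aut}(p)\subseteq Q\smallsetminus\set{q}$, so that only the acceptance table changes. Determinism and completeness are then trivially inherited and all four items follow at once, exactly in the spirit of the paper's own proof of Lemma~\ref{apa} for the dual pair $(\run,\sqcap)$ versus $\mathbb{A}$; your argument makes that symmetry explicit. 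The paper instead builds a product automaton with state set $(Q\times\powerset{Q})\cup\set{\bot}$ that tracks the set of visited states, routes a path into the sink $\bot$ as soon as every $F\in\calF$ is contained in the visited set, and takes $\calF'=\powerset{Q\times\powerset{Q}}$, so that acceptance under $(\run,\subseteq)$ amounts to avoiding $\bot$. That construction pays an exponential blow-up of the state space and implicitly needs finiteness of $Q$ and $\calF$ to pass from ``some $F$ is uncovered at every finite stage'' to ``some $F$ is uncovered by the whole run'' (the visited set stabilizes); your construction avoids both the blow-up and this limit argument. Your handling of the degenerate cases ($\calF=\emptyset$ or all members of $\calF$ empty, giving $U=\emptyset$, $\calF'=\emptyset$ and the empty language on both sides) is exactly the care needed, so the proof stands as written.
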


\begin{proof}
We are going to show that for any $\FA$ $\aut=(\Sigma,Q,T,q_0,\calF)$  there exists an automaton $\aut'$ such that $\Lcond{\mathbb{A'}}{\aut} = \Lcond{(\run, \subseteq)}{\aut'}$ and $\aut'$ is deterministic (resp. complete) if $\aut$ is deterministic (resp. complete).

Let $\aut' = (\Sigma, Q', T', (q_0,\emptyset), \calF')$ where $Q'=(Q \times \powerset{Q}) \cup \set{\bot}$, $\calF'=\powerset{Q \times \powerset{Q}}$, and
\begin{align*}
T' & =  \left\{((p,S),a,(q, S \cup \set{q})) : (p, a , q) \in T, S \in \powerset{Q}, \exists F \in \calF, F \not\subseteq S \cup \set{q}\right\} \\
 & \bigcup  \set{((p,S),a,\bot) : S \in \powerset{Q}, \exists q \in Q, (p, a , q) \in T, \forall F \in \calF, F \subseteq S \cup \set{q}} \\
 & \bigcup  \set{(\bot, a, \bot) : a \in \Sigma} \enspace.
\end{align*}
Then, 
$\A'$ is deterministic (resp. complete) if \A is deterministic (resp. complete). The state $\bot$ acts as a sink for $\A'$ and it is reached as soon as it is no more possible to not contain a set in the acceptance table for the corresponding path in \A.
Indeed, 
$x \in \Lcond{\mathbb{A}'}{\aut}$ if and only if there exist an initial path $p$ in $\aut$ with label $x$ and a set $F \in \calF$ such that $F \not\subseteq \run_{\aut}(p)$ if and only if there exists an initial path $p'$ in $\aut'$ with label $x$ such that $p'_n \neq \bot$ for all $n\in\N$, \ie, if and only if $x \in \Lcond{(\run, \subseteq)}{\aut'}$.
\qed
\end{proof}

\begin{lemma}\label{prop_run_inclusion}\mbox{\phantom{Q}}\\[-4ex]
\begin{enumerate}
\item
$\FA(\run, \subseteq)\subseteq \FA(\mathbb{A'})$\enspace,
\item
$\DFA(\run, \subseteq)\subseteq \DFA(\mathbb{A'})$\enspace,
\item
$\CFA(\run, \subseteq)\subseteq \CFA(\mathbb{A'})$\enspace,
\item
$\CDFA(\run, \subseteq)\subseteq \CDFA(\mathbb{A'})$\enspace.
\end{enumerate}
\end{lemma}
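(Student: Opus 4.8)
The plan is to dualise the sink construction used for Lemma~\ref{aprime}. Given a \FA $\aut=(\Sigma,Q,T,q_0,\calF)$, I would build an automaton $\aut'$ that records in a second component the set of states already visited and that jumps to an absorbing sink state $\top$ exactly when this accumulated set can no longer be contained in any member of $\calF$. Concretely, take $\aut'=(\Sigma,Q',T',(q_0,\emptyset),\calF')$ with $Q'=(Q\times\powerset{Q})\cup\set{\top}$, $\calF'=\set{\set{\top}}$, and
\begin{align*}
T' ={}& \set{((p,S),a,(q,S\cup\set{q})) : (p,a,q)\in T,\ \exists F\in\calF,\ S\cup\set{q}\subseteq F} \\
&{}\cup \set{((p,S),a,\top) : (p,a,q)\in T,\ \forall F\in\calF,\ S\cup\set{q}\not\subseteq F} \\
&{}\cup \set{(\top,a,\top) : a\in\Sigma} \enspace.
\end{align*}
Determinism and completeness of $\aut'$ are inherited from $\aut$: the second component and the choice between a target in $Q\times\powerset{Q}$ and the sink $\top$ are both forced by the underlying transition of $\aut$, while $\top$ is a deterministic complete sink. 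Hence establishing a single language identity $\Lcond{\mathbb{A'}}{\aut'}=\Lcond{(\run,\subseteq)}{\aut}$ settles all four inclusions simultaneously.

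For the correctness argument, note that along an initial path $p=(p_i,x_i,p_{i+1})_{i\in\N}$ of $\aut$ the second components form the increasing sequence $S_i=\set{p_1,\dots,p_i}$, whose union is precisely $\run_{\aut}(p)$; this path lifts to a unique initial path $p'$ of $\aut'$ that reaches $\top$ iff some $S_i$ fails to be contained in any $F\in\calF$. Since $\calF'=\set{\set{\top}}$, the condition $\mathbb{A'}$ accepts $p'$ iff $\set{\top}\not\subseteq\run_{\aut'}(p')$, i.e.\ iff $\top$ is never entered. The paths of $\aut'$ that do reach $\top$ (including the spurious ones produced by the sink's universal self-loops) are all $\mathbb{A'}$-rejected and thus irrelevant, so the whole statement reduces to the equivalence
\[
\big(\forall i\in\N,\ \exists F\in\calF,\ S_i\subseteq F\big)\iff\big(\exists F\in\calF,\ \run_{\aut}(p)\subseteq F\big) \enspace.
\]

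The right-to-left implication is immediate. The left-to-right implication is the crux, and it is exactly where the universal flavour of $\subseteq$ clashes with the existential flavour of $\mathbb{A'}$: a product without a sink only yields, at each step $i$, some witness $F$ that may vary with $i$, whereas $(\run,\subseteq)$ demands one uniform $F$. The remedy I would use is monotonicity. Setting $\mathcal{H}_i=\set{F\in\calF : S_i\subseteq F}$, the inclusions $S_i\subseteq S_{i+1}$ make $\mathcal{H}_0\supseteq\mathcal{H}_1\supseteq\cdots$ a descending chain inside the finite set $\calF$; if each $\mathcal{H}_i$ is non-empty, the chain stabilises to a non-empty limit, and any $F$ in that limit satisfies $S_i\subseteq F$ for every $i$, hence $\run_{\aut}(p)=\bigcup_i S_i\subseteq F$. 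This supplies the uniform $F$ required by the $(\run,\subseteq)$ condition. Quantifying over all initial paths labelled by a fixed $x$ then gives $\Lcond{\mathbb{A'}}{\aut'}=\Lcond{(\run,\subseteq)}{\aut}$, and the construction's respect for determinism and completeness yields items (1)--(4) at once.
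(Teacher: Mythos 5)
Your proposal is correct and follows essentially the same route as the paper's own proof: the identical product construction $Q\times\powerset{Q}$ accumulating the set of visited states, with an absorbing sink (your $\top$, the paper's $\bot$) entered exactly when the accumulated set is no longer contained in any member of $\calF$, and acceptance table $\set{\set{\top}}$ so that $\mathbb{A'}$-acceptance means never reaching the sink. Your explicit descending-chain argument producing the uniform witness $F$ simply spells out the ``if and only if'' equivalence that the paper asserts without detail.
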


\begin{proof}
We are going to show that for any $\FA$ $\aut=(\Sigma,Q,T,q_0,\calF)$ there exists an automaton $\aut'$  such that $\Lcond{\mathbb{A'}}{\aut'} = \Lcond{(\run, \subseteq)}{\aut}$ and $\aut'$ is deterministic (resp. complete) if $\aut$ is deterministic (resp. complete). 

Let $\aut' = (\Sigma, Q', T', (q_0,\emptyset), \calF')$ where $Q'=(Q \times \powerset{Q}) \cup \set{\bot}$, $\calF'=\set{\set{\bot}}$, and 
\begin{align*}
T' & =  \set{((p,S),a,(q, S \cup \set{q})) : (p, a , q) \in T, S \in \powerset{Q}, \exists F \in \calF, S \cup \set{q} \subseteq F} \\
 & \bigcup  \set{((p,S),a,\bot) : S \in \powerset{Q}, \exists q \in Q, (p, a , q) \in T, \forall F \in \calF, S \cup \set{q} \not\subseteq F} \\
 & \bigcup  \set{(\bot, a, \bot) : a \in \Sigma}
\end{align*}

Then, $\aut'$ is deterministic (resp. complete) if $\aut$ is deterministic (resp. complete).
Moreover, 
$x \in \Lcond{(\run, \subseteq)}{\aut}$ if and only if there exists an initial path $p$ in $\aut$ with label $x$ and a set $F \in \calF$ such that $\run_{\aut}(p) \subseteq F$ iff there exists an initial path $p'$ in $\aut'$ with label $x$ such that $p'_n \neq \bot$ for all $n\in\N$, \ie, if and only if $x \in \Lcond{\mathbb{A}'}{\aut'}$.\qed
\end{proof}

The following result places the classes of langages characterized by $\mathbb{A}$ and $\mathbb{A'}$
\wrt the Borel hierarchy.

\begin{theorem}\mbox{\phantom{Q}}\\[-4ex]
\begin{enumerate}
\item
$\CDFA(\mathbb{A}) = \CFA(\mathbb{A}) = \GR$\enspace,
\item
$\DFA(\mathbb{A}) = \FsR~\Delta~\GdR$\enspace,
\item
$\FA(\mathbb{A}) = \FsR$\enspace,
\item
$\CDFA(\mathbb{A}') = \DFA(\mathbb{A}') = \CFA(\mathbb{A}') = \FA(\mathbb{A}') = \FR$\enspace.
\end{enumerate}
\end{theorem}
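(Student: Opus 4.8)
The plan is to transport both conditions to the already-classified conditions $(\run,\sqcap)$ and $(\run,\subseteq)$, whose locations in the Borel hierarchy are recorded in Figure~\ref{fig:hierarchy-before}, so that the whole theorem reduces to assembling results proved just above. First I would observe that Lemmas~\ref{aap} and~\ref{apa} provide opposite inclusions for each of the four structural classes, so that $\FA(\mathbb{A}) = \FA(\run,\sqcap)$, $\DFA(\mathbb{A}) = \DFA(\run,\sqcap)$, $\CFA(\mathbb{A}) = \CFA(\run,\sqcap)$ and $\CDFA(\mathbb{A}) = \CDFA(\run,\sqcap)$; symmetrically, Lemmas~\ref{aprime} and~\ref{prop_run_inclusion} collapse to $\FA(\mathbb{A}') = \FA(\run,\subseteq)$, $\DFA(\mathbb{A}') = \DFA(\run,\subseteq)$, $\CFA(\mathbb{A}') = \CFA(\run,\subseteq)$ and $\CDFA(\mathbb{A}') = \CDFA(\run,\subseteq)$.

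Next I would simply read off the corresponding boxes of Figure~\ref{fig:hierarchy-before} and substitute. For the $\mathbb{A}'$ family, the four variants of $(\run,\subseteq)$ all coincide with $\FR$, which gives item~(4) at once. For the $\mathbb{A}$ family the outcome is sensitive to the structural hypothesis: the known equalities $\CFA(\run,\sqcap) = \CDFA(\run,\sqcap) = \GR$ give item~(1), the equality $\DFA(\run,\sqcap) = \FsR~\Delta~\GdR$ gives item~(2), and $\FA(\run,\sqcap) = \FsR$ gives item~(3). Composing each of these with the matching equality from the first step completes the four claims.

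I expect no real obstacle at the level of the theorem itself, since its content is assembled entirely from results established or cited elsewhere: the nontrivial steps are the explicit automaton constructions inside the lemmas (the product with $\powerset{Q}$ that records the set of already-visited states, together with the sink $\bot$, which is what lets a $(\run,\sqcap)$ or $(\run,\subseteq)$ machine simulate $\mathbb{A}$ or $\mathbb{A}'$), and the prior characterizations of $(\run,\sqcap)$ and $(\run,\subseteq)$ recalled in Table~\ref{known_results}. The only point demanding attention is the bookkeeping of structural attributes: each lemma must, and does, certify that determinism and completeness are preserved by its construction, so that one is entitled to quote the correct box of Figure~\ref{fig:hierarchy-before} for each structural class rather than a coarser one. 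Once this is checked, the four items follow by a direct chain of equalities.
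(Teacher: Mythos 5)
Your proposal is correct and takes essentially the same route as the paper: the published proof likewise combines Lemmata~\ref{aap}, \ref{apa}, \ref{aprime} and~\ref{prop_run_inclusion} (each of which preserves determinism and completeness, as you note) with the known classifications of $(\run,\sqcap)$ and $(\run,\subseteq)$ recorded in Figure~\ref{fig:hierarchy-before}. Nothing is missing.
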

\begin{proof}
It is a consequence of Lemmata~\ref{aap}, ~\ref{apa}, ~\ref{aprime} and~\ref{prop_run_inclusion},
and the known results (see Figure~\ref{fig:hierarchy-before}) on the classes of languages accepted by $\FA$, $\DFA$, $\CFA$, and $\CDFA$ under the acceptance conditions derived by $(\run, \sqcap)$ and $(\run, \subseteq)$.\qed
\end{proof}

\begin{remark}
Languages in $\CDFA(\mathbb{A})$ (resp. $\CDFA(\mathbb{A'})$) are unions of languages in the class $\mathbb{A}$ (resp. $\mathbb{A'}$) of \cite{Moriya1988}. This class equals $\GR$ (resp. $\FR$) and is closed under union operation.
These facts already prove $\CDFA(\mathbb{A}) = \GR$ 
(resp. $\CDFA(\mathbb{A'}) = \FR$).
\end{remark}

\section{The acceptance conditions $(\ninf,\sqcap)$ and $(\ninf,\subseteq)$.}

In \cite{Litovsky1997}, Litovsky and Staiger studied the class of languages accepted by \FA under the
acceptance condition $(\fin,\sqcap)$ \wrt which a path is successful if it visits an accepting state
finitely many times but at least once. It is natural to study the expressivity
of the similar acceptance condition for which a path is successful if it visits an accepting state finitely many
times or never: $(\ninf,\sqcap)$. The expressivity of $(\ninf,\subseteq)$ is also 
analized and compared with the previous ones to complete the picture in Figure~\ref{fig:hierarchy-before}. 
\smallskip
As a first step, we analyze two more acceptance conditions proposed by Moriya and
Yamasaki \cite{Moriya1988}: $\mathbb{L}$ which represents
the situation of a non-terminating process forced to pass through a finite set of
``safe'' states infinitely often and $\mathbb{L}'$ which is the negation of $\mathbb{L}$.
Lemma \ref{lem:L-ninf-exists} proves that $\mathbb{L}$ is equivalent to $(\ninf,\sqcap)$
and $\mathbb{L}'$ to $(\ninf,\subseteq)$. Moreover, the results of \cite{Moriya1988} are
extended to any type of \FA with any number of sets of accepting states.

\begin{definition}
Given an $\FA$ $\aut=(\Sigma,Q,T,q_0,\calF)$,
the acceptance condition $\mathbb{L}$ (resp. $\mathbb{L'}$) on $\aut$ is defined as follows: an initial path $p$ is accepting under $\mathbb{L}$ (resp. $\mathbb{L}'$) if and only if there exists a set 
$F \in \calF$ such that $F \subseteq \inf_{\aut}(p)$ (resp. $F \not\subseteq \inf_{\aut}(p)$).
\end{definition}

We denote by $\Lcond{\mathbb{L}}{\aut}$ (resp. $\Lcond{\mathbb{L}'}{\aut}$) the language accepted by an automaton $\aut$ under the acceptance condition $\mathbb{L}$ (resp. $\mathbb{L}$'). Similar notation as Definition~\ref{classes_of_language} are used for classes of languages.

\begin{lemma}\label{lem:L-ninf-exists}
$\mathbb{L}$ and $(\ninf, \subseteq)$ (resp. $\mathbb{L}'$ and $(\ninf, \sqcap)$) define the same classes of languages.
\end{lemma}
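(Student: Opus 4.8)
The plan is to exhibit, for each of the two claimed equivalences, a transformation of automata that changes only the acceptance table and leaves the transition relation $T$ untouched, so that determinism and completeness are automatically preserved and a single construction settles all four classes $\FA$, $\DFA$, $\CFA$, $\CDFA$ at once. Throughout I would use the defining identity $\ninf_{\aut}(p) = Q \smallsetminus \inf_{\aut}(p)$.

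First I would dispose of the pair $\mathbb{L}'$ and $(\ninf,\sqcap)$, which in fact requires no transformation at all. For any $F \in \calF$ and any initial path $p$, the condition $F \not\subseteq \inf_{\aut}(p)$ holds exactly when some $q \in F$ fails to lie in $\inf_{\aut}(p)$, that is, when $F \cap (Q \smallsetminus \inf_{\aut}(p)) \neq \emptyset$, which by the identity above reads $\ninf_{\aut}(p) \sqcap F$. Hence on every automaton $\aut$ the $\mathbb{L}'$-condition and the $(\ninf,\sqcap)$-condition select precisely the same accepting paths, so $\Lcond{\mathbb{L}'}{\aut} = \Lcond{(\ninf,\sqcap)}{\aut}$; letting $\aut$ range over each structural family gives the equality of the four classes.

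For the pair $\mathbb{L}$ and $(\ninf,\subseteq)$ I would complement the sets of the acceptance table. Given $\aut = (\Sigma,Q,T,q_0,\calF)$, set $\aut' = (\Sigma,Q,T,q_0,\calF')$ with $\calF' = \set{Q \smallsetminus F : F \in \calF}$. Because the complement is taken inside the finite state set $Q$, for every initial path $p$ and every $F \in \calF$ one has the De Morgan chain $\ninf_{\aut}(p) \subseteq Q \smallsetminus F \iff Q \smallsetminus (Q \smallsetminus F) \subseteq \inf_{\aut}(p) \iff F \subseteq \inf_{\aut}(p)$. Ranging over $\calF$, this shows $p$ is $(\ninf,\subseteq)$-accepting in $\aut'$ iff it is $\mathbb{L}$-accepting in $\aut$, whence $\Lcond{\mathbb{L}}{\aut} = \Lcond{(\ninf,\subseteq)}{\aut'}$. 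Since $\aut$ and $\aut'$ have the same transitions, $\aut'$ is deterministic (resp. complete) exactly when $\aut$ is; and since $F \mapsto Q \smallsetminus F$ is an involution on $\powerset{Q}$ the construction runs in both directions, yielding both inclusions for each of the four classes.

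I do not expect a genuine obstacle, as both equivalences reduce to set-complement identities. The only point demanding care is to take the complement relative to $Q$ (not to $\N$ or to $\Sigma^\omega$), so that $\ninf_{\aut}(p) = Q \smallsetminus \inf_{\aut}(p)$ and the double complement $Q \smallsetminus (Q \smallsetminus F) = F$ are valid, and to record explicitly that modifying only $\calF$ leaves the $\DFA$ and $\CFA$ structural constraints intact, which is precisely what lets one construction establish all four equalities simultaneously.
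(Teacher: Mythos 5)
Your proposal is correct, and for one of the two pairs it takes a genuinely different route than the paper. For $\mathbb{L}$ versus $(\ninf,\subseteq)$ you use exactly the paper's construction: keep $\Sigma,Q,T,q_0$ fixed and complement the table, $\calF' = \set{Q \smallsetminus F : F \in \calF}$, with the De Morgan equivalence $\ninf_{\aut}(p) \subseteq Q \smallsetminus F \iff F \subseteq \inf_{\aut}(p)$ and the involution $F \mapsto Q \smallsetminus F$ giving both directions — this is the paper's argument. For $\mathbb{L}'$ versus $(\ninf,\sqcap)$, however, the paper applies the same complementation and asserts $\Lcond{\mathbb{L}'}{\A} = \Lcond{(\ninf,\sqcap)}{\A'}$, whereas you observe that no transformation is needed at all, since $F \not\subseteq \inf_{\aut}(p)$ is pathwise equivalent to $\ninf_{\aut}(p) \cap F \neq \emptyset$. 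Your version is in fact the sound one: the paper's complemented-table equalities for the primed pair fail. Take $\Sigma = \set{a}$, $Q = \set{q_0,q_1}$, $T = \set{(q_0,a,q_1),(q_1,a,q_1)}$, $\calF = \set{\set{q_1}}$; on the unique initial path $p$ one has $\inf_{\A}(p) = \set{q_1}$ and $\ninf_{\A}(p) = \set{q_0}$, so $\Lcond{\mathbb{L}'}{\A} = \emptyset$ while, with $\calF' = \set{\set{q_0}}$, $\Lcond{(\ninf,\sqcap)}{\A'} = \set{a^{\omega}}$. The lemma itself is unharmed, because your identity-transformation argument proves it directly; complementation is genuinely needed only for the unprimed pair, where $\ninf_{\aut}(p) \subseteq F$ on the same automaton would say $Q \smallsetminus F \subseteq \inf_{\aut}(p)$ rather than $F \subseteq \inf_{\aut}(p)$, a distinction you implicitly respect by treating the two pairs asymmetrically. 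In short: same construction as the paper where complementation is required, and a cleaner pathwise logical equivalence where it is not — the latter both shortens the proof and repairs a slip in the published one.
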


\begin{proof}
For any  automaton  $\A = (\Sigma,Q,T,q_0,\calF)$ let $\A' = (\Sigma,Q,T,q_0,\calF')$, where $\calF'=\set{Q \smallsetminus F : F \in \calF}$.
Clearly, $\aut'$ is deterministic (resp. complete) iff $\aut$ is deterministic (resp. complete).  Moreover, the following equalities hold
\begin{align*} 
\Lcond{\mathbb{L}}{\A} = \Lcond{(\ninf, \subseteq)}{\A'} \text{ and } \Lcond{(\ninf, \subseteq)}{\A} = \Lcond{\mathbb{L}}{\A'} \\ (\text{resp. } \Lcond{\mathbb{L'}}{\A} = \Lcond{(\ninf, \sqcap)}{\A'} \text{ and } \Lcond{(\ninf, \sqcap)}{\A} = \Lcond{\mathbb{L'}}{\A'})\enspace.
\end{align*}
Hence, the thesis is true.\qed
\end{proof}

Remark that any $\FA$ can be completed with a sink state without changing the language accepted under  $\mathbb{L}$.  Therefore, the following claim is true.
\begin{lemma}\label{lem:FA-L=CFA-L-DFA-L=CDFA-L}
$\FA(\mathbb{L}) = \CFA(\mathbb{L})$ and $\DFA(\mathbb{L}) = \CDFA(\mathbb{L})$.
\end{lemma}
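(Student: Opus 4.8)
The plan is to show that completing an automaton with a single sink state does not change the language accepted under the condition $\mathbb{L}$, and then derive both equalities from this observation. Recall that under $\mathbb{L}$ an initial path $p$ is accepting if and only if there exists $F \in \calF$ with $F \subseteq \inf_{\aut}(p)$, that is, acceptance depends only on the set of states visited \emph{infinitely often}. The key idea is that any path which eventually enters a newly added sink state $\bot$ has $\inf_{\aut}(p) = \set{\bot}$, and since $\bot$ belongs to no $F \in \calF$, no such path can ever be accepting.

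First I would carry out the completion construction. Given an arbitrary $\FA$ $\aut = (\Sigma, Q, T, q_0, \calF)$, define $\aut' = (\Sigma, Q \cup \set{\bot}, T', q_0, \calF)$ with the same acceptance table $\calF$, where $T'$ augments $T$ by adding, for every pair $(p,a) \in Q \times \Sigma$ for which $\aut$ has no outgoing $a$-transition from $p$, a transition $(p, a, \bot)$, together with the self-loops $(\bot, a, \bot)$ for all $a \in \Sigma$. By construction $\aut'$ is complete, and if $\aut$ is deterministic then so is $\aut'$, since we only add transitions where none existed (and $\bot$ has exactly one $a$-successor, namely itself). This handles the structural requirement for both the complete and the complete-deterministic versions simultaneously.

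Next I would establish $\Lcond{\mathbb{L}}{\aut} = \Lcond{\mathbb{L}}{\aut'}$. For the forward inclusion, any initial path of $\aut$ is already an initial path of $\aut'$ using only transitions of $T$, with the same $\inf$ set, so it remains accepting. For the reverse inclusion, take an accepting initial path $p$ in $\aut'$. If $p$ ever uses a transition into $\bot$, then from that point on it loops on $\bot$ forever, so $\inf_{\aut'}(p) = \set{\bot}$; since $\bot \notin F$ for every $F \in \calF$, no $F$ satisfies $F \subseteq \inf_{\aut'}(p)$, contradicting acceptance. Hence $p$ never visits $\bot$, meaning $p$ uses only transitions of $T$ and is therefore an initial path of $\aut$ with the same label and same $\inf$ set, so its label lies in $\Lcond{\mathbb{L}}{\aut}$. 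This proves the two languages coincide.

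Finally I would assemble the conclusion. The inclusions $\CFA(\mathbb{L}) \subseteq \FA(\mathbb{L})$ and $\CDFA(\mathbb{L}) \subseteq \DFA(\mathbb{L})$ are immediate since every \CFA is an \FA and every \CDFA is a \DFA. The construction above gives the reverse inclusions: for $\FA(\mathbb{L}) \subseteq \CFA(\mathbb{L})$, any language in $\FA(\mathbb{L})$ is realized by some $\aut$, and $\aut'$ is a \CFA realizing the same language; for $\DFA(\mathbb{L}) \subseteq \CDFA(\mathbb{L})$, starting from a \DFA $\aut$ yields a \CDFA $\aut'$ with the same language. I do not anticipate a serious obstacle here; the only point requiring care is verifying that acceptance under $\mathbb{L}$ truly ignores the sink, which hinges on the fact that $\mathbb{L}$ is an $\inf$-based condition (not a $\run$- or $\fin$-based one) and that $\bot$ is excluded from every accepting set, so that the argument would fail for conditions sensitive to states visited only finitely often.
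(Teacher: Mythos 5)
Your construction is exactly the paper's own argument: the paper proves this lemma with a one-line remark that any \FA can be completed with a sink state without changing the language accepted under $\mathbb{L}$, and your proposal spells that remark out, including the preservation of determinism. However, the step you rightly identify as the crux --- ``since $\bot$ belongs to no $F \in \calF$, no $F$ satisfies $F \subseteq \inf_{\aut'}(p)$'' --- fails in the degenerate case $\emptyset \in \calF$, which the paper's conventions allow (an acceptance table is an arbitrary subset of $\powerset{Q}$; indeed the paper uses $\calF = \set{\emptyset, \set{q_2},\set{q_3,q_4}}$ in Proposition~\ref{prop:partial-char-CDFA(fin,=)} and treats $\calF=\set{\emptyset}$ as a separate case in Lemma~\ref{1statof}). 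If $\emptyset \in \calF$, then $\emptyset \subseteq \inf_{\aut'}(p)$ holds for \emph{every} infinite path, so paths absorbed by the sink are accepting, and your claimed equality $\Lcond{\mathbb{L}}{\aut} = \Lcond{\mathbb{L}}{\aut'}$ is genuinely false: take $\aut = (\set{a},\set{q_0},\emptyset,q_0,\set{\emptyset})$, which has no infinite path at all, so $\Lcond{\mathbb{L}}{\aut} = \emptyset$, while your completed $\aut'$ accepts $a^{\omega}$ via $F = \emptyset$.

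The gap is local and repairable without changing your architecture. Before completing, normalize the table: if $\emptyset \in \calF$, replace it by the singletons $\set{q}$ for $q \in Q$ (note $Q \neq \emptyset$ since $q_0 \in Q$). Since $Q$ is finite, every infinite path visits some state of $Q$ infinitely often, so for every infinite path $p$ there is $q$ with $\set{q} \subseteq \inf_{\aut}(p)$; hence the language under $\mathbb{L}$ is unchanged, and determinism and completeness are untouched because no transition is modified. After this normalization every $F \in \calF$ is nonempty and contained in $Q$, so no $F \subseteq \set{\bot}$, and your sink argument together with the trivial inclusions $\CFA(\mathbb{L}) \subseteq \FA(\mathbb{L})$ and $\CDFA(\mathbb{L}) \subseteq \DFA(\mathbb{L})$ completes the proof. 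To be fair, the published proof is a one-sentence remark that glosses over the same corner case; but since you made the key claim explicit, it needs this guard to be correct as written.
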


\begin{proposition}
\label{12}
$\CDFA(\inf, \sqcap) \subseteq \CDFA(\mathbb{L})$ and $\CFA(\inf, \sqcap) \subseteq \CFA(\mathbb{L})$.
\end{proposition}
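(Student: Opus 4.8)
$$\CDFA(\inf, \sqcap) \subseteq \CDFA(\mathbb{L}) \text{ and } \CFA(\inf, \sqcap) \subseteq \CFA(\mathbb{L}).$$

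Let me understand the definitions carefully.

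The plan is to reuse, for \inf and $\mathbb{L}$, exactly the device that Lemma~\ref{apa} used for \run and $\mathbb{A}$. The two conditions $(\inf,\sqcap)$ and $\mathbb{L}$ quantify over the acceptance table in dual ways: $(\inf,\sqcap)$ asks for some $F\in\calF$ with $\inf_{\aut}(p)\cap F\neq\emptyset$, while $\mathbb{L}$ asks for some $F\in\calF$ with $F\subseteq\inf_{\aut}(p)$. The key remark is that an $\sqcap$-requirement against a family $\calF$ is the same as an "$F\subseteq{}$"-requirement against the family of all singletons drawn from $\bigcup\calF$, because a single state is visited infinitely often iff the corresponding singleton is contained in $\inf_{\aut}(p)$. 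Crucially, this replacement changes only the acceptance table, not the underlying transition structure, so it preserves both determinism and completeness.

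Concretely, given a \FA $\aut=(\Sigma,Q,T,q_0,\calF)$ I would take $\aut'=(\Sigma,Q,T,q_0,\calF')$ with
\[
\calF'=\set{\set{q} : q\in Q,\ \exists F\in\calF,\ q\in F}\enspace.
\]
Since $\aut'$ has the same $(\Sigma,Q,T,q_0)$ as $\aut$, it is complete (resp. deterministic) exactly when $\aut$ is; in particular $\aut'$ is a \CFA (resp. \CDFA) whenever $\aut$ is, which is all the two stated inclusions require. Moreover $\aut$ and $\aut'$ have literally the same initial paths, with the same labels and the same sets $\inf_{\aut}(p)=\inf_{\aut'}(p)$.

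The correctness is then a one-line equivalence. For any initial path $p$, the $(\inf,\sqcap)$-condition holds for $\aut$ iff some state $q$ belonging to a member of $\calF$ satisfies $q\in\inf_{\aut}(p)$; since $\set{q}\subseteq\inf_{\aut'}(p)$ is the same as $q\in\inf_{\aut'}(p)$, this is precisely the $\mathbb{L}$-condition for $\aut'$, namely that there is a singleton $\set{q}\in\calF'$ with $\set{q}\subseteq\inf_{\aut'}(p)$. Hence $\Lcond{(\inf,\sqcap)}{\aut}=\Lcond{\mathbb{L}}{\aut'}$, yielding both $\CFA(\inf,\sqcap)\subseteq\CFA(\mathbb{L})$ and $\CDFA(\inf,\sqcap)\subseteq\CDFA(\mathbb{L})$.

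There is essentially no obstacle: the construction is the \inf-analogue of Lemma~\ref{apa}, and the only thing to check is that passing to singletons neither adds nor removes accepting behaviour, which reduces to the trivial observation $\set{q}\subseteq S\iff q\in S$.
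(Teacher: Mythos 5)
Your proof is correct and matches the paper's own argument exactly: the paper likewise keeps $(\Sigma,Q,T,q_0)$ unchanged and replaces the acceptance table by $\calF'=\set{\set{q} : \exists F \in \calF,\ q \in F}$, concluding $\Lcond{(\inf,\sqcap)}{\A} = \Lcond{\mathbb{L}}{\A'}$. Your write-up even supplies the one-line justification ($\set{q}\subseteq S \iff q \in S$) that the paper leaves implicit.
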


\begin{proof}
For any $\CDFA$ (resp. $\CFA$) $\A = (\Sigma,Q,T,q_0,\calF)$, define the $\CDFA$ (resp. $\CFA$) $\A' = (\Sigma,Q,T,q_0,\calF')$ where
$\calF'=\{\set{q} : \exists F \in \calF, q \in F\}$. 
Then, it follows that $\Lcond{(\inf,\sqcap)}{\A} = \Lcond{\mathbb{L}}{\A'}$ and this concludes the proof.\qed
\end{proof}

\begin{proposition}\label{prop:CDFA-L-subseteq-CDFA-inf-exists}
$\CDFA(\mathbb{L}) \subseteq \CDFA(\inf, \sqcap)$.
\end{proposition}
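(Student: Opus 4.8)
The plan is to build, from an arbitrary CDFA $\A=(\Sigma,Q,T,q_0,\calF)$, a CDFA $\A'$ with $\Lcond{\mathbb{L}}{\A}=\Lcond{(\inf,\sqcap)}{\A'}$. Condition $\mathbb{L}$ requires that \emph{all} states of some $F\in\calF$ lie in $\inf_\A(p)$, whereas $(\inf,\sqcap)$ only asks that a \emph{single} accepting state lie in $\inf$; so the task is a deterministic ``generalized-B\"uchi to B\"uchi'' translation, performed disjunctively over the sets of $\calF$. First I would dispose of the degenerate case $\emptyset\in\calF$: since $\emptyset\subseteq\inf_\A(p)$ for every initial path and, by completeness, every word labels a (unique) initial infinite path, we get $\Lcond{\mathbb{L}}{\A}=\aN$, which trivially belongs to $\CDFA(\inf,\sqcap)$. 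Hence I may assume every $F\in\calF$ is nonempty.

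For the main construction I fix, for each $F\in\calF$, an enumeration $F=\set{f^F_1,\dots,f^F_{m_F}}$ with $m_F=\card{F}$, and attach to $\A$ a round-robin pointer $i_F\in\set{1,\dots,m_F}$ recording the next element of $F$ whose revisit is awaited. The states of $\A'$ are triples $(q,(i_F)_{F\in\calF},b)$ with $q\in Q$ and $b\in\set{0,1}$; reading $a$ from such a state, $\A'$ moves to $(q',(i'_F)_{F\in\calF},b')$, where $q'$ is the unique $T$-successor of $q$ on $a$, each pointer advances cyclically inside $\set{1,\dots,m_F}$ exactly when its awaited state is reached (\ie $i'_F=i_F+1$, with the convention that $m_F$ wraps back to $1$, when $q'=f^F_{i_F}$, and $i'_F=i_F$ otherwise), and $b'=1$ precisely when at least one pointer wraps from $m_F$ back to $1$ on this step. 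The initial state is $(q_0,(i_F)_{F\in\calF},0)$ with all $i_F=1$, and the acceptance table is $\calF'=\set{B}$ with $B=\set{(q,(i_F)_{F\in\calF},b):b=1}$. Since $\A$ is a CDFA and all pointer and bit updates are functions of the current state and the read letter, $\A'$ is again a CDFA.

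It remains to verify correctness. For a word $x$, let $p'$ be the run of $\A'$ on $x$ and $p$ its projection onto the first component, which is exactly the run of $\A$ on $x$. The key equivalence is that, for each $F$, the pointer $i_F$ wraps infinitely often along $p'$ if and only if $F\subseteq\inf_\A(p)$: a completed cycle of $i_F$ forces every $f^F_j$ to be visited at least once, so infinitely many wraps put all of $F$ in $\inf_\A(p)$; conversely, if all $f^F_j\in\inf_\A(p)$ then the currently awaited element always recurs, the pointer never stalls, and it wraps infinitely often. Since $\calF$ is finite, ``some pointer wraps infinitely often'' is equivalent to ``$b=1$ holds infinitely often'', \ie to $\inf_{\A'}(p')\cap B\neq\emptyset$; chaining the equivalences gives $x\in\Lcond{(\inf,\sqcap)}{\A'}$ iff some $F\in\calF$ satisfies $F\subseteq\inf_\A(p)$ iff $x\in\Lcond{\mathbb{L}}{\A}$. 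The step needing the most care is precisely this two-directional equivalence, namely checking that a wrap certifies a full pass through $F$ and that a pointer cannot stall while all its awaited states recur; the disjunctive collapse to a single B\"uchi bit is just the finiteness of $\calF$, and preservation of determinism and completeness is routine since $\A'$ is a synchronous product of $\A$ with finite deterministic counters.
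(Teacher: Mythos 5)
Your construction is correct, but it follows a genuinely different route from the paper. The paper avoids any explicit product: for each $q \in Q$ it forms the auxiliary $\CDFA$ $\A_q = (\Sigma,Q,T,q_0,\set{\set{q}})$ and observes that, by determinism (each word has a unique run, so membership in an intersection of languages can be checked on that single run), $\Lcond{\mathbb{L}}{\A} = \bigcup_{F \in \calF} \bigcap_{q \in F} \Lcond{(\inf,\sqcap)}{\A_q}$; it then concludes by citing the known closure of $\CDFA(\inf,\sqcap)$ under finite union and finite intersection. Your proof in effect unfolds those cited closure properties into one concrete determinism- and completeness-preserving construction: the round-robin pointers are the classical generalized-B\"uchi-to-B\"uchi counter translation realizing the intersections $\bigcap_{q \in F}$, and the wrap bit $b$ together with the pigeonhole over the finite family $\calF$ realizes the outer union. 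What the paper's argument buys is brevity (three lines modulo the cited fact); what yours buys is self-containedness, an explicit state bound, and visibility of exactly where determinism and completeness enter (the unique-run correspondence and the functional pointer/bit updates). Your central equivalence --- $i_F$ wraps infinitely often along $p'$ if and only if $F \subseteq \inf_{\A}(p)$ --- is argued correctly in both directions, and counting visits on entered states matches the paper's convention that $\inf_{\A}(p)$ ignores the occurrence at index $0$. Your separate treatment of $\emptyset \in \calF$ is necessary for your construction and handled correctly (the paper's formula absorbs this case silently, the empty intersection being $\aN$); you might also remark that $\calF = \emptyset$ degenerates gracefully in your construction, since no pointer ever wraps and the accepted language is $\emptyset$, as required.
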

\begin{proof}
For any $\CDFA$ $\A = (\Sigma,Q,T,q_0,\calF)$ and any $q \in Q$, define the $\CDFA$ $\aut_{q} = (\Sigma,Q,T,q_0,\set{\set{q}})$. By determinism of  \A, it holds that 
$$\Lcond{\mathbb{L}}{\A} = \bigcup_{F \in \calF} \bigcap_{q \in F} \Lcond{(\inf, \sqcap)}{\A_{q}}\enspace.$$
Since $\CDFA(\inf, \sqcap)$ is stable by finite union and finite intersection~\cite{Buchi1960}, there exists a $\CDFA$ $\A'$ such that $\Lcond{\mathbb{L}}{\A}=\Lcond{(\inf, \sqcap)}{\A'}$. Hence,
$\CDFA(\mathbb{L}) \subseteq \CDFA(\inf, \sqcap)$.\qed
\end{proof}

\begin{theorem} 
The following equalities hold.
\begin{enumerate}
\item
$\CDFA(\ninf,\subseteq) = \DFA(\ninf,\subseteq) = \GdR$\enspace,
\item
$\CFA(\ninf,\subseteq) = \FA(\ninf,\subseteq) = \RAT$\enspace.
\end{enumerate}
\end{theorem}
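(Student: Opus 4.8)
The plan is to prove the four equalities by combining the earlier lemmata with the known positions of $(\inf,\sqcap)$ and $(\inf,\subseteq)$ in Figure~\ref{fig:hierarchy-before}. Since Lemma~\ref{lem:L-ninf-exists} already identifies $(\ninf,\subseteq)$ with $\mathbb{L}$ (by the complementation $\calF \mapsto \set{Q \smallsetminus F : F \in \calF}$ of the acceptance table, which preserves determinism and completeness), it suffices to locate the classes $\CDFA(\mathbb{L})$, $\DFA(\mathbb{L})$, $\CFA(\mathbb{L})$ and $\FA(\mathbb{L})$ and then translate back. First I would handle the deterministic case (item~1). By Proposition~\ref{prop:CDFA-L-subseteq-CDFA-inf-exists} we have $\CDFA(\mathbb{L}) \subseteq \CDFA(\inf,\sqcap)$, and by Proposition~\ref{12} the reverse inclusion $\CDFA(\inf,\sqcap) \subseteq \CDFA(\mathbb{L})$ holds, so $\CDFA(\mathbb{L}) = \CDFA(\inf,\sqcap)$; from Figure~\ref{fig:hierarchy-before} the latter equals $\GdR$. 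Lemma~\ref{lem:FA-L=CFA-L-DFA-L=CDFA-L} gives $\DFA(\mathbb{L}) = \CDFA(\mathbb{L})$, so $\DFA(\mathbb{L}) = \GdR$ as well, and applying Lemma~\ref{lem:L-ninf-exists} yields $\CDFA(\ninf,\subseteq) = \DFA(\ninf,\subseteq) = \GdR$.

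For item~2 (the complete and general case), the strategy is analogous but the target class is the full $\RAT$. Here I would use Lemma~\ref{lem:FA-L=CFA-L-DFA-L=CDFA-L} to collapse $\FA(\mathbb{L}) = \CFA(\mathbb{L})$, so it is enough to pin down $\CFA(\mathbb{L})$. The inclusion $\CFA(\mathbb{L}) \subseteq \RAT$ is immediate from Proposition~\ref{prop:all_rat}, since $\mathbb{L}$ is an MSO-definable condition (it asks $F \subseteq \inf_\aut(p)$ for some $F \in \calF$, which the remark following Proposition~\ref{prop:all_rat} permits). The substantive direction is $\RAT \subseteq \CFA(\mathbb{L})$: one must show every $\omega$-rational language is accepted by some complete automaton under $\mathbb{L}$. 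The natural route is to start from a nondeterministic Büchi automaton for a given $\omega$-rational language (i.e.\ a $\CFA$ under $(\inf,\sqcap)$, whose class is $\RAT$ by Figure~\ref{fig:hierarchy-before}) and observe, via Proposition~\ref{12}'s construction, that $\CFA(\inf,\sqcap) \subseteq \CFA(\mathbb{L})$, giving $\RAT = \CFA(\inf,\sqcap) \subseteq \CFA(\mathbb{L}) \subseteq \RAT$. Transferring through Lemma~\ref{lem:L-ninf-exists} then yields $\CFA(\ninf,\subseteq) = \FA(\ninf,\subseteq) = \RAT$.

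The main obstacle I anticipate is the asymmetry between the deterministic and nondeterministic cases, and in particular making sure Proposition~\ref{12} is invoked in the correct direction for each. For item~1 the delicate point is that $\CDFA(\inf,\sqcap)$ is strictly smaller than $\RAT$ (it sits at $\GdR$), so the determinism hypothesis is genuinely used in Proposition~\ref{prop:CDFA-L-subseteq-CDFA-inf-exists} through the decomposition $\Lcond{\mathbb{L}}{\A} = \bigcup_{F \in \calF} \bigcap_{q \in F} \Lcond{(\inf,\sqcap)}{\A_q}$, which relies on the fact that in a deterministic automaton the unique run's set of infinitely-visited states can be characterized state-by-state. For item~2 the corresponding obstacle is simply that completeness alone (without determinism) already affords the full power of nondeterministic Büchi recognition, so no analogue of the intersection argument is needed and the two bounds meet at $\RAT$. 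I would therefore structure the proof as two short paragraphs, each a chain of inclusions between two copies of a known Borel class, with the only real content being the bookkeeping of which structural hypotheses survive each construction in Lemmata~\ref{lem:L-ninf-exists} and~\ref{lem:FA-L=CFA-L-DFA-L=CDFA-L} and Propositions~\ref{12} and~\ref{prop:CDFA-L-subseteq-CDFA-inf-exists}.
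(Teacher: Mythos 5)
Your proposal is correct and takes essentially the same route as the paper: both establish item~1 by sandwiching $\CDFA(\mathbb{L})$ between $\CDFA(\inf,\sqcap)$ via Propositions~\ref{prop:CDFA-L-subseteq-CDFA-inf-exists} and~\ref{12}, collapse $\DFA(\mathbb{L})=\CDFA(\mathbb{L})$ with Lemma~\ref{lem:FA-L=CFA-L-DFA-L=CDFA-L}, and transfer through Lemma~\ref{lem:L-ninf-exists} using $\DFA(\inf,\sqcap)=\CDFA(\inf,\sqcap)=\GdR$, while item~2 follows from $\RAT=\CFA(\inf,\sqcap)\subseteq\CFA(\mathbb{L})$ (Proposition~\ref{12}) together with the upper bound from Proposition~\ref{prop:all_rat}. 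Your version simply spells out the chains of inclusions that the paper's proof cites in compressed form.
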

\begin{proof}
The first equality follows from
Lemmata~\ref{lem:L-ninf-exists}  and~\ref{lem:FA-L=CFA-L-DFA-L=CDFA-L}, Propositions~\ref{prop:CDFA-L-subseteq-CDFA-inf-exists} and~\ref{12} and the known fact that $\DFA(\inf,\sqcap)=\CDFA(\inf,\sqcap)=\GdR$, while the second equality 
follows from Lemmata~\ref{lem:L-ninf-exists} and~\ref{lem:FA-L=CFA-L-DFA-L=CDFA-L}, Propositions~\ref{12} and~\ref{prop:all_rat} and the known fact that $\CFA(\inf,\sqcap)=\RAT$.
\qed
\end{proof}

\begin{lemma}
\label{1statof}
For any automaton $\aut=(\Sigma,Q,T,q_0,\calF)$ there exists an automaton $\aut'=(\Sigma',Q',T',q'_0,\calF')$ such that $\calF'=\set{\set{q'}}$ for some $q'\in Q'$, $\Lcond{\mathbb{L}'}{\aut} = \Lcond{\mathbb{L}'}{\aut'}$,  and $\aut'$ is deterministic (resp. complete) if $\aut$ is deterministic (resp. complete).
\end{lemma}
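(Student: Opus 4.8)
The plan is to reach $\aut'$ in two stages, the first essentially bookkeeping and the second carrying all the content. For the first stage I would collapse the acceptance table to a single set. By definition an initial path $p$ is accepting under $\mathbb{L}'$ iff there is $F\in\calF$ with $F\not\subseteq\inf_{\aut}(p)$, and this is readily equivalent to $G\not\subseteq\inf_{\aut}(p)$ where $G=\bigcup_{F\in\calF}F$. Hence, keeping the same underlying automaton and merely replacing $\calF$ by $\set{G}$ alters neither the $\mathbb{L}'$-language nor determinism/completeness, so I may assume $\calF=\set{G}$. If $G=\emptyset$ the language is empty, and I would take for $\aut'$ a one-state \CDFA looping on itself with that single state as $q'$ (it is then visited infinitely often, so nothing is accepted, matching the empty language, and a \CDFA fulfils both conditional requirements at once). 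So assume $G=\set{g_1,\dots,g_k}$ with $k\ge 1$.

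For the second stage the key remark is that the \emph{rejecting} condition $G\subseteq\inf_{\aut}(p)$ says exactly that every $g_i$ occurs infinitely often, a generalized B\"uchi condition that I would degeneralize to a single-state one by a round-robin product. I would set $Q'=(Q\times\set{1,\dots,k})\cup\set{q'}$ with $q'$ fresh, take $(q_0,1)$ as initial state, and let the counter component record which $g_i$ is currently awaited: on a transition $(p,a,r)\in T$ the counter keeps its value $i$ unless $r=g_i$, in which case it advances by $i\mapsto i+1$, with $k\mapsto 1$. The single-state requirement is the one point needing care, and it is resolved by the observation that a full cycle is completed precisely when the counter advances from $k$, which can only happen when reading a transition into $g_k$; thus the \aut-state reached at every cycle completion is the fixed state $g_k$. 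I would therefore route every such $k\mapsto 1$ advance into the single fresh state $q'$ (instead of into $(g_k,1)$) and give $q'$ the same outgoing transitions that $(g_k,1)$ would have; since $q'$ still ``knows'' the current \aut-state is $g_k$, the simulation continues deterministically, and by construction $q'$ is entered exactly at cycle completions.

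Finally I would verify the equivalence and the structural claims. There is a label-preserving bijection between initial paths $p$ of \aut and initial paths $p'$ of $\aut'$, obtained by dropping (resp. adding) the counter, which evolves deterministically along $p$. The heart of the argument is that $q'\in\inf_{\aut'}(p')$ iff $G\subseteq\inf_{\aut}(p)$: if every $g_i$ recurs infinitely often the counter never stabilizes and cycles forever, yielding infinitely many visits to $q'$; conversely, infinitely many visits to $q'$ force the counter to take every value infinitely often, hence each awaited $g_i$ to recur. Taking $\calF'=\set{\set{q'}}$, the path $p'$ is accepting under $\mathbb{L}'$ iff $q'\notin\inf_{\aut'}(p')$ iff $G\not\subseteq\inf_{\aut}(p)$ iff $p$ is accepting under $\mathbb{L}'$ in \aut, whence $\Lcond{\mathbb{L}'}{\aut}=\Lcond{\mathbb{L}'}{\aut'}$. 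As each state of $\aut'$, including $q'$, has per letter exactly as many successors as the corresponding \aut-state, $\aut'$ is deterministic (resp. complete) whenever \aut is. The main obstacle throughout is the single accepting state in the second stage, and it is dissolved by the remark that all cycle completions occur at the fixed \aut-state $g_k$.
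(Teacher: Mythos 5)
Your proof is correct, and its skeleton is the same as the paper's: first collapse $\calF$ to the single set $G=\bigcup_{F\in\calF}F$ (sound because $\mathbb{L}'$-acceptance depends only on this union), dispose of the degenerate case $G=\emptyset$ with a one-state loop carrying table $\set{\set{q'}}$ --- exactly how the paper treats $\calF=\emptyset$ and $\calF=\set{\emptyset}$ --- and then build a product automaton in which one designated state recurs infinitely often iff $G\subseteq\inf_{\aut}(p)$, so that the singleton table under $\mathbb{L}'$ inverts this. Where you genuinely differ is the gadget. The paper tracks the \emph{set} of $G$-states visited since the last reset, taking $Q'=Q\times\powerset{F}$ with $F=G$, marker $(f,F)$ for an arbitrarily chosen $f\in F$, and a reset to the empty accumulator upon leaving the marker; you instead use the classical round-robin degeneralization of a generalized B\"uchi condition, with $Q\times\set{1,\dots,k}$ plus a fresh marker $q'$. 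Your construction is exponentially more succinct ($k\card{Q}+1$ states versus $\card{Q}\,2^{\card{F}}$) and rests on the crisp observation that a $k\mapsto 1$ advance can only occur upon entering the fixed state $g_k$, which is what allows a single fresh state to serve as marker; the paper's powerset accumulator is heavier but needs no ordering of $G$ and localizes the reset at one state. One point worth making explicit in your write-up: $(g_k,1)$ remains reachable by counter-stationary moves (entering $g_k$ while the counter awaits $g_1$, for $k\geq 2$), so it coexists with $q'$ rather than being eliminated; this is harmless, since your invariant is only that $q'$ is entered exactly at cycle completions, and determinism/completeness are unaffected because both $q'$ and $(g_k,1)$ copy $g_k$'s out-degree per letter --- but as phrased ("instead of into $(g_k,1)$") a reader might believe you claimed $(g_k,1)$ unreachable.
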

\begin{proof}
If either $\calF = \emptyset$ or $\calF = \set{\emptyset}$ then the automaton $\aut'$ defined by $\Sigma' = \Sigma$, $Q'=\set{\bot}$, $T'=\set{(\bot,a,\bot) : a \in \Sigma}$, $q'_0=q_0$, and $\calF'=\set{\set{\bot}})$ verifies the statement of the Lemma. Otherwise, set $F=\bigcup_{X \in \calF} X$, choose any $f\in F$, and define the automaton $\aut'$ by $\Sigma' = \Sigma$, $Q'=Q \times \powerset{F}$, $q'_0=(q_0,\emptyset)$, $\calF'=\set{\set{(f,F)}}$, and 
\begin{align*}
T' & = \set{((p,S),a,(q, (S \cup \set{q}) \cap F)) : (p, a , q) \in T, (p,S) \neq (f,F)} \\
 & \bigcup  \set{((f,F),a,(q,\emptyset)) : (f, a , q) \in T}\enspace.
\end{align*}
Then, $\aut'$ is deterministic (resp.  complete) if $\aut$ is deterministic (resp.  complete).
Moreover, $\Lcond{\mathbb{L}'}{\aut}\subseteq \Lcond{\mathbb{L}'}{\aut'}$. Indeed,  if $x \in \Lcond{\mathbb{L}'}{\aut}$,  there exist an initial path $p = (p_i, x_i, p_{i+1})_{i \in \N}$ in $\aut$ with label $x$, a set $X \in \calF$, and a state $s \in X$ such that $s \not\in \inf(p)$. Consider the path $p' = ((p_i, S_i), x_i, (p_{i+1},S_{i+1}))_{i \in \N}$ where $S_0 = \emptyset$ and $S_{i+1} = (S_i \cup \set{q_i}) \cap F$ if $(p_i,S_i) \neq (f,F)$, $\emptyset$ otherwise. Then, $p'$ is an initial path in $\aut'$ with label $x$ in which the state $(f,F)$ appears finitely often in $p'$ since $s$ appears finitely often in $p$. Hence, $x \in \Lcond{\mathbb{L}'}{\aut'}$. Finally, the implication $\Lcond{\mathbb{L}'}{\aut'}\subseteq \Lcond{\mathbb{L}'}{\aut}$ is also true. \qed
\end{proof}

The following series of Lemmata  is useful 
to prove strict inclusions between the
the considered language classes.

\begin{lemma}[Moriya and Yamasaki \cite{Moriya1988}]
\label{lem:(a+b)^*a^{omega}}
$\L = (a+b)^*a^{\omega} \in \CDFA(\mathbb{L}')$.
\end{lemma}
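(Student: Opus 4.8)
The plan is to exhibit a concrete two-state \CDFA and read it under \ensuremath{\mathbb{L}'}. First observe that $\L = (a+b)^*a^{\omega}$ is precisely the set of words over $\set{a,b}$ containing only finitely many occurrences of the letter $b$. The natural automaton for this is the one that remembers the last letter read: I would take $\aut = (\set{a,b}, \set{q_a, q_b}, T, q_a, \calF)$ with
\[
T = \set{(q_a,a,q_a),\; (q_a,b,q_b),\; (q_b,a,q_a),\; (q_b,b,q_b)}
\]
and acceptance table $\calF = \set{\set{q_b}}$. This automaton is deterministic and complete, since exactly one transition leaves each state on each letter, so it is indeed a \CDFA.

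I would then argue correctness as follows. Because $\aut$ is a \CDFA, every $w \in \aN$ labels a unique initial path $p = (p_i, w_i, p_{i+1})_{i \in \N}$, and by construction $p_i = q_b$ if and only if $w_{i-1} = b$ for every $i > 0$. Hence $q_b \in \inf_\aut(p)$ if and only if $w$ contains infinitely many $b$'s. Under the condition \ensuremath{\mathbb{L}'} the path $p$ is accepting iff there is $F \in \calF$ with $F \not\subseteq \inf_\aut(p)$; as $\calF$ has the single element $\set{q_b}$, this reduces to $q_b \notin \inf_\aut(p)$, that is, to $w$ having only finitely many $b$'s. I would conclude that $\Lcond{\mathbb{L}'}{\aut} = \L$ by spelling out both inclusions from the equivalence ``$q_b \in \inf_\aut(p) \iff w$ has infinitely many $b$'s''.

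The argument is short, so I do not expect a genuine obstacle; the only points requiring care are routine. One must check completeness (a transition for every state–letter pair, so that no word ever gets stuck) and keep in mind that $\inf_\aut$ is defined only over indices $i > 0$, so the choice of initial state is irrelevant to membership and the index shift between $p_i$ and $w_{i-1}$ must be handled correctly. As a sanity check one could instead invoke Lemma~\ref{lem:L-ninf-exists} and read the same automaton under $(\ninf,\sqcap)$, where acceptance means $\set{q_b}$ meets $\ninf_\aut(p) = Q \smallsetminus \inf_\aut(p)$, i.e. $q_b$ occurs finitely often or never; this gives the same language and confirms the claim.
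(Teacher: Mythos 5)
Your proposal is correct and is essentially the paper's own proof: the paper exhibits exactly the same two-state ``remember the last letter'' \CDFA (its $q_0,q_1$ are your $q_a,q_b$, with acceptance table $\set{\set{q_1}}$) and reads it under $\mathbb{L}'$, so that acceptance reduces to $q_1 \notin \inf_\aut(p)$, i.e.\ finitely many $b$'s. Your version merely spells out the routine verification that the paper leaves to the figure.
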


\begin{proof}
$\L = \Lcond{\mathbb{L}'}{\A}$ for the $\CDFA$ $\A$ given in Figure~\ref{fig:(a+b)^*a^omega}.\qed
\end{proof}

\begin{figure}[htb]
\begin{center}
\scalebox{1.0}{
\begin{tikzpicture}[semithick, shorten >=1pt, node distance=2cm, >=stealth',initial text=]

\node[state, initial] (A) {$q_0$};
\node[state, accepting] (B) [right of=A]{$q_1$};

\path
(A) edge[->, loop above]  node[above] {$a$} (A)
     edge[->, bend left] node[above] {$b$} (B)
(B) edge[->, bend left] node[below] {$a$} (A)
     edge[->, loop above] node[above] {$b$} (B);
\end{tikzpicture}
}
\end{center}
\caption{A $\CDFA$ recognizing $(a+b)^*a^{\omega}$ under $\mathbb{L}'$}
\label{fig:(a+b)^*a^omega}
\end{figure}

\begin{lemma}
\label{lem:ab^*a(a+b)^{omega}}
$ab^*a(a+b)^{\omega} \in \DFA(\mathbb{L}') \smallsetminus \CFA(\mathbb{L}')$.
\end{lemma}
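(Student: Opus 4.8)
The plan is to treat the two membership claims separately. The positive part is a one-automaton construction, and essentially all the difficulty sits in the negative part.

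For $ab^*a(a+b)^\omega\in\DFA(\mathbb{L}')$ I would exhibit a concrete partial deterministic automaton $\aut=(\{a,b\},\{q_0,q_1,q_2\},T,q_0,\{\{q_1\}\})$ with $T=\{(q_0,a,q_1),(q_1,b,q_1),(q_1,a,q_2),(q_2,a,q_2),(q_2,b,q_2)\}$ and no $b$-transition out of $q_0$. There is exactly one initial path per label, so $\aut$ is a \DFA. If a word starts with $b$ it has no infinite initial path, hence is not accepted; the path reading $ab^\omega$ stays in $q_1$, so $\{q_1\}\subseteq\inf$ and the path is rejected under $\mathbb{L}'$; and the path reading $ab^na(a+b)^\omega$ visits $q_1$ exactly $n+1$ times and then remains in $q_2$, so $\{q_1\}\not\subseteq\inf$ and the word is accepted. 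Thus $\Lcond{\mathbb{L}'}{\aut}=ab^*a(a+b)^\omega$. It is convenient to record that a word with first letter $a$ lies in this language if and only if it differs from $ab^\omega$, while a word with first letter $b$ never does.

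For $ab^*a(a+b)^\omega\notin\CFA(\mathbb{L}')$ I would argue by contradiction. Assuming the language is some $\Lcond{\mathbb{L}'}{\aut_0}$ with $\aut_0$ complete, Lemma~\ref{1statof} (which preserves completeness) provides a complete automaton $\aut$ with acceptance table $\{\{q'\}\}$ recognizing the same language; here a word is accepted precisely when some initial path visits $q'$ only finitely often. Two consequences of completeness drive the proof. First, since $ab^\omega$ is rejected, every initial path reading $ab^\omega$ meets $q'$ infinitely often; reading $b$'s after the first $a$ stays inside the set $P$ of states reachable by labels in $ab^*$, completeness forbids dead ends in the $b$-graph on $P$, and every $b$-cycle in $P$ is the $\inf$ of some path on $ab^\omega$, hence must contain $q'$. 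Therefore a $b$-labelled walk inside $P$ avoiding $q'$ has fewer than $\card{Q}$ states. Second, since $ba^\omega$ is rejected, every initial path reading it meets $q'$ infinitely often, and completeness guarantees such a path exists; truncating it at its first visit to $q'$ yields a finite initial path ending in $q'$ with a label of the form $ba^j$.

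The contradiction is then obtained by grafting. Fix $k\ge\card{Q}$. As $ab^ka^\omega$ belongs to the language it has an accepting path $\pi$ visiting $q'$ finitely often, and by the pigeonhole bound above its $b$-block of length $k$ must touch $q'$, so $\pi$ visits $q'$ at least once. Let $p$ be the last position with $\pi(p)=q'$; the suffix read after $p$ is some $v\in b^*a^\omega$, and the tail of $\pi$ from $p$ starts at $q'$ and never revisits it. Concatenating the $ba^j$-path ending at $q'$ with this tail produces a single initial path, labelled by $ba^jv$, which begins with $b$ and visits $q'$ only finitely often; hence $ba^jv$ is accepted, i.e. it lies in the language — impossible, since a word beginning with $b$ never does. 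This is the whole argument, and its hard point is exactly this negative part: the reduction to a single accepting state via Lemma~\ref{1statof} is what turns $\mathbb{L}'$ into the workable condition ``$q'$ occurs finitely often'', but because $\aut$ may be non-deterministic one cannot steer the accepting path directly. The key device is to use completeness twice — once to force a path of $ba^\omega$ through $q'$, once through the pigeonhole bound to force the accepting path of $ab^ka^\omega$ to meet $q'$ — and then to transplant the ``escape'' suffix of the accepting path onto a $b$-initial prefix.
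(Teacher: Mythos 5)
Your proof is correct and takes essentially the same route as the paper's: the identical three-state \DFA for the positive part, and for the negative part the same scheme of reducing to a singleton acceptance table via Lemma~\ref{1statof}, forcing the accepting path of $ab^ka^\omega$ (with $k \geq \card{Q}$) through the accepting state by a pumping/pigeonhole argument on the $b$-block, using completeness to obtain a $b$-initial finite path ending in that state, and grafting the two to produce an accepted word beginning with $b$, a contradiction. The only differences are cosmetic: you use the rejected word $ba^\omega$ where the paper uses $b^\omega$, and you graft at the \emph{last} visit of $q'$ on the accepting path rather than at the visit inside the $b$-block (both versions share the same harmless corner case, fixed by choosing a visit at a positive position, which your observation that $q'$ is met infinitely often already provides).
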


\begin{proof}
Let $\L$ denote the language $ab^*a(a+b)^{\omega}$. Consider the $\DFA$ $\aut'$ in Figure~\ref{fig:ab^*a(a+b)^{omega}}. It is easy to see that $\L = \Lcond{\mathbb{L}'}{\aut'}$.

\begin{figure}
\begin{center}
\scalebox{1.0}{
\begin{tikzpicture}[semithick, shorten >=1pt, node distance=2cm, >=stealth',initial text=]

\node[state, initial] (A) {$q_0$};
\node[state, accepting] (B) [right of=A]{$q_1$};
\node[state] (C) [right of=B]{$q_2$};

\path
(A) edge[->] node[above] {$a$} (B)
(B) edge[->] node[above] {$a$} (C)
    edge[->, loop above] node[above] {$b$} (B)
(C) edge[->, loop above] node[above] {$a, b$} (C);
\end{tikzpicture}
}
\end{center}
\caption{$\DFA$ recognizing $ab^*a(a+b)^{\omega}$ under $\mathbb{L}'$.}
\label{fig:ab^*a(a+b)^{omega}}
\end{figure}
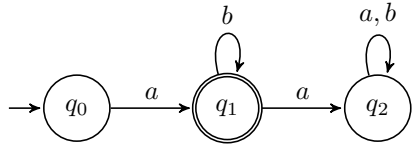

For the sake of argument, suppose that there exists a $\CFA$ $\aut = (\Sigma,Q,T,q_0,\calF)$ such that $\L = \Lcond{\mathbb{L}'}{\aut}$. By Lemma~\ref{1statof}, we can assume that $\calF=\{\{f\}\}$ with $f\in Q$. Let $n = \card{Q}$. Since $ab^{n}a^{\omega} \in \L$ there exists an initial path $p$
and an integer $m$ such that $p_k \neq f$ for all $k > m$.
Since $Q$ is finite, $p_i = p_j$ for some $1 \leq i < j \leq n + 1$ and 
$$(p_0, a, p_1), (p_1, b, p_2),  \ldots (p_i, b, p_{i+1}),\ldots  (p_{j-1}, b, p_j=p_{i})\ldots$$
is an initial path with label $ab^{\omega} \not\in \calL$. Then, $p_h = f$ for some integer $h$ with $i \leq h \leq j$, and, since $\aut$ is complete, there exists an initial path $p' = (p'_r,b,p'_{r+1})_{r\in\N}$ with label $b^{\omega} \not\in \L$.  Finally, $p'_l = f$ for some integer $l$ and
$$(p'_0, b, p'_1), \ldots (p'_{l-1}, b, p'_l=f=p_h),  \ldots (p_n, b, p_{n+1}),(p_{n+1}, a, p_{n+2})\ldots$$
is an accepting initial path with label $b^{l+n-h+1}a^{\omega} \not\in \calL$ and this is a contradiction.
\qed
\end{proof}

In a similar way as in Lemma \ref{lem:ab^*a(a+b)^{omega}}, one can prove the following.

\begin{lemma}
\label{lem:b^*ab^*a(a+b)^{omega}}
$b^*ab^*a(a+b)^{\omega} \not\in \FA(\mathbb{L}')$.
\end{lemma}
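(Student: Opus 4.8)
The plan is to mirror the argument of Lemma~\ref{lem:ab^*a(a+b)^{omega}}, the one new ingredient being that the \emph{leading} $b^{*}$ block of $\L=b^{*}ab^{*}a(a+b)^{\omega}$ will play the role that completeness played there. First I would note that $\L$ is exactly the set of infinite words containing \emph{at least two} occurrences of $a$. Assume, for contradiction, that $\L=\Lcond{\mathbb{L}'}{\aut}$ for some $\FA$ $\aut=(\Sigma,Q,T,q_0,\calF)$. By Lemma~\ref{1statof} we may take $\calF=\set{\set{f}}$ for a single state $f$, so that an initial path is accepting under $\mathbb{L}'$ exactly when $f$ occurs only finitely often on it. Put $n=\card{Q}$ and consider the witness $w=b^{n}ab^{n}ab^{\omega}\in\L$. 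Since $w\in\L$, there is an accepting initial path $p=(p_i,x_i,p_{i+1})_{i\in\N}$ on $w$, and $f$ occurs only finitely often on $p$.

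The heart of the proof is two pumping arguments that locate $f$ in \emph{both} $b$-blocks of $p$. Among the $n+1$ states read along the first block the pigeonhole principle yields a repetition $p_s=p_t$ with $0\le s<t\le n$; iterating this $b$-loop forever gives an initial path labelled $b^{\omega}\notin\L$, which must be rejecting, so $f$ lies on the loop and $f=p_g$ for some $g\le n-1$. In particular $f$ is reachable from $q_0$ by the pure $b$-word $b^{g}$, that is, \emph{with no $a$ at all}. Running the same argument on the second block produces a repetition among $p_{n+1},\dots,p_{2n+1}$; iterating \emph{that} loop gives an initial path labelled $b^{n}ab^{\omega}\notin\L$ (a single $a$), again rejecting, so $f$ also meets this loop, say $f=p_{g'}$ with $n<g'\le 2n$.

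The final step splices these two occurrences of the \emph{same} state $f$: it sits at position $g$ (after zero $a$'s) and at position $g'$ (inside the second block, after one $a$). I would follow $p$ from $q_0=p_0$ up to $p_g=f$, reading $b^{g}$, and then resume $p$ from $p_{g'}=f$ onwards, taking the transitions $p_{g'}\to p_{g'+1}\to\cdots$. This is a legitimate initial path, because $p_g$ and $p_{g'}$ are the one state $f$; its label is $b^{g}b^{\,2n-g'+1}ab^{\omega}$, a word of shape $b^{*}ab^{\omega}$ carrying exactly one $a$, hence outside $\L$. Yet the new path is a finite prefix followed by a suffix of $p$, so $f$ still occurs only finitely often on it and the path is accepting; this places a one-$a$ word into $\Lcond{\mathbb{L}'}{\aut}=\L$, the desired contradiction. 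I expect the only delicate point to be bookkeeping the $a$-count: unlike Lemma~\ref{lem:ab^*a(a+b)^{omega}}, where the bad word only had to start with $b$, here it must carry \emph{exactly one} $a$, which is precisely why one reaches $f$ through the first block (zero $a$'s) but resumes $p$ at its second-block occurrence (one $a$, followed by the accepting $b^{\omega}$ tail).
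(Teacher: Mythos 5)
Your proof is correct and follows essentially the same route as the paper, which proves this lemma only by remarking that it goes ``in a similar way as in Lemma~\ref{lem:ab^*a(a+b)^{omega}}'': you locate $f$ on a loop of each $b$-block by pumping against the rejecting words $b^{\omega}$ and $b^{n}ab^{\omega}$, then splice the two occurrences of $f$ to accept a word with a single $a$, exactly the intended adaptation in which pumping the leading $b^{*}$-block replaces the completeness hypothesis used in that earlier lemma. Apart from harmless off-by-one bounds on $g$ and $g'$, the bookkeeping (label $b^{g+2n+1-g'}ab^{\omega}$, finitely many visits to $f$ on the spliced path) is sound.
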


\begin{lemma}
\label{lem:(a+b)^{*}ba^{omega}}
$(a+b)^{*}ba^{\omega} \in \CFA(\mathbb{L}') \smallsetminus \DFA(\mathbb{L'})$.
\end{lemma}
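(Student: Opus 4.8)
The plan is to establish the two memberships separately: a direct construction for $\L=(a+b)^*ba^\omega \in \CFA(\mathbb{L}')$, and a structural impossibility argument for $\L \notin \DFA(\mathbb{L}')$. Recall that under $\mathbb{L}'$ a path $p$ is accepting as soon as some $F \in \calF$ satisfies $F \not\subseteq \inf_{\aut}(p)$, and observe that $\L$ is exactly the set of words containing at least one $b$ and only finitely many $b$'s.

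For the membership $\L \in \CFA(\mathbb{L}')$, I would exhibit the two-state complete automaton $\aut$ on $\set{q_0,q_1}$ with transitions $(q_0,a,q_0)$, $(q_0,b,q_0)$, $(q_0,b,q_1)$, $(q_1,a,q_1)$, $(q_1,b,q_0)$ and table $\calF=\set{\set{q_0}}$, so that under $\mathbb{L}'$ a path is accepting iff it visits $q_0$ only finitely often. The verification splits into three cases: if $w$ has a last $b$, the path jumping to $q_1$ on that letter and then looping on the trailing $a^\omega$ visits $q_0$ finitely; if $w=a^\omega$ then $q_1$ is unreachable, so every path stays in $q_0$; and if $w$ has infinitely many $b$'s then every visit to $q_1$ is followed by a later $b$ sending the path back to $q_0$, so $q_0$ recurs infinitely on every path. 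Hence $\Lcond{\mathbb{L}'}{\aut}=\L$.

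For $\L \notin \DFA(\mathbb{L}')$, suppose a $\DFA$ recognizes $\L$ under $\mathbb{L}'$; by Lemma~\ref{1statof} I may assume $\calF=\set{\set{f}}$, so a word is accepted iff its unique run visits $f$ finitely often. Since $uba^\omega \in \L$ for every finite $u$, all these runs exist, so the transition function is total on the reachable part and every infinite word admits a run. Because $a^\omega \notin \L$, its run visits $f$ infinitely, which forces $f$ to lie on the $a$-cycle reached from $q_0$; consequently reading $a^\omega$ from $f$ returns to $f$ infinitely often. I would then call a state $g$ \emph{$a$-safe} when the run of $a^\omega$ from $g$ never meets $f$, and show that every state of the form $\delta(q_0,ub)$ is $a$-safe: as $uba^\omega\in\L$, its tail meets $f$ finitely, and if reading $a$'s from that state ever reached $f$ then, $f$ being recurrent under $a$, the tail would meet $f$ infinitely — a contradiction. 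Reading further $a$'s plainly preserves $a$-safety, and each subsequent $b$ lands again on a state of the form $\delta(q_0,u'b)$, hence $a$-safe.

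The contradiction then follows by applying this to any word $w'$ with infinitely many $b$'s (its run exists by totality): after its first $b$ the run stays forever in $a$-safe states and therefore never meets $f$ again, so $f$ occurs only along the finite initial $a$-block and the run is accepting — whence $w'\in\L$, contradicting that $w'$ has infinitely many $b$'s. The delicate point, and the place where determinism is indispensable, is precisely the $a$-safety claim: ``$f$ visited finitely'' does not a priori pin down \emph{where} $f$ is met, so one must exclude a transient passage through $f$ inside each $a$-block, which is exactly what the $a$-safety argument achieves. By contrast the $\CFA$ above has no deterministic analogue, because there the letter $b$ forces the run out of the safe region $q_1$ back into $q_0$ — behaviour a deterministic machine can neither impose nor evade.
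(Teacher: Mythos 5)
Your proof is correct. The membership half is essentially identical to the paper's: your two-state automaton (with $\calF=\set{\set{q_0}}$, a nondeterministic $b$-transition out of $q_0$, and $b$ forcing $q_1$ back to $q_0$) is exactly the $\CFA$ of the paper's Figure~\ref{fig:(a+b)^{*}ba^{omega}}; you merely supply the three-case verification that the paper leaves as ``easy to see''. The non-membership half, however, takes a genuinely different route. The paper runs a double pumping argument: from the accepting run on $a^{n}ba^{\omega}$ it extracts an $a$-cycle which, because $a^{\omega}\notin\L$, must contain $f$; from the accepting run on $b^{n+1}a^{\omega}$ it likewise places $f$ on a $b$-cycle; it then splices the $b$-path into $f$ and wraps around the $a$-cycle to exhibit a single word $b^{k}a^{\omega}\in\L$ whose unique run visits $f$ infinitely often and is thus rejected. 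You instead first observe that acceptance of all $uba^{\omega}$ forces the transition function to be total on the reachable part (a point the paper never needs, since it builds its paths explicitly by pumping), use only $a^{\omega}\notin\L$ to make $f$ recurrent under $a$, and then propagate an ``$a$-safety'' invariant through every state of the form $\delta(q_0,u'b)$ to conclude that \emph{every} word with infinitely many $b$'s would be accepted --- a contradiction in the opposite direction (a word outside $\L$ accepted, rather than a word of $\L$ rejected). Your $a$-safety claim is sound, including the delicate point you flag: since $f$ lies on the $a$-cycle reached from $q_0$, any transient visit to $f$ inside an $a$-block forces infinitely many returns to $f$, so $a$-safe states are in particular distinct from $f$ and remain so under $a$'s and $b$'s. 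What each approach buys: the paper's pumping is self-contained, with explicit pigeonhole bounds and no totality lemma, while yours is more structural, avoids the splicing bookkeeping, and isolates where determinism is truly used (uniqueness of runs making ``rejected'' equal ``the run visits $f$ infinitely''). Both correctly invoke Lemma~\ref{1statof} to normalize $\calF=\set{\set{f}}$.
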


\begin{proof}
Let $\L$ denote the language $(a+b)^{*}ba^{\omega}$. Consider the $\CFA$ $\aut'$ in Figure~\ref{fig:(a+b)^{*}ba^{omega}}. It is easy to see that $\L = \Lcond{\mathbb{L}'}{\aut'}$.

\begin{figure}
\begin{center}
\scalebox{1.0}{
\begin{tikzpicture}[semithick, shorten >=1pt, node distance=2cm, >=stealth',initial text=]

\node[state, initial,accepting] (A) {$q_0$};
\node[state] (B) [right of=A]{$q_1$};

\path
(A) edge[->, loop above] node[above] {$a,b$} (A)
    edge[->, bend left] node[above] {$b$} (B)
(B) edge[->, bend left] node[below] {$b$} (A)
    edge[->, loop above] node[above] {$a$} (B);
\end{tikzpicture}
}
\end{center}
\caption{$\CFA$ recognizing $(a+b)^{*}ba^{\omega}$ under $\mathbb{L}'$.}
\label{fig:(a+b)^{*}ba^{omega}}
\end{figure}
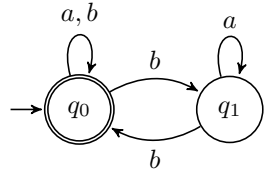

For a sake of argument, suppose that there exists a $\DFA$ $\aut = (\Sigma,Q,T,q_0,\calF)$ such that $\L = \Lcond{\mathbb{L}'}{\aut}$. By Lemma~\ref{1statof}, we can assume that $\calF=\{\{f\}\}$ with $f\in Q$. Let $n = \card{Q}$. Since $a^{n}ba^{\omega} \in \L$, there exists an accepting initial path
$$(p_0, a, p_1), \ldots (p_{n-1}, a, p_n),  (p_n, b, p_{n+1}),  (p_{n+1}, a, p_{n+2})\ldots$$ with label $a^{n}ba^{\omega}$. 
Since $Q$ is finite, $p_i = p_j$ for some $0 \leq i < j \leq n$ and 
$(p_0, a, p_1), \ldots (p_{j-1}, a, p_j),  (p_j=p_i, a, p_{i+1}),\ldots $
is an initial path with label $a^{\omega} \not\in \calL$. 
Then, $p_h = f$ for some integer $h$ with $i \leq h \leq j$.
Since the word  $b^{n+1}a^{\omega}$ also belongs to $\calL$, there exists an accepting initial path
$$(p'_0, b, p'_1), \ldots (p'_{n-1}, b, p'_n),  (p'_n, b, p'_{n+1}), (p_{n+1}, a, p_{n+2})\ldots$$ with label $b^{n+1}a^{\omega}$. Again, since $Q$ is finite, $p'_{i'} = p'_{j'}$ for some $1 \leq i' < j' \leq n + 1$ and the sequence $(p'_0, b, p'_1), \ldots (p_{j'-1}, b, p_{j'}),  (p_{j'}=p_{i'}, b, p_{i'+1}),\ldots $
is an initial path with label $b^{\omega} \not\in \calL$. This means that $p'_{k} = f$ for some integer $k$, $i' \leq k \leq j'$. Finally, 
$$(p'_0, b, p'_1), \ldots (p'_{k-1}, b, p'_k=f=p_h),  (p_{h}, a, p_{h+1}),\ldots (p_{j-1}, a, p_{j}=p_i)\ldots$$  
is a non-accepting initial path with label $b^{k}a^{\omega}$. Since $\aut$ is deterministic, there is no other path with label $b^{k}a^{\omega}$ and $b^{k}a^{\omega} \not\in \Lcond{\mathbb{L}'}{\A}$, and this is a contradiction.\qed
\end{proof}

\begin{proposition}
$\FA(\mathbb{L}') \subsetneq \FsR$.
\end{proposition}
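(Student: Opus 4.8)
The plan is to prove the two inclusions separately: first $\FA(\mathbb{L}') \subseteq \FsR$, and then to exhibit a witness in $\FsR \smallsetminus \FA(\mathbb{L}')$ in order to get strictness. Throughout I would invoke Lemma~\ref{lem:L-ninf-exists} to replace $\mathbb{L}'$ by $(\ninf,\sqcap)$: for any $\aut=(\Sigma,Q,T,q_0,\calF)$ I work with $\Lcond{(\ninf,\sqcap)}{\aut}$ and set $G=\bigcup_{F\in\calF}F$, so that an initial path $p$ is accepting iff some $q\in G$ lies in $\ninf_{\aut}(p)$, i.e.\ appears finitely often or never.

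For the inclusion, rationality of $\Lcond{(\ninf,\sqcap)}{\aut}$ is already granted by Proposition~\ref{prop:all_rat}, so it only remains to show that the language is \Fs. I would first record
\[
\Lcond{(\ninf,\sqcap)}{\aut}=\bigcup_{q\in G} L_q,\qquad L_q=\set{w : \text{some initial path labelled } w \text{ has } q\in\ninf_{\aut}(p)},
\]
which holds because the accepting condition is an existential over $q\in G$ that commutes with the existential over paths. As \Fs is closed under finite unions, it suffices to prove each $L_q\in\Fs$. Writing "$q$ occurs only before position $N$", I decompose $L_q=\bigcup_{N\in\N}B_N$, where $B_N$ is the set of words carrying an initial path whose states after step $N$ avoid $q$, and then argue each $B_N$ is closed: a word $w$ lies in $B_N$ iff every finite prefix $w[0..m)$ labels a finite initial path avoiding $q$ after step $N$, the point being that K\"onig's lemma turns the family of compatible finite paths (a finitely branching infinite tree) into an actual infinite path. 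Each prefix condition is clopen, so $B_N$, an intersection of clopen sets, is closed; hence $L_q\in\Fs$ and $\Lcond{(\ninf,\sqcap)}{\aut}\in\Fs\cap\RAT=\FsR$.

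For strictness I would reuse Lemma~\ref{lem:b^*ab^*a(a+b)^{omega}}, which already gives $\L=b^*ab^*a(a+b)^{\omega}\notin\FA(\mathbb{L}')$, and I only need to check that $\L\in\FsR$. The language $\L$ is exactly the set of $\omega$-words containing at least two occurrences of $a$, hence it is open (once two $a$'s have been read, every continuation stays in $\L$); being open it is in particular \Fs, and it is plainly \Nrational, so $\L\in\FsR$. Combined with the inclusion, this yields $\FA(\mathbb{L}')\subsetneq\FsR$.

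The main obstacle is the \Fs half of the inclusion: because of nondeterminism, $L_q$ is a projection (an existential over infinite paths), so I cannot simply read the topological complexity off a single deterministic run. The decomposition $L_q=\bigcup_N B_N$ isolates the genuinely topological content, and the one non-routine step is the K\"onig's-lemma argument showing each $B_N$ is closed despite being defined by an existential over infinite paths; everything else — rationality, the choice of witness, and its openness — is either cited or immediate.
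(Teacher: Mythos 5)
Your proof is correct, but the inclusion half takes a genuinely different route from the paper's. The paper first invokes Lemma~\ref{1statof} to normalize the acceptance table to $\calF=\set{\set{f}}$, observes that $\mathbb{L}'$-acceptance then means $f\notin\inf_{\aut}(p)$, i.e.\ $(\inf,\subseteq)$-acceptance for the table $\set{Q\smallsetminus\set{f}}$, and concludes $\FA(\mathbb{L}')\subseteq\FA(\inf,\subseteq)=\FsR$ by citing Landweber's known characterization from Figure~\ref{fig:hierarchy-before}; strictness comes from Lemma~\ref{lem:b^*ab^*a(a+b)^{omega}}, exactly as in your proposal (the paper leaves the witness's membership in $\FsR$ implicit, while you verify it via openness --- both fine, since $\GR\subseteq\FsR$). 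You instead pass to $(\ninf,\sqcap)$ via Lemma~\ref{lem:L-ninf-exists}, which is equally legitimate, and prove the topological bound from scratch: the decomposition $\Lcond{(\ninf,\sqcap)}{\aut}=\bigcup_{q\in G}\bigcup_{N}B_N$ is valid because the two existentials commute, and your K\"onig's-lemma argument that each $B_N$ is closed is sound --- the tree of finite initial paths labelling prefixes of $w$ and avoiding $q$ beyond position~$N$ is truncation-closed, finitely branching, and infinite exactly when all the clopen prefix conditions hold. In effect you have inlined the standard proof of the ``known fact'' $\FA(\inf,\subseteq)\subseteq\Fs$ that the paper simply cites. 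The paper's route is shorter and reuses its reduction machinery; yours is self-contained and makes the topological content explicit, at the cost of reproving a classical result. A small bonus of your route: it needs no normalization of the table, since $(\ninf,\sqcap)$ handles arbitrary $\calF$ directly through $G=\bigcup_{F\in\calF}F$, whereas the paper's translation to $(\inf,\subseteq)$ is cleanest only after the reduction of Lemma~\ref{1statof}.
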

\begin{proof}
For any $\FA$ $\A = (\Sigma,Q,T,q_0,\calF)$, by Lemma~\ref{1statof} we can assume that $\calF=\{\{f\}\}$. Define the $\FA$ $\A' = (\Sigma,Q,T,q_0,\set{Q \smallsetminus \set{f}})$. Then, $\Lcond{\mathbb{L}'}{\A} = \Lcond{(\inf, \subseteq)}{\A'}$ and, so, $\FA(\mathbb{L}')\subseteq \FA(\inf, \subseteq)$. Moreover, by the know fact $\FA(\inf, \subseteq)=\FsR$, we obtain that $\Lcond{(\inf, \subseteq)}{\A'}\in \FsR$. 
Lemma~\ref{lem:b^*ab^*a(a+b)^{omega}} gives the strict inclusion.\qed
\end{proof}

\begin{proposition}
$\DFA(\mathbb{L}')$ and $\CFA(\mathbb{L}')$ are incomparable.
\end{proposition}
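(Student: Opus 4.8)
The plan is to derive incomparability directly from the two separating examples already constructed, since by definition two classes are incomparable precisely when neither is contained in the other. Concretely, I would first invoke Lemma~\ref{lem:ab^*a(a+b)^{omega}}, which exhibits the language $ab^*a(a+b)^{\omega}$ lying in $\DFA(\mathbb{L}')$ but not in $\CFA(\mathbb{L}')$; this single witness immediately establishes $\DFA(\mathbb{L}') \not\subseteq \CFA(\mathbb{L}')$. Symmetrically, Lemma~\ref{lem:(a+b)^{*}ba^{omega}} exhibits $(a+b)^{*}ba^{\omega}$, which lies in $\CFA(\mathbb{L}')$ but not in $\DFA(\mathbb{L}')$, and hence $\CFA(\mathbb{L}') \not\subseteq \DFA(\mathbb{L}')$.

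Combining the two non-inclusions yields the claim: no containment holds in either direction, so $\DFA(\mathbb{L}')$ and $\CFA(\mathbb{L}')$ are incomparable. In this proof there is no genuine obstacle left to overcome; all the difficulty has already been absorbed into the two preceding lemmas, whose separating arguments are the real content — a pumping-style analysis that exploits, respectively, completeness and determinism to force an unwanted accepting path and thereby contradict membership in the other class. The only point requiring a word of care is that each witness must be shown to sit in exactly one of the two classes while lying strictly outside the other, which is exactly what the set-difference statements $\DFA(\mathbb{L}') \smallsetminus \CFA(\mathbb{L}')$ and $\CFA(\mathbb{L}') \smallsetminus \DFA(\mathbb{L}')$ in those lemmas assert.
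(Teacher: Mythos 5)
Your proof is correct and is exactly the paper's own argument: the proposition follows immediately by combining Lemma~\ref{lem:ab^*a(a+b)^{omega}} (witnessing $\DFA(\mathbb{L}') \not\subseteq \CFA(\mathbb{L}')$) with Lemma~\ref{lem:(a+b)^{*}ba^{omega}} (witnessing $\CFA(\mathbb{L}') \not\subseteq \DFA(\mathbb{L}')$). Your added remarks about where the real work lives are accurate but not needed for the proof itself.
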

\begin{proof}
It is an immediate consequence of  Lemmata~\ref{lem:ab^*a(a+b)^{omega}} and~\ref{lem:(a+b)^{*}ba^{omega}}.\qed
\end{proof}
\begin{proposition}
\label{L'incomparable}
The following statements are true:
\begin{enumerate}
\item
$\FA(\mathbb{L}')$ and $\GdR$ are incomparable,
\item
$\FA(\mathbb{L}')$ and $\GR$ are incomparable.
\end{enumerate}
\end{proposition}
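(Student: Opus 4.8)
The plan is to settle both incomparabilities simultaneously by exhibiting two witness languages and reading off their topological type. Set the witnesses to be $(a+b)^{*}a^{\omega}$ and $b^{*}ab^{*}a(a+b)^{\omega}$, both over $\Sigma=\set{a,b}$. The first will separate $\FA(\mathbb{L}')$ from $\GdR$ and $\GR$ from above, and the second will separate them from below; since $\GR\subseteq\GdR$, each witness does double duty for the two statements.

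First I would dispatch the direction $\GR\not\subseteq\FA(\mathbb{L}')$ and $\GdR\not\subseteq\FA(\mathbb{L}')$. The language $b^{*}ab^{*}a(a+b)^{\omega}$ is exactly the set of $\omega$-words containing at least two occurrences of $a$: any word with a prefix in $b^{*}ab^{*}a$ has two $a$'s, and conversely the prefix of any word with two $a$'s, taken up to and including the second $a$, lies in $b^{*}ab^{*}a$. As membership is decided by a finite prefix, this language is open, and being rational it lies in $\GR\subseteq\GdR$. By Lemma~\ref{lem:b^*ab^*a(a+b)^{omega}} it is not in $\FA(\mathbb{L}')$, so $\GR\not\subseteq\FA(\mathbb{L}')$ and hence also $\GdR\not\subseteq\FA(\mathbb{L}')$.

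For the reverse direction I would use $(a+b)^{*}a^{\omega}$. By Lemma~\ref{lem:(a+b)^*a^{omega}} it lies in $\CDFA(\mathbb{L}')\subseteq\FA(\mathbb{L}')$, so it suffices to show it is not in $\GdR$, which (as $\GR\subseteq\GdR$) forces it out of $\GR$ as well. Since the language is rational, the point is to show it is not in $\Gd$. Note it equals the set of $\omega$-words with only finitely many $b$'s, whose complement $\bigcap_{n}\set{w\in\aN : \exists k\geq n,\ w_k=b}$ is a $\Gd$ set.

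The main obstacle is precisely this non-membership $(a+b)^{*}a^{\omega}\notin\Gd$, which I would establish by a Baire-category argument in $\aN$ (a complete metric space, hence a Baire space). Both $(a+b)^{*}a^{\omega}$ and its complement are dense, since any finite prefix can be completed by $a^{\omega}$ to reach the former and by $b^{\omega}$ to reach the latter. The complement, being a dense $\Gd$, is comeager; were $(a+b)^{*}a^{\omega}$ also a $\Gd$, it too would be a dense $\Gd$ and hence comeager, but two disjoint comeager sets cannot coexist in a nonempty Baire space, a contradiction. Therefore $(a+b)^{*}a^{\omega}\in\FA(\mathbb{L}')\smallsetminus\GdR$ and $(a+b)^{*}a^{\omega}\in\FA(\mathbb{L}')\smallsetminus\GR$, giving $\FA(\mathbb{L}')\not\subseteq\GdR$ and $\FA(\mathbb{L}')\not\subseteq\GR$. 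Combining the two directions yields both incomparability claims.
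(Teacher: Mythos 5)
Your proof is correct and follows essentially the same route as the paper: both use the witness $(a+b)^{*}a^{\omega}\in\CDFA(\mathbb{L}')\smallsetminus\GdR$ (Lemma~\ref{lem:(a+b)^*a^{omega}}) and the witness $b^{*}ab^{*}a(a+b)^{\omega}\in\GR\smallsetminus\FA(\mathbb{L}')$ (Lemma~\ref{lem:b^*ab^*a(a+b)^{omega}}), together with $\GR\subseteq\GdR$. The only difference is that you spell out two facts the paper treats as known --- that $b^{*}ab^{*}a(a+b)^{\omega}$ is open and that $(a+b)^{*}a^{\omega}\notin\Gd$, the latter via a clean Baire-category argument --- which is a welcome addition but not a different method.
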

\begin{proof}
By Lemma~\ref{lem:(a+b)^*a^{omega}}, $(a+b)^*a^{\omega}\in \CDFA(\mathbb{L}') \smallsetminus \GdR$ and, by Lemma~\ref{lem:b^*ab^*a(a+b)^{omega}}, $b^*ab^*a(a+b)^{\omega} \in \GR \smallsetminus \FA(\mathbb{L}')$. To conclude, recall that $\GR\subseteq \GdR$.\qed
\end{proof}

\begin{proposition}
$\CDFA(\mathbb{L}')$ and $\DFA(\fin, \sqcap)$ are incomparable.
\end{proposition}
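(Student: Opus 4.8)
The statement packages two non-inclusions, and I would prove each by exhibiting a separating language. For $\DFA(\fin,\sqcap)\not\subseteq\CDFA(\mathbb{L}')$ I would recycle $\L=(a+b)^{*}ba^{\omega}$. Reading the $\CDFA$ of Figure~\ref{fig:(a+b)^*a^omega} under $(\fin,\sqcap)$ with the single final set $\set{q_1}$, a run is accepting precisely when $q_1$ is entered at least once but only finitely often, that is, exactly on the words carrying at least one and only finitely many $b$'s; hence $\L\in\CDFA(\fin,\sqcap)\subseteq\DFA(\fin,\sqcap)$. On the other hand Lemma~\ref{lem:(a+b)^{*}ba^{omega}} gives $\L\notin\DFA(\mathbb{L}')$, and since every $\CDFA$ is in particular a $\DFA$ we have $\CDFA(\mathbb{L}')\subseteq\DFA(\mathbb{L}')$, so $\L\notin\CDFA(\mathbb{L}')$. (The open language $b^{*}ab^{*}a(a+b)^{\omega}\in\GR\subseteq\DFA(\fin,\sqcap)$, excluded from $\FA(\mathbb{L}')\supseteq\CDFA(\mathbb{L}')$ by Lemma~\ref{lem:b^*ab^*a(a+b)^{omega}}, would serve just as well.)

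The reverse inclusion $\CDFA(\mathbb{L}')\not\subseteq\DFA(\fin,\sqcap)$ is the hard half, and here the languages already in play are of no use: the only one we know to lie in $\CDFA(\mathbb{L}')$, namely $(a+b)^{*}a^{\omega}$, actually belongs to $\DFA(\fin,\sqcap)$ too (a three-state $\CDFA$ whose final state is entered once at the first step and once after each subsequent $b$ recognizes it under $(\fin,\sqcap)$). I must therefore produce a fresh witness $\L_2\in\CDFA(\mathbb{L}')\smallsetminus\DFA(\fin,\sqcap)$. The asymmetry I would exploit is the following: after the normalization of Lemma~\ref{1statof} we may assume a single distinguished state $f$, and $\mathbb{L}'$ accepts a run as soon as $f$ is visited finitely often \emph{or never}, whereas $(\fin,\sqcap)$ insists that some final state be visited at least once. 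I would take $\L_2=\Lcond{\mathbb{L}'}{\aut}$ for a complete deterministic $\aut$ engineered so that the runs accepted by \emph{avoiding} $f$ settle into cycles that are not separated from the $f$-recurrent part, i.e. so that every finite piece of a run that a $(\fin,\sqcap)$-automaton could use as positive evidence of avoidance can also be prolonged into a run visiting $f$ infinitely often. Membership $\L_2\in\CDFA(\mathbb{L}')$ is then immediate from the definition of $\aut$.

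The crux, which I expect to be the genuine obstacle, is to prove $\L_2\notin\DFA(\fin,\sqcap)$. I would argue by contradiction exactly as in Lemmas~\ref{lem:ab^*a(a+b)^{omega}} and~\ref{lem:(a+b)^{*}ba^{omega}}: suppose a $\DFA$ $\aut'$ with final set $F'$ satisfied $\L_2=\Lcond{(\fin,\sqcap)}{\aut'}$, pick an accepted word whose accepting witness $g\in F'$ is last seen at some position $N$, and use that $\aut'$ has finitely many states to locate, before $N$, a repeated state and hence a cycle. Splicing that cycle reproduces the prefix run certifying the visit to $g$ while steering the continuation toward a word of the complement of $\L_2$; since $\aut'$ is deterministic the spliced word has a unique run, which is then forced to be at once accepting and non-accepting. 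The delicate point, and the reason the design of $\aut$ must be carried out carefully, is to guarantee that this splicing is available uniformly for \emph{every} hypothetical $\aut'$ and $F'$: one has to ensure that the bounded transient evidence that $(\fin,\sqcap)$ is obliged to collect on the $f$-avoiding words can always be pumped into an $f$-recurrent, hence rejected, word. With both halves established, combining them yields the asserted incomparability.
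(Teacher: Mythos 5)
Your first half is sound and self-contained. Reading the automaton of Figure~\ref{fig:(a+b)^*a^omega} under $(\fin,\sqcap)$ with table $\set{\set{q_1}}$ works exactly as you say: $q_1$ is entered precisely after each $b$, so a run is accepting iff the word contains at least one but finitely many $b$'s, giving $(a+b)^{*}ba^{\omega} \in \CDFA(\fin,\sqcap) \subseteq \DFA(\fin,\sqcap)$, while Lemma~\ref{lem:(a+b)^{*}ba^{omega}} excludes this language from $\DFA(\mathbb{L}') \supseteq \CDFA(\mathbb{L}')$. This is a legitimate variant of the paper's own argument for this direction, which instead uses $b^{*}ab^{*}a(a+b)^{\omega} \in \GR \subseteq \DFA(\fin,\sqcap)$ together with Proposition~\ref{L'incomparable}; your parenthetical alternative is exactly that. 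Your observation that $(a+b)^{*}a^{\omega}$ is useless as a witness for the converse, because a $(\fin,\sqcap)$-automaton can manufacture the mandatory visit artificially at the first step and then revisit the final state after every $b$, is also correct and non-obvious.

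The second half, however, is a genuine gap, and it is the entire content of the hard direction: you never exhibit the witness $\L_2 \in \CDFA(\mathbb{L}') \smallsetminus \DFA(\fin,\sqcap)$. What you provide is a design brief --- choose $\aut$ so that $f$-avoiding runs are ``not separated'' from the $f$-recurrent part, then splice cycles in any hypothetical deterministic $(\fin,\sqcap)$-recognizer --- together with an explicit admission that making the splicing work ``uniformly for every hypothetical $\aut'$ and $F'$'' is the delicate point. That delicate point is precisely the statement to be proved; as written, nothing guarantees such an $\aut$ exists, and your own first-half computation illustrates why the construction is not routine: the obvious candidate collapses because $(\fin,\sqcap)$ can satisfy its ``visited at least once'' obligation for free. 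The paper itself does not reprove this direction: it invokes the result of Litovsky and Staiger~\cite{Litovsky1997} that $\CDFA(\mathbb{L}') \not\subseteq \DFA(\fin,\sqcap)$ (recall $\CDFA(\mathbb{L}') = \CDFA(\ninf,\sqcap)$ by Lemma~\ref{lem:L-ninf-exists}). So your proposal establishes $\DFA(\fin,\sqcap) \not\subseteq \CDFA(\mathbb{L}')$, but the converse non-inclusion --- and hence the incomparability --- remains unproven: you need either a concrete language with a completed pumping argument, or the citation the paper relies on.
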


\begin{proof}
By Proposition~\ref{L'incomparable} and by the known fact $G^{R}\subseteq \DFA(\fin, \sqcap)$, it follows that $\DFA(\fin, \sqcap) \not\subseteq \CDFA(\mathbb{L}')$. Furthermore, 
it has been shown in \cite{Litovsky1997} that $\CDFA(\mathbb{L}') \not\subseteq \DFA(\fin, \sqcap)$.\qed
\end{proof}

\section{Towards a characterization of $(\fin,=)$ and $(\fin,\subseteq)$.}

In this section we start studying the conditions $(\fin,=)$ and $(\fin,\subseteq)$.
Concerning $(\fin,=)$, Theorem \ref{th:(fin,=)=RAT} tells us
that,  in the non-deterministic case,  the class of recognized languages coincides with $\RAT$. In the deterministic case, either
it again coincides with $\RAT$ or it defines a completely new class 
(Proposition \ref{prop:partial-char-CDFA(fin,=)}).

\begin{proposition}
The following equality holds for $(\ninf,=)$:
\[
\CDFA(\ninf,=) = \DFA(\ninf,=) = \CFA(\ninf,=) = \FA(\ninf,=) = \RAT\enspace.
\]
\end{proposition}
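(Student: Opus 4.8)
The plan is to observe that the condition $(\ninf,=)$ is, up to complementing the acceptance table, literally the same as $(\inf,=)$, and then to read off the known equalities for $(\inf,=)$ from Figure~\ref{fig:hierarchy-before}. This is exactly the complementation trick already used in Lemma~\ref{lem:L-ninf-exists}, specialized to the relation $=$, where it behaves particularly well because equality is preserved under complementation on both sides.

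First I would record the definitional identity $\ninf_{\aut}(p) = Q \smallsetminus \inf_{\aut}(p)$, so that for any $F \subseteq Q$ we have $\ninf_{\aut}(p) = F$ if and only if $\inf_{\aut}(p) = Q \smallsetminus F$. Given an arbitrary $\FA$ $\aut = (\Sigma,Q,T,q_0,\calF)$, I would then define $\aut' = (\Sigma,Q,T,q_0,\calF')$ with $\calF' = \set{Q \smallsetminus F : F \in \calF}$, leaving the transitions and initial state untouched. An initial path $p$ is accepting in $\aut$ under $(\ninf,=)$ iff $\ninf_{\aut}(p) = F$ for some $F \in \calF$, iff $\inf_{\aut}(p) = Q \smallsetminus F \in \calF'$, iff $p$ is accepting in $\aut'$ under $(\inf,=)$; hence $\Lcond{(\ninf,=)}{\aut} = \Lcond{(\inf,=)}{\aut'}$. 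Since $\aut$ and $\aut'$ share the same transition relation, $\aut'$ is deterministic (resp. complete) exactly when $\aut$ is, and the assignment $\calF \mapsto \calF'$ is an involution on acceptance tables, so the correspondence is a bijection in both directions. Quantifying over all $\FA$ (resp. $\DFA$, $\CFA$, $\CDFA$) therefore yields
\[
\FA(\ninf,=) = \FA(\inf,=), \quad \DFA(\ninf,=) = \DFA(\inf,=), \quad \CFA(\ninf,=) = \CFA(\inf,=), \quad \CDFA(\ninf,=) = \CDFA(\inf,=)\enspace.
\]

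To conclude, I would invoke the known result, displayed in Figure~\ref{fig:hierarchy-before} (Muller's condition and the surrounding literature), that all four classes $\FA(\inf,=)$, $\DFA(\inf,=)$, $\CFA(\inf,=)$ and $\CDFA(\inf,=)$ coincide with $\RAT$. Combined with the bijective correspondence above, this gives the full chain of equalities at once. There is no genuine obstacle here: the only thing needing (trivial) verification is that complementing the table preserves the structural type of the automaton and is its own inverse, so that no expressive power is created or destroyed in either direction. Unlike the relations $\sqcap$ and $\subseteq$, which are interchanged under complementation and hence required the more delicate arguments of the previous section, the relation $=$ is self-dual, which is precisely why the four structural variants collapse uniformly onto $\RAT$.
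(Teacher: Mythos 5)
Your proof is correct and is essentially the paper's own argument: both complement the acceptance table via $\calF' = \set{Q \smallsetminus F : F \in \calF}$, use $\ninf_{\aut}(p) = Q \smallsetminus \inf_{\aut}(p)$ to get $\Lcond{(\ninf,=)}{\aut} = \Lcond{(\inf,=)}{\aut'}$ while preserving determinism and completeness, and then invoke the known Muller-condition equalities $\FA(\inf,=) = \DFA(\inf,=) = \CFA(\inf,=) = \CDFA(\inf,=) = \RAT$. Your added remarks on the involutive nature of the correspondence and the self-duality of $=$ are correct glosses on the same construction, not a different route.
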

\begin{proof}
For any $\FA$ $\A = (\Sigma,Q,T,q_0,\calF)$, let $\A' = (\Sigma,Q,T,q_0,
\{Q \smallsetminus F : F \in \calF\})$. Clearly, $\aut'$ is deterministic (resp. complete) if $\aut$ is deterministic (resp. complete).
It is not difficult to see that $\Lcond{(\ninf,=)}{\A} = \Lcond{(\inf, =)}{\A'}$ and $\Lcond{(\inf,=)}{\A} = \Lcond{(\ninf, =)}{\A'}$. Hence, it holds that $\FA(\ninf,=)=\FA(\inf,=)$, $\DFA(\ninf,=)=\DFA(\inf,=)$, $\CFA(\ninf,=)=\CFA(\inf,=)$, and $\CDFA(\ninf,=)=\CDFA(\inf,=)$. The known results on the language classes regarding $(\inf,=)$ conclude the proofs. \qed
\end{proof}

\begin{proposition}
The following equalities hold for $(\fin,\subseteq)$ and $(\fin,=)$:
\begin{align*}
\DFA(\fin,\subseteq) = \CDFA(\fin,\subseteq) \text{ and }\FA(\fin,\subseteq) = \CFA(\fin,\subseteq)\enspace,\\
\DFA(\fin,=) = \CDFA(\fin,=) \text{ and }\FA(\fin,=) = \CFA(\fin,=)\enspace.
\end{align*}
\end{proposition}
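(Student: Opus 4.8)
The plan is to prove each of the four equalities by completing an arbitrary automaton without changing its language. Since every $\CDFA$ (resp. $\CFA$) is in particular a $\DFA$ (resp. $\FA$), the inclusions $\CDFA(\fin,\textbf{R})\subseteq \DFA(\fin,\textbf{R})$ and $\CFA(\fin,\textbf{R})\subseteq \FA(\fin,\textbf{R})$ hold trivially for $\textbf{R}\in\set{\subseteq,=}$. So the entire content lies in the reverse inclusions, and I would obtain all of them from a single completion construction: given an arbitrary $\FA$ $\aut=(\Sigma,Q,T,q_0,\calF)$ I would build a complete automaton $\aut'$ with $\Lcond{(\fin,\subseteq)}{\aut'}=\Lcond{(\fin,\subseteq)}{\aut}$ and $\Lcond{(\fin,=)}{\aut'}=\Lcond{(\fin,=)}{\aut}$ at once, preserving determinism.

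First I would complete $\aut$ by adjoining two fresh states, a \emph{transient trap} $t$ and a \emph{sink} $\bot$. Concretely, set $\aut'=(\Sigma, Q\cup\set{t,\bot}, T', q_0, \calF')$ where $\calF'=\calF$ and $T'$ keeps every transition of $T$, sends each missing pair to the trap (add $(p,a,t)$ whenever no $q$ satisfies $(p,a,q)\in T$), and adds $(t,a,\bot)$ and $(\bot,a,\bot)$ for all $a\in\Sigma$. The acceptance table is left untouched, so since $\calF\subseteq\powerset{Q}$ no accepting set contains $t$ or $\bot$. A routine check shows $\aut'$ is complete, and deterministic whenever $\aut$ is, because the only new outgoing transitions from a state of $Q$ are added exactly where $\aut$ had none.

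Then I would compare paths. Every infinite path of $\aut$ is an infinite path of $\aut'$ confined to $Q$, and conversely every infinite path of $\aut'$ confined to $Q$ is a path of $\aut$; such paths have identical $\fin$-sets, which are subsets of $Q$, so they satisfy $(\fin,\subseteq)$ (resp. $(\fin,=)$) in $\aut'$ exactly when they do in $\aut$. The genuinely new paths are those leaving $Q$: the only way to enter $\bot$ is from $t$, and the only successor of $t$ is $\bot$, so any such path visits $t$ exactly once and then loops on $\bot$ forever. Hence $t\in\run(p)\setminus\inf(p)=\fin(p)$ while $\bot\in\inf(p)$. Since $t\notin F$ for every $F\in\calF$, we get $\fin(p)\not\subseteq F$ and a fortiori $\fin(p)\neq F$ for all $F$, so each such path is rejected under both conditions. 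This yields $\Lcond{(\fin,\textbf{R})}{\aut'}=\Lcond{(\fin,\textbf{R})}{\aut}$ for $\textbf{R}\in\set{\subseteq,=}$, giving all four equalities.

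The main obstacle, and the reason the construction uses two new states rather than one, is making the ``stuck'' runs rejecting. Under a $\fin$-condition a state is visible to the acceptance test only if it is visited finitely often, so a naive single self-looping sink would sit in $\inf(p)$ and be ignored, leaving the $\fin$-set of a stuck run equal to the set of genuine $Q$-states it traversed; that set can accidentally equal or be contained in some $F\in\calF$ (for instance when $Q\in\calF$), wrongly accepting words on which $\aut$ gets stuck. Interposing the transient marker $t$ forces $t\in\fin(p)$ on precisely the stuck runs, and keeping $t$ out of every accepting set turns this into a guaranteed rejection, which is exactly what makes the language invariant under completion.
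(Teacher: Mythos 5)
Your proof is correct and takes essentially the same route as the paper: the paper completes $\aut$ with exactly the same two-state gadget (a transient state $\bot$ feeding a self-looping sink $\bot'$, matching your $t$ and $\bot$), leaves $\calF$ unchanged, and rejects the stuck runs because the transient state necessarily lies in $\fin(p)$ but in no accepting set. Your closing paragraph explaining why a single self-looping sink would fail makes explicit the idea that is only implicit in the paper's construction, but the argument itself is the same.
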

\begin{proof}
For any $\FA$ $\A = (\Sigma,Q,T,q_0,\calF)$, let $\A' = (\Sigma,Q \cup \set{\bot, \bot'},T',q_0,\calF)$ where
\begin{align*}
T'  =  T & \cup \set{(p,a,\bot) : p \in Q, a \in \Sigma, \forall q \in Q, (p,a,q) \not\in T} \cup \set{(\bot, a, \bot') : a \in \Sigma}\\
&\cup \set{(\bot', a, \bot') : a \in \Sigma}
\end{align*}
The $\FA$ $\A'$ is complete. Moreover, $\A'$ is a $\DFA$ if and only if $\A$ is a $\DFA$. Furthermore, under both the conditions $(\fin,\subseteq)$ and $(\fin,=)$, every accepting path in $\aut$ is still an accepting path in $\aut'$, and if $p$ is an initial path in $\aut'$ which is not a path in $\aut$, then $\bot \in \fin(p)$. Since $\forall F \in \calF, \bot \not\in F$, the path $p$ is non accepting in $\aut'$. Therefore,  $\Lcond{(\fin, \subseteq)}{\aut} = \Lcond{(\fin, \subseteq)}{\aut'}$ and $\Lcond{(\fin, =)}{\aut} = \Lcond{(\fin, =)}{\aut'}$ and this concludes   the proof.
\end{proof}

\begin{proposition}[Staiger \cite{staiger1997}]\mbox{}\\
$\CDFA(\fin, \subseteq) \subseteq \CDFA(\fin, =)$ and $\CFA(\fin, \subseteq) \subseteq \CFA(\fin, =)$.
\end{proposition}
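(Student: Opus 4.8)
The plan is to leave the transition structure of the automaton untouched and only enlarge its acceptance table by closing it downward under inclusion, so that the existentially quantified ``$\subseteq$'' test becomes an existentially quantified ``$=$'' test on the larger table, without altering the recognized language. Concretely, given a $\CFA$ (resp. $\CDFA$) $\aut = (\Sigma, Q, T, q_0, \calF)$, I would set $\aut' = (\Sigma, Q, T, q_0, \calF')$ with
\[
\calF' = \set{G \subseteq Q : \exists F \in \calF,\ G \subseteq F},
\]
the downward closure of $\calF$ under inclusion. Since $\Sigma$, $Q$, $T$ and $q_0$ are unchanged, $\aut'$ is again a $\CFA$ (resp. $\CDFA$) --- completeness and determinism are properties of $T$ alone --- and $\aut$ and $\aut'$ possess exactly the same initial infinite paths, with $\fin_\aut(p) = \fin_{\aut'}(p)$ for every such path $p$. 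Note that the construction is insensitive to determinism and completeness, so a single argument covers both claims.

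The heart of the proof is then the chain of equivalences, valid for every initial path $p$:
\[
\big(\exists F \in \calF,\ \fin_\aut(p) \subseteq F\big)
\;\Longleftrightarrow\; \fin_{\aut'}(p) \in \calF'
\;\Longleftrightarrow\; \big(\exists G \in \calF',\ \fin_{\aut'}(p) = G\big).
\]
The first biconditional is immediate from the definition of $\calF'$, and the second holds simply by taking $G = \fin_{\aut'}(p)$. Since a word is accepted, under either condition, exactly when it labels some accepting initial path, ranging this equivalence over all initial paths labelled by a fixed word $x$ yields $\Lcond{(\fin, \subseteq)}{\aut} = \Lcond{(\fin, =)}{\aut'}$, which gives both inclusions of the statement at once.

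The only delicate point, and what I would regard as the (mild) obstacle, is soundness in the ``$=$'' direction: one must check that passing from $\calF$ to the larger table $\calF'$ introduces no new accepted words. This is exactly what the second biconditional guarantees: if $p$ is accepting for $\aut'$ under $(\fin, =)$, then $\fin_{\aut'}(p) = G$ for some $G \in \calF'$, and membership $G \in \calF'$ forces $G \subseteq F$ for some $F \in \calF$, whence $\fin_\aut(p) = G \subseteq F$, so $p$ was already accepting for $\aut$ under $(\fin, \subseteq)$. Finally I would dispose of the degenerate acceptance tables $\calF = \emptyset$ and $\calF = \set{\emptyset}$, for which $\calF' = \calF$ and there is nothing to prove.
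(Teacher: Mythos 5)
Your proof is correct and follows essentially the same route as the paper: the paper likewise keeps $(\Sigma,Q,T,q_0)$ unchanged and replaces the acceptance table by $\bigcup_{F \in \calF} \powerset{F}$, which is exactly your downward closure $\calF'$. The only difference is that you spell out the path-by-path equivalence (and the soundness of the ``$=$'' direction) that the paper leaves implicit.
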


\begin{proof}
For any $\CDFA$ (resp. $\CFA$) $\A = (\Sigma, Q, T, q_0, \calF)$, define the $\CDFA$ (resp. $\CFA$) $\A' = (\Sigma, Q, T, q_0, \bigcup_{F \in \calF} \{\powerset{F}\})$. Then, it follows that $\Lcond{(\fin, \subseteq)}{\A} = \Lcond{(\fin, =)}{\A'}$ and this concludes the proof.
\qed
\end{proof}

\begin{proposition}[Staiger \cite{staiger1997}]\mbox{}\\
$\FA(\fin, \sqcap) \subseteq \FA(\fin,=)$ and $\DFA(\fin, \sqcap) \subseteq \DFA(\fin,=)$.
\end{proposition}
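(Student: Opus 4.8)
The plan is to show the two inclusions by a single uniform automaton construction that converts a $(\fin,\sqcap)$-acceptance table into a $(\fin,=)$-acceptance table, while preserving both non-determinism and determinism. The key observation is that under $(\fin,\sqcap)$ a path $p$ is accepting when there exists $F\in\calF$ with $\fin_{\aut}(p)\cap F\neq\emptyset$, that is, when some state of some $F$ is visited finitely often but at least once. Under $(\fin,=)$, by contrast, we must match $\fin_{\aut}(p)$ exactly to a member of the table. To bridge this gap I would take the \emph{same} underlying automaton $\aut=(\Sigma,Q,T,q_0,\calF)$ and simply replace $\calF$ by the table $\calF''$ consisting of all subsets $S\subseteq Q$ that meet some $F\in\calF$, \ie $\calF''=\set{S\in\powerset{Q} : \exists F\in\calF,\ S\cap F\neq\emptyset}$.

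The correctness argument then runs as follows. First I would fix an arbitrary initial path $p$ in $\aut$ and note that the set $\fin_{\aut}(p)$ is a well-defined subset of $Q$ determined by $p$ alone; the underlying transition structure is unchanged, so the collection of initial paths of $\aut$ and their labels is identical under both tables. Next I would observe the equivalence at the level of a single path: $p$ is accepting under $(\fin,\sqcap)$ with table $\calF$ exactly when $\fin_{\aut}(p)\cap F\neq\emptyset$ for some $F\in\calF$, which by the definition of $\calF''$ is exactly the condition that $\fin_{\aut}(p)\in\calF''$, \ie that $p$ is accepting under $(\fin,=)$ with table $\calF''$. Since the labels of accepting paths coincide path-by-path, the accepted languages coincide: $\Lcond{(\fin,\sqcap)}{\aut}=\Lcond{(\fin,=)}{\aut''}$ where $\aut''=(\Sigma,Q,T,q_0,\calF'')$.

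Finally I would verify that the structural properties are preserved, which is immediate because only the acceptance table was altered: $T$, $Q$, and $q_0$ are untouched, so $\aut''$ is deterministic whenever $\aut$ is, giving $\FA(\fin,\sqcap)\subseteq\FA(\fin,=)$ and $\DFA(\fin,\sqcap)\subseteq\DFA(\fin,=)$ simultaneously. I expect the only subtle point — and the step to state carefully — to be the path-by-path equivalence: one must confirm that $\fin_{\aut}(p)\in\calF''$ is genuinely equivalent to ``$\fin_{\aut}(p)$ meets some $F$'' and not merely implied by it, but this is immediate from reading off the definition of $\calF''$. There is no serious obstacle here; the content of the proposition is really that the existential ``$\sqcap$'' condition can be absorbed into an enlarged acceptance table interpreted under the exact ``$=$'' condition, a move that works precisely because the acceptance table is allowed to be an arbitrary family of subsets of $Q$.
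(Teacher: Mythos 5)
Your proof is correct and is essentially identical to the paper's: the authors also keep the automaton unchanged and replace the table by $\calF' = \set{F \in \powerset{Q} : \exists X \in \calF,\ X \cap F \neq \emptyset}$, which is exactly your $\calF''$. Your path-by-path equivalence and the observation that determinism is preserved because only the acceptance table changes match the paper's (more terse) argument step for step.
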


\begin{proof}
For any $\FA$ $\A = (\Sigma,Q,T,q_0,\calF)$, let $\A' = (\Sigma,Q,T,q_0,\calF')$ where $\calF' = \{F \in \powerset{Q} : \exists X \in \calF, X \cap F \neq \emptyset\}$. Then, $\Lcond{(\fin, \sqcap)}{\A} = \Lcond{(\fin, =)}{\A'}$.
It is clear that $\A'$ is a $\DFA$ if $\A$ is a $\DFA$, and this concludes the proof.\qed
\end{proof}

\begin{lemma}
\label{lem_rat_fin_eq}
$\RAT \subseteq \FA(\fin, =)$.
\end{lemma}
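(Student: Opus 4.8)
The plan is to show that every \Nrational language can be recognized by some \FA under the acceptance condition $(\fin,=)$. Recall that a language $\cL \in \RAT$ can be written as $\cL = \bigcup_{i=0}^{n} \cL'_i \cL_i^{\omega}$ for rational languages $\cL'_i, \cL_i$. Since $(\fin,=)$ is designed to select paths whose \emph{finitely-occurring} set of states equals some chosen target set, the natural strategy is first to realize a single building block $\cL' \cL^{\omega}$ and then to combine these blocks by a disjoint union of automata, exploiting the fact that the acceptance table $\calF$ may contain several sets.

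First I would handle the basic block $\cL'\cL^{\omega}$. The idea is to build an automaton in two stages: a \emph{prefix part}, a DFA-like gadget reading a word of $\cL'$, feeds into a \emph{loop part} that repeatedly reads words of $\cL$. The crux is to engineer the state structure so that the acceptance condition $(\fin,=)$ captures exactly the words of the form $u_0 u_1 u_2 \cdots$ with $u_0 \in \cL'$ and $u_i \in \cL \smallsetminus \set{\epsilon}$. The key design principle is that the states belonging to the prefix-reading part are visited only finitely often (indeed only during the initial segment consuming the $\cL'$-factor), while the states sustaining the infinite iteration of $\cL$ are visited infinitely often. Thus, if I let $F$ be precisely the set of "prefix" states that any successful run must pass through, then $\fin(p) = F$ exactly when the run correctly reads an $\cL'$-prefix and then loops forever through $\cL$-words. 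I would set up the transitions so that a run is forced to leave the prefix gadget (entering a distinguished marker state that thereafter is never revisited) and that the loop gadget, by contrast, has no state in $F$ appearing only finitely often on an accepting run.

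Second, for the disjoint union: given automata $\aut_0, \dots, \aut_n$ recognizing $\cL'_i\cL_i^{\omega}$ under $(\fin,=)$ with respective target sets $F_i$, I would take their disjoint union with a common fresh initial state $q_0$, wiring $q_0$'s outgoing transitions to mimic each $\aut_i$'s first move. The acceptance table of the combined automaton is then $\set{F_0, \dots, F_n}$ (augmented, if necessary, with the single transient visit to $q_0$). Because an accepting run lives entirely inside one component $\aut_i$ after the first step, its $\fin$-set equals $F_i$ for exactly one $i$, so the combined automaton accepts $\bigcup_i \cL'_i \cL_i^{\omega}$. This exploits that $(\fin,=)$ allows several admissible "finite-visit signatures" simultaneously.

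The main obstacle I anticipate is the delicate bookkeeping in the prefix gadget: I must ensure that the set $\fin(p)$ equals $F$ \emph{on the nose} rather than merely containing or being contained in $F$, since the relation is $=$ and not $\sqcap$ or $\subseteq$. In particular I must guarantee (a) that no loop-part state accidentally lands in $\fin(p)$ — this requires that on an accepting run every state used in the infinite iteration genuinely recurs infinitely often, which in turn demands care when $\cL$ contains words that might share letters with dead-end branches, and (b) that every prefix state targeted by $F$ is actually visited, and visited only finitely often. Handling empty factors ($\epsilon \in \cL'$) and the degenerate cases where some $\cL_i$ is empty will require separate, easy attention. The nondeterminism of a plain \FA is essential here and should make point (a) manageable, since I am free to "guess" the correct decomposition and need only exhibit one successful run per accepted word; the reverse inclusion (soundness) is where the structural invariants on $\fin(p)$ must be verified carefully.
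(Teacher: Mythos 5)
Your overall plan (building blocks $\cL'\cL^{\omega}$ combined by disjoint union, using several sets in the acceptance table) is reasonable, but the construction you sketch for a single block does not work, and the obstacle you yourself flag as point (a) is not ``delicate bookkeeping'': it is the heart of the problem, and your proposed fix via nondeterministic guessing cannot resolve it. With a single target set $F$ equal to ``the prefix states,'' the equality $\fin(p)=F$ fails in both directions. First, different words $u\in\cL'$ trace different paths through the prefix gadget, so no fixed $F$ is traversed by all successful runs; you would already need $\calF$ to contain many sets. Second, and more seriously, when the loop part is a nondeterministic automaton for $\cL$, an accepting run reading $u_1u_2u_3\cdots$ may use a given branch of that automaton for only finitely many of the $u_i$ (take $\cL=\set{a,b}$ with separate branches for $a$ and $b$, and factors $a,b,b,b,\dots$): the states of that branch then lie in $\fin(p)\smallsetminus F$, and no amount of guessing removes them, because on a fixed input every run of your automaton has some forced finite-visit set. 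If you repair this by throwing all sets of the form (prefix subset) $\cup$ (loop subset) into $\calF$, you lose soundness: a run may enter the copy of the automaton for $\cL$ and wander forever without completing an iteration (never returning to the loop entry), reading a word outside $\cL'\cL^{\omega}$, yet its $\fin$-set would now belong to $\calF$.

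What is missing is a mechanism that certifies, within a $(\fin,=)$-acceptance, that some distinguished state is visited \emph{infinitely} often. The paper's proof supplies exactly this and thereby bypasses the $\bigcup_i\cL'_i\cL_i^{\omega}$ decomposition altogether: it shows $\FA(\inf,\sqcap)\subseteq\FA(\fin,=)$ and invokes the known equality $\RAT=\FA(\inf,\sqcap)$. Given a B\"uchi automaton with state set $Q$, it adds tagged copies: each transition $(p,a,q)$ gets a variant $(p,a,(q,p))$, and the run nondeterministically tags exactly one visit leaving an accepting state $p_2$; the new table is $\calF'=\set{F \smallsetminus \set{p_2} \cup \set{(p_1,p_2)} : p_1 \in Q,\ F \in \powerset{Q},\ \exists X \in \calF,\ p_2 \in X}$. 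The arbitrary $F\in\powerset{Q}$ absorbs whatever states happen to be visited finitely often (curing your problem (a)), the tag $(p_1,p_2)$ is visited exactly once, and since the only possible predecessor of $(p_1,p_2)$ is $p_2$ itself, having the tag in $\fin(p)$ while $p_2\notin\fin(p)$ forces $p_2\in\run(p)\smallsetminus\fin(p)=\inf(p)$ --- precisely the certification your sketch lacks. One could graft this tag trick onto your block-by-block construction to certify infinitely many returns to the loop entry, but as written your proposal stops exactly where the real difficulty begins.
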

\begin{proof}
We are going to show that $\FA(\inf,\sqcap)\subseteq \FA(\fin, =)$, \ie, for any $\FA$ $\A = (\Sigma,Q,T,q_0,\calF)$ there exists a $\FA$ $\aut'$ such that $\Lcond{(\inf,\sqcap)}{\A}=\Lcond{(\fin,=)}{\A'}$. The known fact that $\RAT=\FA(\inf,\sqcap)$ concludes the proof.

Let $\A' = (\Sigma,Q \cup Q \times Q,T',q_0,\calF')$ where
$$T' = T \cup \set{(p,a,(q,p)) : (p,a,q) \in T} \cup \set{((p_1, p_2), a, q) : (p_1,a,q) \in T, p_2 \in Q}$$
and
$$\calF' = \set{F \smallsetminus \set{p_2} \cup \set{(p_1,p_2)} : p_1 \in Q, F \in \powerset{Q}, \exists X \in \calF, p_2 \in X}\enspace.$$

We prove that $\Lcond{(\inf,\sqcap)}{\A}\subseteq\Lcond{(\fin,=)}{\A'}$. Let $x \in \Lcond{(\inf,\sqcap)}{\A}$.  There exists a path $p = (p_i, x_i, p_{i+1})_{i \in \N}$ in $\aut$, a state $q \in Q$ and a set $F \in \calF$ such that $q \in F$ and $q = p_i$ for infinitely many $i \in \N$. Let $n > 0$ be such that $p_{n} = q$ and let $p' = (p'_i, x_i, p'_{i+1})_{i \in \N}$ be the initial path in $\A'$ defined by $\forall i \neq n + 1, p'_i = p_i$ and $p'_{n+1} = (p_{n+1},q)$. As $q \not\in \fin(p')$, $\fin(p') = (\fin(p') \cap Q) \smallsetminus \set{q} \cup \set{(p_{n+1}, q)} \in \calF'$. Hence, $x \in \Lcond{(\fin,=)}{\A'}$.

We now show that $\Lcond{(\fin,=)}{\A'}\subseteq\Lcond{(\inf,\sqcap)}{\A}$. Let $x \in \Lcond{(\fin,=)}{\A'}$. There exists a path $p = (p_i, x_i, p_{i+1})_{i \in \N}$ in $\A'$, two states $q_1, q_2\in Q$ and a set $F \in \powerset{Q}$ such that $\exists X \in \calF$ with $q_2 \in X$ and $\fin(p) = F \smallsetminus \set{q_2} \cup \set{(q_1,q_2)}$. Let $p' = (p'_i, x_i, p'_{i+1})_{i \in \N}$ be the initial path in \A defined by $\forall i \in \N, p'_i = p_i$ if $p_i \in Q$,  $p'_i = a_i$ with $p_i = (a_i, b_i) \in Q \times Q$, otherwise. As $(q_1,q_2) \in \fin(p)$, $q_2 \in \run(p)$ (because $q_2$ is the only possible predecessor of $(q_1,q_2)$) but $q_2 \not\in \fin(p)$, then $q_2 \in \inf(p) \subseteq \inf(p')$. Hence, $x \in \Lcond{(\inf,\sqcap)}{\A}$.\qed
\end{proof}

\begin{theorem}\label{th:(fin,=)=RAT}
$\FA(\fin,=) = \RAT$.
\end{theorem}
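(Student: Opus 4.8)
The goal is to establish $\FA(\fin,=) = \RAT$. The plan is to prove this by a double inclusion, where one direction is already essentially in hand and the other requires the Projection Lemma together with a determinization-style argument.

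For the inclusion $\RAT \subseteq \FA(\fin,=)$, I can invoke Lemma~\ref{lem_rat_fin_eq} directly, since it already establishes exactly this containment via the simulation of $(\inf,\sqcap)$ by $(\fin,=)$ and the known fact that $\RAT = \FA(\inf,\sqcap)$. So the only remaining work is the reverse inclusion $\FA(\fin,=) \subseteq \RAT$.

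For $\FA(\fin,=) \subseteq \RAT$, the cleanest route is through Proposition~\ref{prop:all_rat}: that proposition shows that for \emph{any} \FA $\aut$ and any acceptance condition derived from a pair $(c,\textbf{R})$, the language $\Lcond{\aut}{cond}$ is \Nrational. Since $(\fin,=)$ is precisely such a pair, every language in $\FA(\fin,=)$ is \Nrational by construction, giving $\FA(\fin,=) \subseteq \RAT$ immediately. First I would state that Proposition~\ref{prop:all_rat} applies to the pair $(\fin,=)$, exhibiting the MSO formula $\mathrm{COND}$ built from the $C(X_q)$ clauses for $c = \fin$ combined under the $=$ template, and conclude rationality through Theorem~\ref{th:buchi}.

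Combining the two inclusions yields the equality. The main obstacle is conceptual rather than technical: the substance of the theorem lives entirely in Lemma~\ref{lem_rat_fin_eq}, whose nontrivial construction embeds each recurring state $q$ into a marked copy $(q_1,q_2)$ so that a state visited infinitely often under $(\inf,\sqcap)$ is converted into a state visited only finitely often under $(\fin,=)$, with the acceptance table $\calF'$ engineered so that the marker state records exactly one accepting occurrence. Once that lemma and Proposition~\ref{prop:all_rat} are accepted, the theorem itself reduces to assembling $\RAT \subseteq \FA(\fin,=) \subseteq \RAT$, and the proof is essentially immediate.
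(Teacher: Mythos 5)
Your proof is correct and follows exactly the paper's argument: the paper's own proof of Theorem~\ref{th:(fin,=)=RAT} is precisely the combination of Lemma~\ref{lem_rat_fin_eq} (giving $\RAT \subseteq \FA(\fin,=)$) with Proposition~\ref{prop:all_rat} (giving $\FA(\fin,=) \subseteq \RAT$ via MSO-definability and Theorem~\ref{th:buchi}). Your opening mention of a Projection Lemma and a determinization-style argument is a stray remark that your actual proof never uses and should be deleted, but this does not affect correctness.
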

\begin{proof}
Combine Lemma~\ref{lem_rat_fin_eq} and Proposition~\ref{prop:all_rat}.\qed
\end{proof}

The Proposition~\ref{prop:partial-char-CDFA(fin,=)} shows that in the deterministic case,  either $(fin,=)$ induces $\RAT$ or it defines a new class outside the Borel hierarchy.

\begin{proposition}\label{prop:partial-char-CDFA(fin,=)}
$a(a^*b)^{\omega} + b(a+b)^*a^{\omega} \in \CDFA(\fin,=) \smallsetminus (\FsR \cup \GdR)$.
\end{proposition}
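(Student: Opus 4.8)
Write $L$ for the language $a(a^{*}b)^{\omega}+b(a+b)^{*}a^{\omega}$. The first summand is exactly the set of words beginning with $a$ and containing infinitely many $b$'s, while the second is the set of words beginning with $b$ and containing only finitely many $b$'s (equivalently, ending in $a^{\omega}$). The statement therefore has two independent halves: (i) $L\in\CDFA(\fin,=)$, for which I would exhibit an explicit complete deterministic automaton; and (ii) $L\notin\FsR\cup\GdR$, a purely topological lower bound. Since $L$ is rational (it is given by a rational $\omega$-expression, and in any case it is recognised by the automaton of part~(i), so Proposition~\ref{prop:all_rat} applies) and since $\FsR\subseteq\Fs$ and $\GdR\subseteq\Gd$, for~(ii) it suffices to prove $L\notin\Fs$ and $L\notin\Gd$.

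For~(i) the plan is to split the computation on the first letter into two components sharing only the initial state $q_0$, which, having no incoming transition, never lies in $\run$. The $b$-component uses two states $r_a,r_b$ recording the last letter read ($q_0\xrightarrow{b}r_b$); a short case analysis shows that there a path has $\fin=\set{r_b}$ precisely when the word has finitely many $b$'s, and $\fin\in\set{\emptyset,\set{r_a}}$ otherwise, so declaring $\set{r_b}$ accepting captures the $b$-branch of $L$. The $a$-component is the delicate point: with only two states $p_a,p_b$, a word with infinitely many $a$'s and $b$'s and the word $a^{\omega}$ both give $\fin=\emptyset$, although the former must be accepted and the latter rejected. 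I resolve this with a third state $p_0$ ($q_0\xrightarrow{a}p_0$) kept during the initial block of $a$'s before the first $b$; as soon as a $b$ is read the component moves irreversibly into $\set{p_a,p_b}$. Then $p_0\in\fin$ for every word containing at least one $b$, whereas $\fin=\emptyset$ exactly for $a^{\omega}$, and a case analysis gives, inside the $a$-component, $\fin\in\set{\set{p_0},\set{p_0,p_a}}$ iff the word has infinitely many $b$'s and $\fin\in\set{\emptyset,\set{p_0,p_b}}$ otherwise. Taking $\calF=\set{\set{p_0},\set{p_0,p_a},\set{r_b}}$ and observing that the two components use disjoint state sets (so no accepting $\fin$-value of one can ever occur in the other) yields $\Lcond{(\fin,=)}{\aut}=L$ with $\aut$ complete and deterministic.

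For~(ii) I would exploit that $C_a=\set{w:w_0=a}$ and $C_b=\set{w:w_0=b}$ are clopen and that prepending a fixed letter is a homeomorphism of $\aN$ onto the corresponding clopen set. If $L\in\Fs$ then $L\cap C_a\in\Fs$ (intersection with a closed set), and transporting through $u\mapsto au$ shows $\set{u:u\text{ has infinitely many }b\text{'s}}\in\Fs$; dually, if $L\in\Gd$ then $\set{u:u\text{ has finitely many }b\text{'s}}\in\Gd$. Both conclusions are false by the Baire category theorem for the compact space $\aN$: the set of words with finitely many $b$'s is the countable union of the nowhere-dense closed sets $\set{w:\forall m\geq n,\ w_m=a}$, hence meager, while its complement (infinitely many $b$'s) is a dense $\Gd$, hence comeager; were either also of the complementary class it would be simultaneously meager and comeager. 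This contradicts both assumptions, so $L\notin\Fs$ and $L\notin\Gd$, and therefore $L\notin\FsR\cup\GdR$.

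The only genuinely subtle step is the design of the $a$-component. Under $(\fin,=)$ the accepting runs for ``infinitely many $b$'s'' carry different $\fin$-sets according to whether $a$'s recur finitely or infinitely often, and the naive two-state gadget cannot separate the accepting run on which both letters recur from the rejecting run $a^{\omega}$; the auxiliary pre-$b$ state $p_0$ is exactly what breaks this degeneracy while keeping the two components' accepting $\fin$-values disjoint. The topological half is then routine, resting only on the classical Borel classification of the Büchi and co-Büchi conditions.
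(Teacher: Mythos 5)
Your proof is correct, and it tracks the paper for one half of the statement while genuinely diverging on the other. For membership in $\CDFA(\fin,=)$ you and the paper do essentially the same thing: a six-state \CDFA branching on the first letter into two disjoint last-letter-tracking components, with one auxiliary transient state whose job is to keep the two branches' $\fin$-values apart. The only difference is where that state sits. You put it in the $a$-component ($p_0$, left at the first $b$), so every accepting $\fin$-value of the $a$-branch contains $p_0$ and you can keep $\emptyset \notin \calF$. The paper instead uses a \emph{two-state} $a$-component entered at the ``$b$-recording'' state $q_1$, so that $a^\omega$ yields $\fin = \set{q_1}$ rather than $\emptyset$ --- note that your claim that the naive two-state gadget cannot separate $a^\omega$ from the doubly-recurrent words holds only if the entry point is the $a$-recording state --- and then puts $\emptyset$ into the acceptance table, shielding it with a transient state $q_3$ in the $b$-component so that $\fin \neq \emptyset$ on that side. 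Same state count, mirror-image bookkeeping; your case analysis checks out. For the lower bound you truly part ways: the paper disposes of $a(a^*b)^{\omega}+b(a+b)^*a^{\omega} \notin \FsR \cup \GdR$ in one line by citing Landweber, whereas you reprove it from scratch via the clopen split on the first letter, the homeomorphisms $u \mapsto au$ and $u \mapsto bu$, and Baire category. This buys self-containedness (and makes visible exactly which topological fact is being used) at the cost of redoing a classical argument the paper outsources.

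One compression in your Baire argument should be smoothed out. From ``$I=\set{u : u \text{ has infinitely many } b\text{'s}}$ is comeager'' and the hypothesis $I \in \Fs$ you cannot conclude that $I$ is meager directly: you need that every closed set contained in $I$ is nowhere dense, which holds because the complement $F$ (finitely many $b$'s) is dense --- every cylinder $[u]$ contains $ua^\omega \in F$, so no closed subset of $I$ contains a cylinder. The same density of $F$ is what upgrades the hypothesis $F \in \Gd$ to ``$F$ is comeager'' in the dual case. Both observations are immediate, but as written your ``simultaneously meager and comeager'' step silently uses them, so state them.
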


\begin{proof}
In \cite{landweber1969}, it is proved that $\L=a(a^*b)^{\omega} + b(a+b)^*a^{\omega}  \not\in \FsR \cup \GdR$. To conclude,  it is enough to remark that
$\L = \Lcond{(\fin,=)}{\A}$ for the $\CDFA$
$$\A = (\set{a,b}, \set{q_0, q_1,q_2,q_3,q_4,q_5}, T, q_0, \set{\emptyset, \set{q_2}, \set{q_3,q_4}})\enspace,$$
where the set of transitions is given in Figure~\ref{fig_CDFA_fin_eq}.

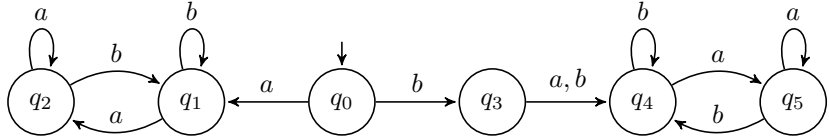
\begin{figure}[htb]
\begin{center}
\scalebox{1.0}{
\begin{tikzpicture}[semithick, shorten >=1pt, node distance=2cm, >=stealth',initial text=]

\node[state] (A) {$q_0$};
\node[state] (B) [left of=A]{$q_1$};
\node[state] (C) [left of=B]{$q_2$};
\node[state] (D) [right of=A]{$q_3$};
\node[state] (E) [right of=D]{$q_4$};
\node[state] (F) [right of=E]{$q_5$};

\draw[->]   (0,0.8)-- (A);
\path
(A) edge[->] node[above] {$a$} (B)
    edge[->] node[above] {$b$} (D)
(B) edge[->, bend left] node[above] {$a$} (C)
    edge[->, loop above] node[above] {$b$} (B)
(C) edge[->, bend left] node[above] {$b$} (B)
    edge[->, loop above] node[above] {$a$} (C)
(D) edge[->] node[above] {$a,b$} (E)
(E) edge[->, bend left] node[above] {$a$} (F)
    edge[->, loop above] node[above] {$b$} (E)
(F) edge[->, bend left] node[above] {$b$} (E)
    edge[->, loop above] node[above] {$a$} (F);

\end{tikzpicture}
}
\end{center}
\caption{A \CDFA recognizing $a(a^*b)^{\omega} + b(a+b)^*a^{\omega}$ under $(\fin,=)$.}
\label{fig_CDFA_fin_eq}
\end{figure}

Let $p = (p_i,a_i,p_{i+1})_{i \in \N}$ be an accepting path in \A. If $a_0 = b$, then $p_1 = q_3$ and $p_2 = q_4$. As $q_3$ is not reachable from $q_4$ and $p$ is accepting, $\fin_\A(p) = \set{q_3,q_4}$ and $q_4$ is visited finitely often, then the label of $p$ contains only finitely many $b$'s.

If $a_0 = a$, then $p_1 = q_1$. As $q_3$ is not reachable from $q_1$ and $p$ is accepting, $\fin_\A(p) = \emptyset$ or $\fin_\A(p) = \set{q_2}$. In both cases, $q_1$ is not visited finitely many times and as it is visited once, it is visited infinitely often. Then the label of $p$ contains infinitely many $b$'s.

Conversely, it is easy to see that a path $p$ is accepting when
\begin{itemize}
\item
its label starts by a $b$ and contains finitely many $b$'s ($\fin_\A(p) = \set{q_3,q_4}$)\enspace,
\item
its label is $ab^{\omega}$ or it starts by a $a$ and contains infinitely many $a$'s and $b$'s ($\fin_\A(p) = \emptyset$)\enspace,
\item
its label starts by a $a$ and contains infinitely many $b$'s but only finitely many $a$'s ($\fin_\A(p) = \set{q_2}$)\enspace.
\end{itemize} \qed
\end{proof}

\begin{proposition}
\label{lem_fin_subset_det_Fsigma}
$\DFA(\fin, \subseteq) \subseteq \GdR$.
\end{proposition}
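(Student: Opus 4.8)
The plan is to establish the two ingredients of $\GdR = \Gd \cap \RAT$ separately. Membership in $\RAT$ is immediate from Proposition~\ref{prop:all_rat}, so the whole work consists in showing that $\Lcond{(\fin,\subseteq)}{\aut}$ is a $\Gd$ set. First I would pass to the complete case: by the equality $\DFA(\fin,\subseteq) = \CDFA(\fin,\subseteq)$ obtained above, it is enough to treat a $\CDFA$ $\aut=(\Sigma,Q,T,q_0,\calF)$. Completeness and determinism guarantee that every $x \in \aN$ labels a unique infinite initial path $p(x) = (p_i(x))_{i \in \N}$, and for each state $q$ and index $i$ the set $\set{x : p_i(x) = q}$ depends only on $x_0 \cdots x_{i-1}$, hence is clopen in $\aN$.

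I would then decompose the language along the acceptance table. Since $\fin_\aut(p) \subseteq F$ holds exactly when no state outside $F$ lies in $\fin_\aut(p)$,
\[
\Lcond{(\fin,\subseteq)}{\aut} \;=\; \bigcup_{F \in \calF}\; \bigcap_{q \in Q \smallsetminus F} B_q\,, \qquad B_q := \set{x : q \notin \fin_\aut(p(x))}\,.
\]
Because $\Gd$ is closed under finite unions and under (countable, hence finite) intersections, it suffices to prove that every elementary set $B_q$ is $\Gd$.

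For this I would exploit $\fin_\aut(p) = \run_\aut(p) \smallsetminus \inf_\aut(p)$, which turns the condition defining $B_q$ into a disjunction:
\[
B_q \;=\; \set{x : q \notin \run_\aut(p(x))} \;\cup\; \set{x : q \in \inf_\aut(p(x))}\,.
\]
The left-hand set, ``$q$ never occurs at a position $i>0$'', equals $\bigcap_{i>0} \set{x : p_i(x) \neq q}$, an intersection of clopen sets and thus closed; as $\aN$ is metrizable, it is therefore $\Gd$. The right-hand set, ``$q$ occurs infinitely often'', equals $\bigcap_{n} \bigcup_{i \geq n} \set{x : p_i(x) = q}$, a countable intersection of open sets, hence $\Gd$ by definition. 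A finite union of $\Gd$ sets being $\Gd$, we conclude that $B_q$, and therefore the whole language, is $\Gd$.

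The calculations are routine; the genuinely delicate point is the disjunctive character of the membership condition. For a B\"uchi-style requirement (``some accepting state occurs infinitely often'') the language is patently $\Gd$, but here $\fin_\aut(p)\subseteq F$ forces each state outside $F$ to occur \emph{infinitely often or never}. Recognizing that this alternative splits into one $\Gd$ piece and one closed piece, and invoking closure of $\Gd$ under finite unions, is the heart of the matter; the reduction to $\CDFA$ is what makes $p(x)$ total and keeps the sets $B_q$ clean, sparing us an extra intersection with the closed set of words whose path never dies.
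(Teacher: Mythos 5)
Your proof is correct, but it takes a genuinely different route from the paper's. The paper does not complete the automaton; instead it decomposes the language automata-theoretically, writing $\Lcond{(\fin,\subseteq)}{\A}$ as a finite union, over pairs $S,S' \subseteq Q$ with $S \smallsetminus S' \subseteq F$ for some $F \in \calF$, of sets $\Lcond{(\run,\subseteq)}{\A_S} \cap \bigcap_{q \in S'} \Lcond{(\inf,\sqcap)}{\A_{\set{q}}}$, where $\A_S$ is $\A$ with acceptance table $\set{S}$; it then invokes the known class results $\DFA(\run,\subseteq) \subseteq \FR \subseteq \GdR$ and $\DFA(\inf,\sqcap) \subseteq \GdR$ (Landweber) together with closure of $\GdR$ under finite unions and intersections, so that rationality of each piece comes for free. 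You instead first reduce to the complete case via $\DFA(\fin,\subseteq)=\CDFA(\fin,\subseteq)$ (legitimately, as that proposition precedes this one in the paper), quantify over states rather than over pairs of subsets of $Q$ --- expressing acceptance as $\bigcup_{F}\bigcap_{q \notin F} B_q$ with $B_q$ the union of the closed set ``$q$ never occurs'' and the $\Gd$ set ``$q$ occurs infinitely often'' --- and compute the Borel class by hand, handling $\RAT$ separately through Proposition~\ref{prop:all_rat}. Both arguments hinge on determinism to make the union-intersection exchange valid (a nondeterministic automaton would let different runs witness different conjuncts). Your version is more self-contained topologically, needing only that closed sets are $\Gd$ in a metrizable space and that $\Gd$ is closed under finite unions, while the paper's version stays entirely inside the established hierarchy of recognized classes and avoids both the completion step and the separate appeal to rationality; your per-state decomposition is also exponentially more economical than the paper's union over pairs of subsets, though that is immaterial here.
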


\begin{proof}
Let $\A = (\Sigma,Q,T,q_0,\calF)$ be a \DFA. For any $S \subseteq Q$, let $\A_S$ be the \DFA $(\Sigma,Q,T,q_0, \set{S})$. Let $\calL$ denote the language
\[
\bigcup_{S, S \subseteq Q, \exists F \in \calF, S \smallsetminus S' \subseteq F}
\left(\Lcond{(\run,\subseteq)}{\A_S} \cap \bigcap_{q \in S'} \Lcond{(\inf,\sqcap)}{\A_{\set{q}}}\right) \enspace,
\]
then $\Lcond{(\fin,\subseteq)}{\A}  = \calL$.

First, we prove that $\Lcond{(\fin,\subseteq)}{\A}  \subseteq \calL$. Let $x \in \Lcond{(\fin,\subseteq)}{\A}$, there exists an accepting path in $\A$ under $(\fin,\subseteq)$ labeled by $x$, \ie, there exists $F \in \calF$ such that $\fin_{\A}(p) = \run_{\A}(p) \smallsetminus \inf_{\A}(p) \subseteq F$. For this path, take $S = \run_{\A}(p)$ and $S' = \inf_{\A}(p)$, we obtain
\[
x \in \Lcond{(\run,\subseteq)}{\A_S} \cap \bigcap_{q \in S'} \Lcond{(\inf,\sqcap)}{\A_{\set{q}}} \subseteq \calL \enspace.
\]
Conversely, we prove that $\calL \subseteq \Lcond{(\fin,\subseteq)}{\A}$. Let $x \in \calL$, by determinism, there exists a path $p$ in $\A$ labeled by $x$ such that there exist $S, S' \subseteq Q$, $F \in \calF$ with $S \smallsetminus S' \subseteq F$ such that $p$ is accepting for $\A_S$ under $(\run,\subseteq)$ and for $A_{\set{q}}$ under $(\inf, \sqcap)$ for all $q \in S'$. The path $p$ verifies $\run_{\A}(p) \subseteq S$, $S' \subseteq \inf_{\A}(p)$ and then $\fin_{\A}(p) \subseteq S \smallsetminus S' \subseteq F$. Finally, $p$ is accepting for $\A$ under $(\fin,\subseteq)$ and $x \in \Lcond{(\fin,\subseteq)}{\A}$.

 For all $S \subseteq Q$, $\Lcond{(\run,\subseteq)}{\A_S} \in \FR \subseteq \GdR$ and $\Lcond{(\inf,\sqcap)}{\A_S} \in \GdR$. As $\GdR$ is stable by finite intersection and union, $\Lcond{(\fin,\subseteq)}{\A} \in \GdR$.

\qed
\end{proof}
\begin{figure}[htb]
\begin{center}
\scalebox{0.9}{
\begin{tikzpicture}[semithick, shorten >=1pt, >=stealth']
\newcommand{\esph}{4.5}
\newcommand{\espv}{1.5}
\newcommand{\myxshift}{-20}

\tikzstyle{vertex}=[draw, shape=rectangle, font=\tiny]

\node[vertex,xshift=\myxshift] (A) at (0*\esph,0.5*\espv)
{
    $\begin{array}{c}
    \boldsymbol{\RAT} \\
    \FA(\inf,\sqcap)~\CFA(\inf,\sqcap)\\
    \FA(\inf,=)~\DFA(\inf,=)~\CFA(\inf,=)~\CDFA(\inf,=)\\
    \FA(\ninf, \subseteq)~\CFA(\ninf,\subseteq) \\
    \FA(\ninf,=)~\DFA(\ninf,=)~\CFA(\ninf,=)~\CDFA(\ninf,=)\\
    \FA(\fin,=)~\CFA(\fin,=)
    \end{array}$
};
\node[vertex,xshift=\myxshift-28] (B) at (.7*\esph,-1*\espv)
{
    $\begin{array}{c}
    \boldsymbol{\FsR} \\
    \FA(\run,\sqcap)\\
    \FA(\run,=)~\CFA(\run,=)\\
    \FA(\inf,\subseteq)~\DFA(\inf,\subseteq)~\CFA(\inf,\subseteq)~\CDFA(\inf,\subseteq) \\
    \FA(\fin,\sqcap)\\
    \FA(\mathbb{A})
    \end{array}$
};
\node[vertex] (C) at (-1*\esph,-\espv) 
{
    $\begin{array}{c}
    \boldsymbol{\GdR} \\
    \DFA(\inf,\sqcap)~\CDFA(\inf,\sqcap) \\
    \DFA(\ninf, \subseteq)~\CDFA(\ninf,\subseteq)
    \end{array}$
};
\node[vertex,xshift=\myxshift] (D) at (0*\esph,-2*\espv)
{
    $\begin{array}{c}
    \boldsymbol{\FsR \cap \GdR} \\
    \DFA(\run,=)~\CDFA(\run,=)
    \end{array}$
};
\node[vertex] (E) at (-1*\esph,-5*\espv)
{
    $\begin{array}{c}
    \boldsymbol{\FR}\\
    \FA(\run,\subseteq)~\DFA(\run, \subseteq)~\CFA(\run,\subseteq)~\CDFA(\run, \subseteq)\\
    \FA(\mathbb{A}')~\DFA(\mathbb{A}')~\CFA(\mathbb{A}')~\CDFA(\mathbb{A}')
    \end{array}$
};
\node[vertex,xshift=\myxshift-36] (F) at (1*\esph,-5*\espv)
{
     $\begin{array}{c}
   \boldsymbol{\GR}\\
    \CFA(\run,\sqcap)~\CDFA(\run,\sqcap)\\
    \CFA(\mathbb{A})~\CDFA(\mathbb{A})
    \end{array}$
};
\node[vertex,xshift=\myxshift] (G) at (0*\esph,-6*\espv)
{
    $\boldsymbol{\FR \cap \GR}$
};
\node[vertex,xshift=\myxshift] (H) at (0*\esph,-4*\espv)
{
    $\begin{array}{c}
    \boldsymbol{\FsR~\Delta~\GdR} \\
    \DFA(\run,\sqcap) \\
    \DFA(\mathbb{A})
    \end{array}$
};
\node[vertex,xshift=\myxshift-36] (I) at (1*\esph,-4*\espv)
{
    $\CDFA(\fin,\sqcap)$
};
\node[vertex,xshift=\myxshift-8] (J) at (0.5*\esph,-3*\espv)
{
    $\DFA(\fin,\sqcap)$
};
\node[vertex,xshift=\myxshift-68] (K) at (1.5*\esph,-3*\espv)
{
    $\CFA(\fin,\sqcap)$
};
\node[vertex] (L) at (-1*\esph,-4*\espv)
{
    $\CDFA(\ninf,\sqcap)$
};
\node[vertex] (M) at (-0.5*\esph,-3*\espv)
{
    $\CFA(\ninf,\sqcap)$
};
\node[vertex] (N) at (-1.5*\esph,-3*\espv)
{
    $\DFA(\ninf,\sqcap)$
};
\node[vertex] (O) at (-1*\esph,-2*\espv)
{
    $\FA(\ninf,\sqcap)$
};

\draw[->] (B) -- (A);
\draw[->] (C) -- (A);
\draw[->] (D) -- (B);
\draw[->] (D) -- (C);
\draw[->] (E) -- (H);
\draw[->] (F) -- (H);
\draw[->] (H) -- (D);
\draw[->] (G) -- (E);
\draw[->] (G) -- (F);
\draw[->] (F) -- (I);
\draw[->] (H) -- (J);
\draw[->] (I) -- (J);
\draw[->] (I) -- (K);
\draw[->] (J) -- (B);
\draw[->] (K) -- (B);
\draw[->] (E) -- (L);
\draw[->] (L) -- (M);
\draw[->] (L) -- (N);
\draw[->] (M) -- (O);
\draw[->] (N) -- (O);
\draw[->] (O) -- (B);

\end{tikzpicture}
}
\end{center}
\caption{The completion of Figure~\ref{fig:hierarchy-before} with the results in the paper.
Classes of the Borel hierarchy are typeset in bold. Arrows mean strict inclusion. Classes in the same box coincide.}
\label{fig:hierarchy-after}
\end{figure}

\section{Conclusions}

In this paper we have studied the expressivity power of acceptance conditions for finite automata.
Three new classes have been fully characterized. For a fourth one, partial results are given.
In particular, $(\ninf,\sqcap)$ provides four distinct new classes of languages (see the diamond
in the left part of Figure \ref{fig:hierarchy-after}), all other acceptance conditions considered tend
to give (classes of) languages populating known classes.

In literature, other well-known acceptance conditions exists for example Rabin, Strett or Parity 
conditions. These last ones have not been taken into account in the present paper since it is known
that they are equivalent to Muller's condition. 

Several research directions should be further explored but at least two seems the more promising ones.
First, to complete the characterization of $(\fin,=)$. Moreover, the exact position of $(\fin,\subseteq)$ 
in the hierarchy given so far is still under investigation.

Second, to study the closure properties of the the new classes of languages introduced in the paper  and 
verify if they cram the known classes or if they add new elements to Figure \ref{fig:hierarchy-after}.

\bibliographystyle{plain}
\bibliography{references}

\begin{thebibliography}{10}

\bibitem{apt2011}
Krzysztof~R. Apt and Erich Gr{\"a}del, editors.
\newblock {\em Lectures in Game Theory for Computer Scientists}.
\newblock Cambridge University Press, first edition, 2011.

\bibitem{Buchi1960}
Julius~Richard B{\"u}chi.
\newblock Symposium on decision problems: On a decision method in restricted
  second order arithmetic.
\newblock In Patrick~Suppes Ernest~Nagel and Alfred Tarski, editors, {\em
  Logic, Methodology and Philosophy of Science Proceeding of the 1960
  International Congress}, volume~44 of {\em Studies in Logic and the
  Foundations of Mathematics}, pages 1 -- 11. Elsevier, 1960.

\bibitem{thomas2011}
Namit Chaturvedi, J{\"o}rg Olschewski, and Wolfgang Thomas.
\newblock Languages vs. $\omega$-languages in regular infinite games.
\newblock In {\em Developments in Language Theory}, pages 180--191, 2011.

\bibitem{dennunzio2012}
Alberto Dennunzio, Enrico Formenti, and Julien Provillard.
\newblock Acceptance conditions for {$\Omega$-}languages.
\newblock In Oscar~H. Ibarra and Hsu-Chun Yen, editors, {\em 16th International
  Conference on Developments in Language Theory (DLT 2012)}, volume~xx of {\em
  Lecture Notes in Computer Science}, page~xx. Springer-Verlag, 2012.

\bibitem{Hartmanis1967}
Juris {Hartmanis} and Richard~E. {Stearns}.
\newblock Sets of numbers defined by finite automata.
\newblock {\em American Mathematical Monthly}, 74:539--542, 1967.

\bibitem{kupferman2004}
Orna Kupferman and Moshe~Y. Vardi.
\newblock From complementation to certification.
\newblock In Kurt Jensen and Andreas Podelski, editors, {\em 10th TACAS},
  volume 2988 of {\em Lecture Notes in Computer Science}, pages 591--606.
  Springer, 2004.

\bibitem{kurshan1994}
Robert~P. Kurshan.
\newblock {\em Computer aided verification of coodinating process}.
\newblock Princeton Univ. Press, 1994.

\bibitem{landweber1969}
Laurence~H. {Landweber}.
\newblock Decision problems for omega-automata.
\newblock {\em Mathematical Systems Theory}, 3(4):376--384, 1969.

\bibitem{Litovsky1997}
Igor {Litovsky} and Ludwig {Staiger}.
\newblock Finite acceptance of infinite words.
\newblock {\em Theor. Comput. Sci.}, 174(1-2):1--21, 1997.

\bibitem{Moriya1988}
Tetsuo {Moriya} and Hideki {Yamasaki}.
\newblock Accepting conditions for automata on $\omega$-languages.
\newblock {\em Theor. Comput. Sci.}, 61:137--147, 1988.

\bibitem{Muller1963}
David~E. Muller.
\newblock Infinite sequences and finite machines.
\newblock In {\em Proceedings of the 1963 Proceedings of the Fourth Annual
  Symposium on Switching Circuit Theory and Logical Design}, SWCT '63, pages
  3--16, Washington, DC, USA, 1963. IEEE Computer Society.

\bibitem{perrin2004}
Dominique Perrin and Jean-Eric Pin.
\newblock {\em Infinite words, automata, semigroups, logic and games}, volume
  141 of {\em Pure and Applied Mathematics}.
\newblock Elsevier, 2004.

\bibitem{staiger1997}
Ludwig {Staiger}.
\newblock {$\omega$}-languages.
\newblock In {\em Handbook of formal languages}, volume~3, pages 339--387.
  1997.

\bibitem{Staiger1974}
Ludwig {Staiger} and Klaus~W. {Wagner}.
\newblock Automatentheoretische und automatenfreie charakterisierungen
  topologischer klassen regul{\"a}rer folgenmengen.
\newblock {\em Elektronische Informationsverarbeitung und Kybernetik},
  10(7):379--392, 1974.

\bibitem{thomas1990}
Wolfgang Thomas.
\newblock Automata on infinite objects.
\newblock In J.~van Leeuwen, editor, {\em Hand- book of Theoretical Computer
  Science}, volume B (Formal models and semantics), pages 135--191. Elsevier,
  1990.

\bibitem{vardi2007}
Moshe~Y. Vardi.
\newblock The {B\"u}chi complementation saga.
\newblock In Wolfgang Thomas and Pascal Weil, editors, {\em STACS 2007}, volume
  4393 of {\em Lecture Notes in Computer Science}, pages 12--22. Springer,
  2007.

\bibitem{Wagner1979}
Klaus~W. {Wagner}.
\newblock On $\omega$-regular sets.
\newblock {\em Information and Control}, 43(2):123 -- 177, 1979.

\end{thebibliography}

\end{document}